\definecolor{darkred}{rgb}{0.8,0.1,0.1}
\theoremstyle{plain}
\newtheorem{theo}{Theorem}[section]
\newtheorem{lem}[theo]{Lemma}
\newtheorem{propo}[theo]{Proposition}
\newtheorem{cor}[theo]{Corollary}
\theoremstyle{definition}
\newtheorem{defi}[theo]{Definition}
\newenvironment{ex}
  {\pushQED{\qed}\exx}
  {\popQED\endexx}
\newenvironment{rem}
  {\pushQED{\qed}\remm}
  {\popQED\endremm}
\numberwithin{equation}{section}
\def\nn{\nonumber}
\def\bbK{\mathbb{K}}
\def\bbR{\mathbb{R}}
\def\bbC{\mathbb{C}}
\def\bbZ{\mathbb{Z}}
\def\bbS{\mathbb{S}}
\def\Hom{\mathrm{Hom}}
\def\End{\mathrm{End}}
\def\Aut{\mathrm{Aut}}
\def\id{\mathrm{id}}
\def\Id{\mathrm{Id}}
\def\dd{\mathrm{d}}
\def\cc{\mathrm{c}}
\def\1{\mathbf{1}}
\def\oone{\mathds{1}}
\def\op{\mathrm{op}}
\def\Loc{\mathbf{Loc}}
\def\Locc{\mathbf{Loc}_{\text{\large $\diamond$}}^{}}
\def\Man{\mathbf{Man}}
\def\AQFT{\mathbf{AQFT}}
\def\2AQFT{\mathbf{2AQFT}}
\def\Fun{\mathbf{Fun}}
\def\Open{\mathbf{Open}}
\def\Disk{\mathbf{Disk}}
\def\Set{\mathbf{Set}}
\def\Alg{\mathbf{Alg}}
\def\Vec{\mathbf{Vec}}
\def\flip{\mathrm{flip}}
\def\CC{\mathbf{C}}
\def\DD{\mathbf{D}}
\def\EE{\mathbf{E}}
\def\TT{\mathbf{T}}
\def\Cat{\mathbf{Cat}}
\def\Pr{\mathbf{Pr}}
\def\QCoh{\mathbf{QCoh}}
\def\Rep{\mathbf{Rep}}
\def\Mod{\mathbf{Mod}}
\def\catEnd{\mathbf{B}\mathrm{End}}
\def\AAA{\mathfrak{A}}
\def\BBB{\mathfrak{B}}
\def\BBB{\mathfrak{B}}
\def\FFF{\mathfrak{F}}
\def\GGG{\mathfrak{G}}
\def\CCR{\mathfrak{CCR}}
\def\AAAA{\boldsymbol{\mathfrak{A}}}
\def\BBBB{\boldsymbol{\mathfrak{B}}}
\def\a{\mathbf{a}}
\def\b{\mathbf{b}}
\def\O{\mathcal{O}}
\def\P{\mathcal{P}}
\def\As{\mathsf{As}}
\def\colim{\mathrm{colim}}
\def\bicolim{\mathrm{bicolim}}
\newcommand\und[1]{\underline{#1}}
\newcommand\ovr[1]{\overline{#1}}
\DeclareMathOperator*{\Motimes}{\text{\raisebox{0.25ex}{\scalebox{0.8}{$\bigotimes$}}}}
\DeclareMathOperator*{\bigboxtimes}{\text{\raisebox{-0.5ex}{\scalebox{1.6}{$\boxtimes$}}}}
\def\sk{\vspace{1mm}}
\let\@fnsymbol\@alph
\title{%
Categorification of algebraic quantum field theories
}
\author{%
Marco Benini$^{1,2,a}$, 
Marco Perin$^{3,b}$,
Alexander Schenkel$^{3,c\,\ast}$\ and\ 
Lukas Woike$^{4,d}$\vspace{4mm}\\
{\small ${}^1$ Dipartimento di Matematica, Universit\`a di Genova,}\\
{\small Via Dodecaneso 35, 16146 Genova, Italy.}\vspace{2mm}\\
{\small ${}^2$ INFN, Sezione di Genova,}\\
{\small Via Dodecaneso 33, 16146 Genova, Italy.}\vspace{2mm}\\
{\small ${}^3$ School of Mathematical Sciences, University of Nottingham,}\\
{\small University Park, Nottingham NG7 2RD, United Kingdom.}\vspace{2mm}\\
{\small ${}^4$ Fachbereich Mathematik, Universit\"at Hamburg,}\\
{\small Bundesstr.~55, 20146 Hamburg, Germany.}\vspace{4mm}\\
{\small \begin{tabular}{ll}
Email: & ${}^a$~\texttt{benini@dima.unige.it}\\
& ${}^b$~\texttt{marco.perin@nottingham.ac.uk}\\
& ${}^c$~\texttt{alexander.schenkel@nottingham.ac.uk}\\
& ${}^d$~\texttt{lukas.jannik.woike@uni-hamburg.de}\vspace{2mm}
\end{tabular}
}
}
\date{February 2021}
\begin{document}

\maketitle

\vspace{-5mm}

\begin{abstract}
\noindent This paper develops a concept of $2$-categorical algebraic quantum field theories (2AQFTs) that assign locally presentable linear categories to spacetimes. It is proven that ordinary AQFTs embed as a coreflective full $2$-subcategory into the $2$-category of 2AQFTs. Examples of 2AQFTs that do not come from ordinary AQFTs via this embedding are constructed by a local gauging construction for finite groups, which admits a physical interpretation in terms of orbifold theories. A categorification of Fredenhagen's universal algebra is developed and also computed for simple examples of 2AQFTs.
\end{abstract}

\vspace{-1mm}

\paragraph*{Report no.:} ZMP-HH/20-8, Hamburger Beitr\"age zur Mathematik Nr.\ 830
\vspace{-2mm}

\paragraph*{Keywords:} algebraic quantum field theory, $2$-categories, $2$-operads, quasi-coherent sheaves, 
locally presentable linear categories, categorified orbifold theories, local-to-global extensions
\vspace{-2mm}

\paragraph*{MSC 2020:} 81Txx, 18M60, 18N10, 18N25
\vspace{-1mm}

\renewcommand{\baselinestretch}{0.8}\normalsize
\tableofcontents
\renewcommand{\baselinestretch}{1.0}\normalsize



\section{\label{sec:intro}Introduction and summary}
An algebraic quantum field  theory (AQFT) is a functor $\AAA : \CC\to\Alg_\bbK$
from a category of spacetimes $\CC$ to the category of associative and 
unital algebras over a field $\bbK$, which satisfies certain physically motivated axioms
such as Einstein causality and the time-slice axiom \cite{HK,BFV,FewsterVerch}.
The (non-commutative) algebra $\AAA(c)$ that is assigned to an object 
$c\in\CC$ is interpreted as the algebra of quantum observables of the theory on the spacetime $c$.
\sk

Describing quantum observables in terms of ordinary algebras in $\Alg_\bbK$
is however insufficient to capture the important, but rather subtle,
higher categorical structures that feature in gauge theories.
For instance, in the context of the BRST/BV formalism \cite{FredenhagenRejzner,FredenhagenRejzner2},
the quantum observables of a gauge theory are described by differential graded algebras (dg-algebras) 
and the latter contain in general more information than their $0$-th cohomology, which is the
ordinary algebra of gauge invariant quantum observables. 
An axiomatic framework for homotopy-coherent AQFTs with values in dg-algebras
was developed in \cite{BSWhomotopy}, see also \cite{BSreview} for a review and \cite{LinearYM} for concrete examples. 
In these works it was also shown that the higher structures encoded by 
dg-algebras are crucial for formalizing descent (i.e.\ local-to-global)
properties of quantum gauge theories.
\sk

The main aim of this paper is to develop another concept of higher categorical AQFTs
which describe quantum observables in terms of locally presentable
$\bbK$-linear categories. As we explain in detail below, 
such AQFTs are more sensitive to global aspects of quantum gauge theories
than the homotopy AQFTs based on dg-algebras studied previously in 
\cite{BSWhomotopy,FredenhagenRejzner,FredenhagenRejzner2}. For example, 
they properly capture finite gauge transformations in contrast to only
infinitesimal ones.
\sk

In order to motivate why it is reasonable to describe quantum observables of gauge theories 
by locally presentable $\bbK$-linear categories, let us first recall why ordinary AQFTs
assign associative and unital $\bbK$-algebras. From the point of view of quantum theory,
a (non-commutative) algebra $A\in \Alg_\bbK$ is interpreted as a
quantized function algebra on the phase space $X$ of a physical system,
i.e.\ $A$ arises as a (deformation) quantization of the commutative algebra
$\O(X)$ of $\bbK$-valued functions on $X$. If $X$ is a sufficiently ``nice''
space (in technical terms, $X$ is an affine scheme over $\bbK$), 
there is no loss of information when passing from $X$ to its function
algebra $\O(X)$. This explains why it is justified to quantize the space
$X$ by quantizing its function algebra $\O(X)$. 
\sk

However, many important examples of
phase spaces that feature in physics are {\em not} of this ``nice'' kind. For instance,
if the phase space $X$ is a {\em stack}, as it happens to be in a gauge theory,
it is in general not true that $X$ is faithfully encoded by its function algebra
$\O(X)$, which in this case is a dg-algebra.  To understand the example below,
let us recall that stacks are higher categorical spaces that 
are compared by a suitable kind of weak equivalences (in contrast to isomorphisms),
see e.g.\ \cite{ToenAffine} for the relevant model category structure.
To ensure that the assignment of function dg-algebras $X\mapsto \O(X)$ is well-defined,  i.e.\
compatible with the weak equivalences of stacks,  one has to study dg-algebras
up to weak equivalences too,  which in this case are given by dg-algebra maps
that induce quasi-isomorphisms on the underlying chain complexes.
In particular, this implies that the function dg-algebra on a stack is just determined up to weak equivalence.
(The precise statement here is that the assignment of function dg-algebras
to stacks is a Quillen functor between certain model categories,
see \cite{ToenAffine} and \cite{BSWhomotopy} for the details.)
As an illustrative example of a stack that is not faithfully encoded by its function dg-algebra, 
let $G$ be a finite group and consider the quotient stack $\mathbf{B}G := \{\ast\}//G$.  The latter 
is a non-trivial stack, namely the classifying stack of principal $G$-bundles. The corresponding function dg-algebra 
$\O(\mathbf{B}G) = C^\bullet(G,\bbK)$ is then given by 
the group cochains with values in the 
trivial $G$-representation $\bbK$. Taking for example $G=\bbZ_2$, 
the cyclic group of order $2$, all cohomology groups $H^n(\bbZ_2,\bbK)=0$,
for $n\neq 0$, are trivial if $\bbK$ has characteristic zero, and $H^0(\bbZ_2,\bbK)=\bbK$. 
It then follows that
$\O(\mathbf{B}\bbZ_2) \simeq  \bbK = \O(\{\ast\})$ is weakly equivalent in the model category
of dg-algebras to the function algebra of the point $\{\ast\}$, i.e.\ all information about the group $G=\bbZ_2$
is lost when passing from the stack $\mathbf{B}G$ to its function dg-algebra.
As a consequence, it is in general {\em not} reasonable to quantize a stack $X$ by quantizing its function dg-algebra $\O(X)$. 
We would like to emphasize that these issues arise for {\em finite gauge transformations}
and not for infinitesimal gauge transformations. In particular, every Lie algebroid $X=Y//\mathfrak{g}$
is completely determined by its function dg-algebra,
which in this case is given by the Chevalley-Eilenberg cochains
$\O(X) = \mathrm{CE}^\bullet(\mathfrak{g}, \O(Y))$ on the Lie algebra
$\mathfrak{g}$ with values in the ordinary function algebra $\O(Y)$
of the affine scheme $Y$.  As a consequence,  such models
with infinitesimal gauge transformations may be described within
the BRST/BV formalism \cite{FredenhagenRejzner,FredenhagenRejzner2,LinearYM}
and its axiomatization in the framework of homotopy AQFT \cite{BSWhomotopy}.
\sk

The feature that stacks are in general not completely determined
by their function dg-algebras is well-known to algebraic geometers,
see e.g.\ \cite{Brandenburg} for an excellent overview, 
who have also proposed the following interesting solution: Instead of assigning
a function dg-algebra $\O(X)$  to a space or stack $X$, it is better
to assign the category $\QCoh(X)$ of quasi-coherent sheaves on $X$. 
The latter is a locally presentable symmetric monoidal $\bbK$-linear category
that should be interpreted roughly as the category of vector bundles over $X$.
This is indeed a better choice because, by a theorem of Lurie \cite{LurieTannaka}, 
every geometric stack $X$ can be reconstructed from its quasi-coherent sheaf category
$\QCoh(X)$. This fact becomes evident in our illustrative example $\mathbf{B}G$ from above:
We observe that $\QCoh(\mathbf{B}G)= \Rep_{\bbK}(G)$ is the symmetric monoidal 
category of $\bbK$-linear representations of $G$. By Tannakian reconstruction,
$\Rep_{\bbK}(G)$ encodes the full information about the group $G$, hence
$\QCoh(\mathbf{B}G)$ is indeed much richer than the function dg-algebra $\O(\mathbf{B}G)$
considered in the previous paragraph (think of $G=\bbZ_2$ for example). Furthermore, for a ``nice'' space $X$,
i.e.\ an affine scheme over $\bbK$, the usual function algebra $\O(X)$ can be recovered as follows: 
One finds that in this case $\QCoh(X) \simeq \Mod_{\O(X)}$ is
the symmetric monoidal $\bbK$-linear category of (right) modules over the commutative
algebra $\O(X)$, hence $\O(X)$ can be reconstructed from $\QCoh(X)$
as the endomorphism algebra $\End(\O(X))\cong \O(X)$ of the rank $1$
free module $\O(X)\in \Mod_{\O(X)}$, i.e.\ as the endomorphism algebra of the monoidal unit of 
$\QCoh(X)$. This means that for ``nice'' spaces the function algebra perspective
and the quasi-coherent sheaf category perspective are compatible 
and carry the same information.
\sk

The previous paragraph explained the need to move 
from the function algebra perspective to the quasi-coherent sheaf 
category perspective. This, however, raises another question:
What does it mean to quantize a quasi-coherent sheaf category? As an illustrative example,
let us start with the case where the space $X$ is ``nice'', i.e.\ affine, and assume that we already have
a non-commutative algebra $A\in\Alg_\bbK$ that quantizes the function algebra $\O(X)$.
We may then form the locally presentable $\bbK$-linear category $\Mod_A$ of right $A$-modules
and interpret it as a quantization of the quasi-coherent sheaf category $\QCoh(X) \simeq \Mod_{\O(X)}$.
It is important to observe the following structural difference between $\Mod_A$ and $\Mod_{\O(X)}$:
The tensor product $\otimes_{\O(X)}^{}$ on $\Mod_{\O(X)}$ is only well-defined because
$\O(X)$ is a {\em commutative} algebra and hence every right $\O(X)$-module is automatically
an $\O(X)$-bimodule. Since the quantized algebra $A$ is  non-commutative,
there is no counterpart on $\Mod_A$ of the tensor product structure on 
$\Mod_{\O(X)}$. However, there is a counterpart on $\Mod_A$ of 
the monoidal unit $\O(X)\in\Mod_{\O(X)}$, which is given by the 
object $A\in\Mod_A$ of the $\bbK$-linear category $\Mod_A$. This suggests
that the quantization of the {\em symmetric monoidal} $\bbK$-linear category
$\QCoh(X)$ should be a {\em pointed} $\bbK$-linear category, 
i.e.\ a $\bbK$-linear category together with the choice of an object in it,  see e.g.\  \cite{JohnsonFreyd} and \cite{BZBJ}.
We would like to emphasize that this idea was made precise in the 
framework of derived algebraic geometry, see \cite{DAG,DAG2} and also To\"en's ICM 
2014 contribution \cite{ToenICM}. In this context, the quantization of a derived stack $X$
endowed with an $n$-shifted symplectic structure  is described by a quantization
of the symmetric monoidal $\infty$-category $\QCoh(X)$ of quasi-coherent sheaves
as an $E_n$-monoidal $\infty$-category. Since the phase space of a physical system
is $0$-shifted symplectic, we recover our intuition that one should quantize 
$\QCoh(X)$ as an $E_0$-monoidal, i.e.\ pointed, $\bbK$-linear category. 
\sk

Let us explain in more detail the content of the present paper and the results we obtain: 
In Section \ref{sec:1AQFT} we introduce an equivalent perspective on ordinary AQFTs as prefactorization
algebras \cite{CostelloGwilliam} with values in the symmetric monoidal
category $\Alg_\bbK$ of associative and unital $\bbK$-algebras.
This perspective will be used in Section \ref{sec:2AQFT} to introduce
our concept of categorified AQFTs (called {\em 2AQFTs}) 
that describe quantum observables by locally presentable $\bbK$-linear
categories, in contrast to associative and unital algebras. 
In more detail, we define a 2AQFT as a (weak) prefactorization algebra
on an orthogonal category $\ovr{\CC}$  (cf.\ Definition \ref{def:orthogonalcategory}) with 
values in the symmetric monoidal  $2$-category $\Pr_\bbK$ of locally 
presentable $\bbK$-linear categories. In Section \ref{sec:trunc}
we explore the relationship between ordinary AQFTs and our concept of 2AQFTs.
We construct a biadjunction (cf.\ Theorem \ref{theo:biadjunction})
that exhibits the $1$-category of ordinary AQFTs as a coreflective full
$2$-subcategory of the $2$-category of 2AQFTs.  This implies in particular that our framework
for 2AQFTs includes ordinary AQFTs faithfully.  Furthermore,  every 2AQFT has an 
underlying ordinary AQFT (cf.\ Subsection \ref{subsec:truncation}) that is obtained by 
truncating its higher categorical structures and is interpreted as the underlying
gauge invariant quantum observables.  We shall illustrate through simple examples 
(see e.g.\ Example \ref{ex:trivialAQFTgauging}) that this truncation does not in general
capture the 2AQFT fully,  which means that the latter has 
additional higher structures that are invisible at the level of gauge invariant observables.
Even though we currently do not know of any {\em direct} ways to access these
higher structures through measurements,  they appear to be physically relevant
in more {\em indirect} ways.  For instance,  we shall illustrate in Example \ref{ex:RepGFredCat}
that the higher categorical structures of 2AQFTs are essential for capturing
the desired descent (i.e.\ local-to-global) properties of gauge theories.
\sk

In Section \ref{sec:gauging} we develop a local gauging construction
for AQFTs with finite group actions, which allows us to construct concrete 
examples of 2AQFTs that admit an interpretation as categorified orbifold theories. 
The main result of this section is Theorem \ref{theo:gaugingtruncated}:
We prove that a categorified orbifold theory is truncated, i.e.\ equivalent
to an ordinary AQFT, if and only if a suitable Hopf-Galois condition is fulfilled, which can be interpreted as a non-commutative analog 
of the condition that a $G$-action on a space is {\it free}. 
This matches with the intuition that an orbifold $\sigma$-model 
with a {\it quotient stack} $X//G$ as target boils down 
to an ordinary $\sigma$-model with the {\it quotient space} $X/G$ 
as target whenever $G$ acts freely on the space $X$ (cf.\ Remark \ref{rem:HopfGalois}). 
In Section \ref{sec:localtoglobal} we study a local-to-global extension
for 2AQFTs (called {\em Fredenhagen's universal category}), which is a higher categorical analog 
of {\em Fredenhagen's universal algebra} \cite{Fre1,Fre2,Fre3}. This is related to
factorization homology \cite{AyalaFrancis,BZBJ,BZBJ2}, however we do not restrict ourselves to topological QFTs.  
We develop concrete models for computing Fredenhagen's universal category and provide simple examples
for extensions of 2AQFTs from intervals to the circle $\bbS^1$.
Appendix \ref{app:2operads} introduces the relevant formalism for $2$-categorical
operad theory that we use throughout this paper.


\section{\label{sec:1AQFT}AQFTs and algebra-valued prefactorization algebras}
Let us fix once and for all a field $\bbK$ of characteristic zero.
We briefly review the definition of algebraic quantum field theories (AQFTs)
on an orthogonal category $\ovr{\CC} $ in the sense of \cite{BSWoperad}.
We then prove that such theories admit an equivalent description in terms of
prefactorization algebras on $\ovr{\CC}$ with values in the symmetric monoidal
category $\Alg_\bbK$ of associative and unital $\bbK$-algebras.
The latter perspective will be particularly useful for developing a categorification
of AQFTs in Section \ref{sec:2AQFT}.
\begin{defi}\label{def:orthogonalcategory}
An {\em orthogonal category} $\ovr{\CC} := (\CC,\perp)$ is a small category $\CC$ together with 
a subset ${\perp} \subseteq \mathrm{Mor}\CC\, {}_{\mathrm{t}\!}^{}{\times}_{\mathrm{t}}^{} \,\mathrm{Mor}\CC$
of the set of pairs of morphisms with a common target, such that:
\begin{itemize}
\item[(i)] If $(f_1,f_2)\in {\perp}$, then $(f_2,f_1)\in {\perp}$.
\item[(ii)] If $(f_1,f_2)\in {\perp}$, then $(g\,f_1\,h_1,g\,f_2\,h_2)\in {\perp}$,
for all composable $\CC$-morphisms $g$, $h_1$ and $h_2$.
\end{itemize}
We shall denote orthogonal pairs $(f_1,f_2)\in {\perp}$ also by $f_1\perp f_2$.
\end{defi}

\begin{ex}\label{ex:opens}
Let $\Open(M)$ be the category of non-empty open subsets $U \subseteq M$ 
of a manifold $M$ with morphisms $U \to V$ given by subset inclusions $U \subseteq V \subseteq M$. 
We introduce an orthogonality relation $\perp_{M}$ by declaring two morphisms 
$U_1,U_2\subseteq V\subseteq M$ to be orthogonal if and only if $U_1\cap U_2 = \emptyset$.
The orthogonal category $\overline{\Open(M)}:=(\Open(M),\perp_M)$ may be used to describe factorization algebras 
\cite{CostelloGwilliam} and,  for $M=\bbS^1$ the circle,  also chiral conformal 
AQFT \cite{Kawahigashi}.
\end{ex}

\begin{ex}\label{ex:Loc}
The orthogonal category $\ovr{\Loc}$ relevant for locally covariant 
AQFT \cite{BFV,FewsterVerch} is given by the category $\Loc$
of oriented and time-oriented globally hyperbolic Lorentzian manifolds,
with orthogonality relation $\perp_{\Loc}$ determined by causal disjointness, see
e.g.\ \cite{BSWoperad} for the details. 
For a fixed $M\in\Loc$, consider the slice category $\Loc/M$ and pull back
the orthogonality relation $\perp_\Loc$ along the canonical functor $\Loc/M\to \Loc$.
The resulting orthogonal category $\ovr{\Loc/M}$ may be used to describe
Haag-Kaster style AQFTs \cite{HK} on a fixed $M\in\Loc$.
\end{ex}

\begin{defi}\label{def:AQFT}
An {\em algebraic quantum field theory} (AQFT) on an orthogonal category
$\ovr{\CC}$ is a functor $\AAA : \CC\to \Alg_\bbK $ to the category 
of associative and unital $\bbK$-algebras
that satisfies the $\perp$-commutativity property: For all orthogonal pairs 
$(f_1:c_1\to t)\perp (f_2 : c_2\to t)$, the induced commutator
\begin{flalign}
\big[\AAA(f_1)(-),\AAA(f_2)(-)\big]_{\AAA(t)}^{~}\,:\, \AAA(c_1)\otimes \AAA(c_2)~\longrightarrow~\AAA(t)
\end{flalign}
is zero. The category of AQFTs on $\ovr{\CC}$ is the full subcategory
\begin{flalign}
\AQFT(\ovr{\CC})\,\subseteq\,\Fun(\CC,\Alg_\bbK)
\end{flalign}
of the functor category that consists of all $\perp$-commutative functors.
\end{defi}

In preparation for our definition of categorified AQFTs in Section \ref{sec:2AQFT},
we prove that the category $\AQFT(\ovr{\CC})$ is equivalent to the category of $\Alg_\bbK$-valued
prefactorization algebras on $\ovr{\CC}$. The following definition introduces a
colored operad $\P_{\ovr{\CC}}$ that generalizes the prefactorization 
operad of Costello and Gwilliam \cite{CostelloGwilliam} to an arbitrary orthogonal category
$\ovr{\CC}$. The operad of Costello and Gwilliam is recovered by taking 
$\ovr{\CC} = \overline{\Open(M)}$ for a manifold $M$, see Example \ref{ex:opens}. 
For the relevant background and notations for 
colored operads we refer the reader to \cite{Yau,BSWoperad} and also to Appendix \ref{app:2operads}.
\begin{defi}\label{def:PCoperad}
The {\em prefactorization operad} $\P_{\ovr{\CC}}$ associated to an
orthogonal category $\ovr{\CC}$ is the $\Set$-valued colored operad defined
by the following data:
\begin{enumerate}[(1)]
\item The objects of $\P_{\ovr{\CC}}$ are the objects of the category $\CC$.

\item The sets of operations are
\begin{flalign}
\P_{\ovr{\CC}}\big(\substack{t \\ \und{c}}\big)\,:=\, \Big\{ \und{f} := (f_1,\dots,f_n)\in \prod_{i=1}^n \CC(c_i,t)\,:\, f_i\perp f_j \text{~for all~}i\neq j\,\Big\}\quad,
\end{flalign}
for each object $t\in\CC$ and each tuple of objects $\und{c} :=(c_1,\dots,c_n)\in \CC^n$. For the empty tuple $\und{c}=\emptyset$,
we set $\P_{\ovr{\CC}}\big(\substack{t \\ \emptyset}\big) := \{\ast_t\}$ to be a singleton.

\item The composition maps $\gamma : \P_{\ovr{\CC}}\big(\substack{t \\ \und{a}}\big)\times\prod_{i=1}^n 
\P_{\ovr{\CC}}\big(\substack{a_i \\ \und{b}_i}\big)\to \P_{\ovr{\CC}}\big(\substack{t \\ \und{\und{b}}}\big)$,
where $\und{\und{b}} := (\und{b}_1,\dots,\und{b}_n)$ denotes the concatenation of tuples, 
are given by composition in the category $\CC$, i.e.\
\begin{flalign}
 \gamma \big(\und{f},(\und{g}_1,\dots,\und{g}_n) \big)\,:=\, \und{f}\,\und{\und{g}} \,:=\, \big(f_1\,g_{11},\dots, f_1\, g_{1 k_1},\dots,f_n\, g_{n1},\dots,f_n \, g_{n k_n}\big)\quad.
\end{flalign}

\item The identity operations are $\oone := \id_t \in \P_{\ovr{\CC}}\big(\substack{t \\ t}\big)$.

\item The permutation actions $\P_{\ovr{\CC}}(\sigma) : \P_{\ovr{\CC}}\big(\substack{t \\ \und{c}}\big)\to \P_{\ovr{\CC}}\big(\substack{t \\ \und{c}\sigma}\big)$ are given by
\begin{flalign}
\P_{\ovr{\CC}}(\sigma)(\und{f}) \,:=\, \und{f}\sigma := (f_{\sigma(1)},\dots, f_{\sigma(n)})\quad.
\end{flalign}
\end{enumerate}
\end{defi}

Let us endow the category $\Alg_\bbK$ of associative and unital $\bbK$-algebras
with its standard symmetric monoidal structure. The tensor product of two algebras $A,B\in\Alg_\bbK$
is given by the tensor product algebra $A\otimes B$. Concretely, that is the tensor product
of vector spaces with multiplication given by 
$(a\otimes b)\,(a^\prime\otimes b^\prime):=(a\,a^\prime)\otimes (b\,b^\prime)$
and unit element $1_A\otimes 1_B\in A\otimes B$. The monoidal unit is $\bbK\in\Alg_\bbK$
and the symmetric braiding is given by the $\Alg_\bbK$-morphisms 
$\tau : A\otimes B\to B\otimes A\,,~a\otimes b\mapsto b\otimes a$.
The symmetric monoidal category $\Alg_\bbK$ has an associated
$\Set$-valued colored operad (see e.g.\ \cite{Mandell}) 
that we denote by the same symbol $\Alg_\bbK$.  Concretely,
the objects are the objects of $\Alg_\bbK$ and the sets of operations are given by
\begin{flalign}
\Alg_{\bbK}\big(\substack{B \\ \und{A}}\big) \,:=\, \Alg_{\bbK}\Big(\Motimes_{i=1}^n A_i , B\Big)\quad.
\end{flalign}
The composition maps are determined by the monoidal structure,
the identity operations are the identity morphisms, and the permutation
actions are obtained from the symmetric braiding.
\begin{defi}\label{def:PFA}
The category of {\em $\Alg_\bbK$-valued prefactorization algebras on $\ovr{\CC}$}
is defined by
\begin{flalign}
\Alg_{\P_{\ovr{\CC}}}(\Alg_\bbK)\,:=\, [\P_{\ovr{\CC}},\Alg_{\bbK}]\quad,
\end{flalign}
where $\Alg_\bbK$ is regarded as a colored operad (as explained above)
and $[-,-]$ denotes the $\Hom$-category from Remark \ref{rem:HomCats}.
\end{defi}

\begin{rem}\label{rem:PFAunpacked}
Let us unpack this definition by using the definitions from Appendix \ref{app:2operads}.
(These definitions simplify drastically in the present case because
both $\P_{\ovr{\CC}}$ and $\Alg_{\bbK}$ are $\Set$-valued colored operads.
Hence, all coherence data are necessarily trivial. Non-trivial coherence data
will be needed to describe categorified AQFTs in Remark \ref{rem:2AQFTunpacked}.)
\sk

An $\Alg_\bbK$-valued prefactorization algebra
$\FFF\in \Alg_{\P_{\ovr{\CC}}}(\Alg_\bbK)$ 
is given by the following data:
\begin{enumerate}[(1)]
\item For each $c\in \CC$, an associative and unital $\bbK$-algebra $\FFF(c)\in\Alg_\bbK$.

\item For each tuple $\und{f} = (f_1,\dots,f_n)\in \P_{\ovr{\CC}}\big(\substack{t\\ \und{c}}\big)$ of mutually orthogonal
$\CC$-morphisms, an $\Alg_\bbK$-morphism (called {\em factorization product})
\begin{flalign}
\FFF(\und{f}) \,:\, \bigotimes_{i=1}^n \FFF(c_i)~\longrightarrow~\FFF(t)
\end{flalign}
from the tensor product algebra $\bigotimes_{i=1}^n \FFF(c_i)$. For the empty tuple $\und{c} = \emptyset$,
the $\Alg_\bbK$-morphism $\FFF(\ast_t) : \bbK\to \FFF(t)$ associated to the only element 
$\ast_t\in  \P_{\ovr{\CC}}\big(\substack{t\\ \emptyset}\big)$ is necessarily the unit of $\FFF(t)$, 
because $\bbK$ is the initial object in $\Alg_\bbK$.

\end{enumerate}
These data are required to satisfy the following axioms:
\begin{subequations}
\begin{flalign}\label{eqn:PFAaxiomcomposition}
\xymatrix@C=5em{
\ar[rd]_-{\FFF(\und{f}\,\und{\und{g}})~~~}\bigotimes\limits_{i=1}^n\bigotimes\limits_{j=1}^{k_i} \FFF(b_{ij}) \ar[r]^-{\FFF(\und{\und{g}}) := \Motimes_i \FFF(\und{g}_i)} ~&~ \bigotimes\limits_{i=1}^n \FFF(a_i)\ar[d]^-{\FFF(\und{f})}\\
~&~\FFF(t)
}
\end{flalign}
\begin{flalign}\label{eqn:PFAaxiomunit}
\FFF(\id_t)\,=\,\id_{\FFF(t)}\,:\, \FFF(t)~\longrightarrow ~\FFF(t)
\end{flalign}
\begin{flalign}\label{eqn:PFAaxiompermutation}
\xymatrix@C=5em{
\ar[dr]_-{\FFF(\und{f}\sigma)}\bigotimes\limits_{i=1}^n \FFF(c_{\sigma(i)}) \ar[r]^-{\tau_\sigma} ~&~ \bigotimes\limits_{i=1}^n \FFF(c_i)\ar[d]^-{\FFF(\und{f})}\\
~&~\FFF(t)
}
\end{flalign}
\end{subequations}
In the last diagram, we have denoted by $\tau_\sigma :\bigotimes_{i=1}^n \FFF(c_{\sigma(i)}) 
\to \bigotimes_{i=1}^n \FFF(c_i)\,,~a_{\sigma(1)}\otimes\cdots\otimes a_{\sigma(n)}\mapsto 
a_1\otimes\cdots\otimes a_n$  the $\Alg_\bbK$-morphism that permutes the factors 
of the tensor product algebra.
\sk

A morphism $\zeta : \FFF\to\GGG$ in $\Alg_{\P_{\ovr{\CC}}}(\Alg_\bbK)$ is a family of
$\Alg_\bbK$-morphisms $\zeta_c : \FFF(c)\to\GGG(c)$, for all $c\in\CC$,
such that the diagrams
\begin{flalign}\label{eqn:PFAmapaxiom}
\xymatrix@C=5em{
\ar[d]_-{\FFF(\und{f})}\bigotimes\limits_{i=1}^n \FFF(c_i) \ar[r]^-{\Motimes_i \zeta_{c_i}}~&~\bigotimes\limits_{i=1}^n \GGG(c_i)\ar[d]^-{\GGG(\und{f})}\\
\FFF(t)\ar[r]_-{\zeta_t}~&~\GGG(t)
}
\end{flalign}
commute, for all $\und{f}\in  \P_{\ovr{\CC}}\big(\substack{t\\ \und{c}}\big)$.
\end{rem}

It is easy to see that every $\AAA\in\AQFT(\ovr{\CC})$ defines
an $\Alg_\bbK$-valued prefactorization algebra on $\ovr{\CC}$ by introducing
the factorization products
\begin{flalign}
\AAA(\und{f}) \,:\,\xymatrix@C=3em{
\bigotimes\limits_{i=1}^n \AAA(c_i)  \ar[r]^-{\Motimes_i \AAA(f_i)} ~& ~\AAA(t)^{\otimes n} \ar[r]^-{\mu_{\AAA(t)}^{n}}~ &~ \AAA(t)
}\quad,
\end{flalign}
where $\mu_{\AAA(t)}^n : \AAA(t)^{\otimes n}\to \AAA(t)\,,~a_1\otimes\cdots\otimes a_n \mapsto a_1\,\cdots\,a_n$
denotes the $n$-ary multiplication in the associative and unital algebra $\AAA(t)\in\Alg_\bbK$.
Using $\perp$-commutativity and $f_i\perp f_j$, for all $i\neq j$, 
one shows that $\AAA(\und{f})$ is indeed an $\Alg_\bbK$-morphism on the tensor product algebra.
Furthermore, every $\AQFT(\ovr{\CC})$-morphism $\zeta : \AAA\to\BBB$ defines
an $\Alg_{\P_{\ovr{\CC}}}(\Alg_\bbK)$-morphism between the corresponding
prefactorization algebras, hence we obtain a functor $\AQFT(\ovr{\CC})\to \Alg_{\P_{\ovr{\CC}}}(\Alg_\bbK)$.
Note that this functor is fully faithful. 
\sk

Conversely, we have the following
lemma showing that every $\Alg_\bbK$-valued prefactorization algebra
is completely determined by an underlying AQFT.
\begin{lem}
For every $\FFF\in \Alg_{\P_{\ovr{\CC}}}(\Alg_\bbK)$, the factorization product
$\FFF(\und{f})$ for $\und{f} = (f_1,\dots,f_n)\in \P_{\ovr{\CC}}\big(\substack{ t \\ \und{c}}\big)$
factorizes as
\begin{flalign}\label{eqn:FFFfactorizationTMP}
\FFF(\und{f}) \,:\,\xymatrix@C=3em{
\bigotimes\limits_{i=1}^n \FFF(c_i)  \ar[r]^-{\Motimes_i \FFF(f_i)} ~& ~\FFF(t)^{\otimes n} \ar[r]^-{\mu_{\FFF(t)}^{n}}~ &~ \FFF(t)
}\quad,
\end{flalign}
where $\mu_{\FFF(t)}^n$ denotes the $n$-ary multiplication in the associative and unital algebra 
$\FFF(t)\in\Alg_\bbK$. In particular, $\FFF$ is completely determined by its underlying
functor $\FFF : \CC\to\Alg_{\bbK}$, which satisfies the $\perp$-commutativity
property from Definition \ref{def:AQFT} and hence defines an AQFT.
\end{lem}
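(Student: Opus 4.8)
The plan is to recover the entire prefactorization algebra $\FFF$ from its restriction to unary operations, by using the composition axiom \eqref{eqn:PFAaxiomcomposition} to peel off the individual legs of each factorization product, and then to read off $\perp$-commutativity from the fact that $\FFF(\und{f})$ is, by construction, a morphism \emph{of algebras} out of a tensor product.

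First I would identify the unary part of $\P_{\ovr{\CC}}$: a one-element tuple is vacuously $\perp$-related, so $\P_{\ovr{\CC}}\big(\substack{t\\c}\big)=\CC(c,t)$, the operad composition of unary operations is composition in $\CC$, and the identity operations are the $\id_c$. Hence \eqref{eqn:PFAaxiomcomposition} (with all $k_i=1$) together with \eqref{eqn:PFAaxiomunit} say precisely that the restriction $\FFF\colon\CC\to\Alg_\bbK$ is a functor. I would also record that the nullary operation $\FFF(\ast_c)\colon\bbK\to\FFF(c)$ is forced to be the unit of the algebra $\FFF(c)$, since $\bbK$ is initial in $\Alg_\bbK$.

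The heart of the argument is the following identity in $\P_{\ovr{\CC}}$, immediate from Definition \ref{def:PCoperad} (inserting the nullary operations deletes those slots from the concatenation):
\[
\gamma\big(\und{f}\,;\,\ast_{c_1},\dots,\ast_{c_{i-1}},\,\id_{c_i}\,,\ast_{c_{i+1}},\dots,\ast_{c_n}\big)\;=\;f_i\;\in\;\P_{\ovr{\CC}}\big(\substack{t\\c_i}\big)\,.
\]
Applying the composition axiom \eqref{eqn:PFAaxiomcomposition}, the unit axiom \eqref{eqn:PFAaxiomunit}, and the description of $\FFF(\ast_{c_j})$ above, the left-hand side is sent by $\FFF$ to the composite of $\FFF(\und{f})\colon\bigotimes_{j}\FFF(c_j)\to\FFF(t)$ with the morphism inserting $1$ in every tensor factor except the $i$-th, while the right-hand side is sent to $\FFF(f_i)$. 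Thus the $i$-th ``leg'' of the algebra morphism $\FFF(\und{f})$, namely $a\mapsto\FFF(\und{f})(1\otimes\cdots\otimes a\otimes\cdots\otimes 1)$, equals $\FFF(f_i)$. Since $\FFF(\und{f})$ is multiplicative and $a_1\otimes\cdots\otimes a_n$ is the ordered product $\prod_{i=1}^n(1\otimes\cdots\otimes a_i\otimes\cdots\otimes1)$ in the tensor product algebra, this gives $\FFF(\und{f})(a_1\otimes\cdots\otimes a_n)=\FFF(f_1)(a_1)\cdots\FFF(f_n)(a_n)$, which is exactly \eqref{eqn:FFFfactorizationTMP}. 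In particular $\FFF$ is determined by the functor $\FFF\colon\CC\to\Alg_\bbK$; and this functor is $\perp$-commutative, because for $f_1\perp f_2$ with common target $t$ the relation $(a\otimes1)(1\otimes b)=a\otimes b=(1\otimes b)(a\otimes1)$ in the tensor product algebra $\FFF(c_1)\otimes\FFF(c_2)$ is sent by the algebra morphism $\FFF\big((f_1,f_2)\big)$ to $\FFF(f_1)(a)\,\FFF(f_2)(b)=\FFF(f_2)(b)\,\FFF(f_1)(a)$.

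The step I expect to require the most care is the operadic bookkeeping in the displayed identity and in its image under $\FFF$: one must line up the concatenation conventions of Definition \ref{def:PCoperad} with the composition maps of the operad $\Alg_\bbK$ induced by the monoidal structure (including the unitors $\bbK\otimes V\cong V$ that make the nullary inputs disappear), so that ``plug in units on all slots but the $i$-th'' on the $\P_{\ovr{\CC}}$ side really does correspond to ``restrict $\FFF(\und{f})$ to the $i$-th tensor factor'' on the $\Alg_\bbK$ side. Once this dictionary is set up the verification is purely formal, since both $\P_{\ovr{\CC}}$ and $\Alg_\bbK$ are $\Set$-valued colored operads and carry no nontrivial coherence data, as noted in Remark \ref{rem:PFAunpacked}.
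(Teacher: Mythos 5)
Your proposal is correct and follows essentially the same route as the paper: both hinge on the operadic identity $\gamma\big(\und{f},(\ast_{c_1},\dots,\id_{c_k},\dots,\ast_{c_n})\big)=f_k$, use the composition and unit axioms together with the fact that $\FFF(\ast_c)$ is the algebra unit to identify the $k$-th leg of $\FFF(\und{f})$ with $\FFF(f_k)$, and then invoke multiplicativity of $\FFF(\und{f})$ and the commutation of $a\otimes 1$ with $1\otimes b$ in a tensor product algebra to conclude the factorization and $\perp$-commutativity. No gaps.
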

\begin{proof}
Using the composition maps from Definition \ref{def:PCoperad}, we compute
\begin{flalign}
\gamma\big(\und{f},(\ast_{c_1},\dots,\ast_{c_{k-1}},\id_{c_k},\ast_{c_{k+1}},\dots, \ast_{c_n})\big) \,=\, f_k\quad, 
\end{flalign}
for all $k=1,\dots,n$, where $\ast_{t}\in \P_{\ovr{\CC}}\big(\substack{ t \\ \emptyset}\big)$
denotes the unique arity zero operation. The corresponding commutative diagram in 
\eqref{eqn:PFAaxiomcomposition} then reads as
\begin{flalign}\label{eqn:FFFfactorizationTMPdiagram}
\xymatrix@C=5em{
\bigotimes\limits_{i=1}^n \FFF(c_i) \ar[r]^-{\FFF(\und{f})} ~&~\FFF(t)\\
\ar[u]^-{\Motimes_i \FFF(\ast_{c_i}) \otimes \FFF(\id_{c_k}) \otimes \Motimes_i \FFF(\ast_{c_i})} \bigotimes\limits_{i=1}^{k-1}\bbK \otimes \FFF(c_k) \otimes \bigotimes\limits_{i=k+1}^{n}\bbK \ar[r]_-{\cong}~&~ \FFF(c_k)\ar[u]_-{\FFF(f_k)}
}
\end{flalign}
Using further that $\FFF(\id_{c_k})=\id_{\FFF(c_k)}$ (cf.\ \eqref{eqn:PFAaxiomunit})
and that $\FFF(\ast_{c_i}) : \bbK\to \FFF(c_i)$ is the unit of $\FFF(c_i)\in\Alg_\bbK$ 
(cf.\ Remark \ref{rem:PFAunpacked}), the commutative diagram \eqref{eqn:FFFfactorizationTMPdiagram}
implies that
\begin{flalign}
\FFF(\und{f})\big(1_{\FFF(c_1)}\otimes \cdots\otimes 1_{\FFF(c_{k-1})} \otimes a_k \otimes 1_{\FFF(c_{k+1})}\otimes\cdots \otimes 1_{\FFF(c_n)}\big) \,=\, \FFF(f_k)(a_k)\quad,
\end{flalign}
for all $a_k\in \FFF(c_k)$. By definition of the product of a tensor product algebra, it then follows that
\begin{flalign}
\FFF(\und{f})(a_1\otimes \cdots\otimes a_n)\,=\,\FFF(f_1)(a_1)\,\cdots\,\FFF(f_n)(a_n)\quad,
\end{flalign}
for all $a_1\otimes \cdots\otimes a_n\in \bigotimes_{i=1}^n \FFF(c_i)$,
which proves \eqref{eqn:FFFfactorizationTMP}. Using further that every two elements of the form
$a\otimes 1_B$ and $1_A\otimes b$ commute in a tensor product algebra $A\otimes B$,
it follows that the underlying functor $\FFF : \CC\to \Alg_{\bbK}$ is $\perp$-commutative.
\end{proof}

Summing up, we have proven the following
\begin{theo}\label{theo:AQFTPFAequivalence}
For every orthogonal category $\ovr{\CC}$, there exists a canonical isomorphism
\begin{flalign}
\AQFT(\ovr{\CC})\, \cong\, \Alg_{\P_{\ovr{\CC}}}(\Alg_\bbK)
\end{flalign}
between the category of AQFTs on $\ovr{\CC}$ and the 
category of $\Alg_\bbK$-valued prefactorization algebras on $\ovr{\CC}$.
\end{theo}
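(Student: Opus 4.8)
The plan is to promote the fully faithful functor $F : \AQFT(\ovr{\CC})\to\Alg_{\P_{\ovr{\CC}}}(\Alg_\bbK)$ constructed above to a strict isomorphism of categories by exhibiting an explicit two-sided inverse. Recall that $F$ sends an AQFT $\AAA$ to the prefactorization algebra with the same underlying functor $\CC\to\Alg_\bbK$ and with factorization products $\AAA(\und{f})=\mu^n_{\AAA(t)}\circ\Motimes_i\AAA(f_i)$, and that it acts as the identity on the underlying families of component morphisms. A preliminary point, already sketched in the discussion above, is that these factorization products do satisfy the composition, unit and equivariance axioms \eqref{eqn:PFAaxiomcomposition}--\eqref{eqn:PFAaxiompermutation}; checking this is a direct computation from functoriality of $\AAA$, the unit and associativity axioms of the algebras $\AAA(t)$, and $\perp$-commutativity together with $f_i\perp f_j$ for $i\neq j$ (the latter being exactly what makes each $\AAA(\und{f})$ an $\Alg_\bbK$-morphism out of the tensor product algebra rather than merely a linear map).

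Next I would define the candidate inverse $G : \Alg_{\P_{\ovr{\CC}}}(\Alg_\bbK)\to\AQFT(\ovr{\CC})$ on objects by sending a prefactorization algebra $\FFF$ to its underlying functor $\FFF : \CC\to\Alg_\bbK$, which by the preceding Lemma is $\perp$-commutative and hence an AQFT, and on morphisms by sending $\zeta : \FFF\to\GGG$ to the family $(\zeta_c)_{c\in\CC}$. The only thing to verify here is that $(\zeta_c)_{c\in\CC}$ is a morphism in $\AQFT(\ovr{\CC})$, that is, a natural transformation of the underlying functors. This follows by specializing the compatibility diagram \eqref{eqn:PFAmapaxiom} to the length-one tuples $\und{f}=(f)$, for an arbitrary $\CC$-morphism $f : c\to t$: since $\FFF((f))$ and $\GGG((f))$ are by definition the values of the underlying functors on $f$, that diagram is precisely the naturality square for $f$. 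Functoriality of $G$ is then immediate.

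It remains to check that $G$ is strictly inverse to $F$. The composite $G\circ F$ is the identity on $\AQFT(\ovr{\CC})$ by construction, since $F$ leaves the underlying functor and the component morphisms untouched and $G$ simply reads them off. For $F\circ G$, the only nontrivial point on objects is that the factorization products of $F(G(\FFF))$, which by the definition of $F$ are $\mu^n_{\FFF(t)}\circ\Motimes_i\FFF(f_i)$, coincide with the original factorization products $\FFF(\und{f})$ of $\FFF$; this is exactly the factorization formula \eqref{eqn:FFFfactorizationTMP} established in the Lemma. On morphisms both composites are again the identity, so $F$ is an isomorphism of categories. I do not anticipate a genuine obstacle: the statement is essentially a repackaging of the Lemma together with the already-noted full faithfulness of $F$, and the mildly tedious part is the bookkeeping that $F$ really does land in prefactorization algebras and that the two constructions are strictly inverse rather than merely inverse up to coherent isomorphism --- which works precisely because $\P_{\ovr{\CC}}$ and $\Alg_\bbK$ are $\Set$-valued colored operads with trivial coherence data, as remarked after Definition \ref{def:PFA}.
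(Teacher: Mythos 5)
Your proposal is correct and follows essentially the same route as the paper: the forward functor is the one constructed before the Lemma, the inverse is extraction of the underlying functor, and the key input for $F\circ G=\id$ is precisely the factorization formula \eqref{eqn:FFFfactorizationTMP} from the Lemma. The paper states the theorem as a summary of exactly this discussion, so your write-up just makes the bookkeeping (naturality of $(\zeta_c)_{c\in\CC}$ via the length-one case of \eqref{eqn:PFAmapaxiom}, strictness of the inverse) explicit.
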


\begin{rem}
The equivalent description of AQFTs in terms of $\Alg_\bbK$-valued
prefactorization algebras
provides an interesting conceptual interpretation of the 
$\perp$-commutativity property  from  Definition \ref{def:AQFT}.
From the prefactorization algebra point of view, every quantum field theory
comes with two different kinds of ``multiplications'',  namely the object-wise
products $\mu_{\FFF(c)} :\FFF(c)\otimes\FFF(c)\to \FFF(c)$, for every $c\in \CC$,
and the factorization products $\FFF(\und{f}): \bigotimes_{i=1}^n \FFF(c_i)\to \FFF(t)$,
for every tuple $\und{f}$ of mutually orthogonal $\CC$-morphisms. These two kinds
of ``multiplications'' are compatible with each other because the factorization products
$\FFF(\und{f})$ are $\Alg_\bbK$-morphisms. The $\perp$-commutativity property
is thus a consequence of an Eckmann-Hilton argument.
\end{rem}


\section{\label{sec:2AQFT}Definition of 2AQFTs}
The aim of this section is to introduce a categorification of the concept of AQFTs, 
which we shall call {\em $2$-categorical algebraic quantum field theories} (2AQFTs). While ordinary
AQFTs assign associative and unital $\bbK$-algebras to the objects
of an orthogonal category $\ovr{\CC}$, our concept of 2AQFTs will assign
locally presentable $\bbK$-linear categories, cf.\ \cite{AdamekRosicky,BCJF}. 
\sk

Recall that a {\em $\bbK$-linear category} is a category $\DD$ that is enriched
over the symmetric monoidal category $\Vec_\bbK$ of vector spaces over $\bbK$.
Concretely, this means that we have a vector space $\DD(d,d^\prime)\in\Vec_\bbK$ 
of morphisms,  for every pair of objects $d,d^\prime\in \DD$,
and that the composition maps $\circ : \DD(d^\prime,d^{\prime\prime})\times 
\DD(d,d^\prime)\to \DD(d,d^{\prime\prime})$ are $\bbK$-bilinear, for all 
$d,d^\prime,d^{\prime\prime}\in \DD$. Given two $\bbK$-linear categories
$\DD$ and $\EE$, a {\em $\bbK$-linear functor} $F : \DD\to \EE$ is a functor
such that the maps $F : \DD(d,d^\prime)\to \EE(Fd,Fd^\prime)$ are $\bbK$-linear, for all $d,d^\prime\in \DD$.
\sk

A $\bbK$-linear category $\DD$ is called {\em locally presentable} if it is 1.)~cocomplete, i.e.\ has all small colimits,
and 2.)~generated under small colimits by a set $\Gamma\subset \DD_0$ of objects that are $\lambda$-presentable
for some infinite cardinal $\lambda$, 
see e.g.\ \cite{BCJF} for a recollection of the relevant material on locally presentable categories.
The natural concept of functors $F : \DD\to\EE$ between two locally presentable
$\bbK$-linear categories $\DD$ and $\EE$ is given by {\em co-continuous $\bbK$-linear functors},
i.e.\ $\bbK$-linear functors that preserve all small colimits. Natural transformations in this context
are just ordinary natural transformations.
\begin{defi}\label{def:PrKoperad}
The operad $\Pr_\bbK$ of {\em locally presentable $\bbK$-linear categories} 
is the $\Cat$-enriched colored operad (cf.\ Definition \ref{def:2operad}) defined by the following data:
\begin{enumerate}[(1)]
\item The objects are all locally presentable $\bbK$-linear categories.

\item For $\TT$ and $\und{\DD}=(\DD_1,\dots,\DD_n)$ locally presentable $\bbK$-linear categories,
the category of operations is the full subcategory
\begin{flalign}
\Pr_{\bbK}\big(\substack{\TT\\ \und{\DD}}\big)\,\subseteq\,\Fun\Big(\prod_{i=1}^n \DD_i , \TT\Big)
\end{flalign}
of the functor category that consists of all functors $F : \prod_{i=1}^n \DD_i \to \TT$ that
are $\bbK$-linear and co-continuous in each variable. For the empty tuple $\und{\DD}=\emptyset$,
we set $\Pr_{\bbK}\big(\substack{\TT\\ \emptyset}\big):= \Fun(\1,\TT)$, where 
$\1$ is the category with only one object and its identity morphism.

\item The composition functors 
$\gamma: \Pr_{\bbK}\big(\substack{\TT\\ \und{\DD}}\big)\times
\prod_{i=1}^n\Pr_{\bbK}\big(\substack{\DD_i\\ \und{\EE}_i}\big)\to \Pr_{\bbK}\big(\substack{\TT\\ \und{\und{\EE}}}
\big)$ are given by composition of functors and (horizontal) composition of natural transformations, i.e.\
\begin{subequations}
\begin{flalign}
\gamma\big(F,(G_1,\dots,G_n)\big)\, &:=\, F\,\und{G} \, :=\, F\,\prod_{i=1}^n G_i\quad,\\
 \gamma\big( \alpha, (\beta_1,\dots,\beta_n)\big)\,&:=\, \alpha\ast\und{\beta} \,:=\, \alpha\ast\prod_{i=1}^n\beta_i\quad.
\end{flalign}
\end{subequations}

\item The identity $1$-operations are the identity functors $\oone := \id_{\TT}\in \Pr_{\bbK}\big(\substack{\TT\\ \TT}\big)\subseteq\Fun(\TT,\TT)$.

\item The permutation action functors 
$\Pr_{\bbK}(\sigma) : \Pr_{\bbK}\big(\substack{\TT\\ \und{\DD}}\big)\to 
\Pr_{\bbK}\big(\substack{\TT\\ \und{\DD}\sigma}\big)$ are given by
\begin{flalign}
\Pr_{\bbK}(\sigma) (F)\,:=\, F~\flip_\sigma\quad,\qquad
\Pr_{\bbK}(\sigma) (\alpha)\,:=\, \alpha\ast \Id_{\flip_\sigma}\quad,
\end{flalign}
where $\flip_\sigma : \prod_{i=1}^n \DD_{\sigma(i)}\to \prod_{i=1}^n\DD_i$ is the permutation functor
and $\Id_{\flip_\sigma} : \flip_\sigma \Rightarrow \flip_\sigma$ its identity natural transformation.
\end{enumerate}
\end{defi}

\begin{rem}\label{rem:PrKmonoidal}
With some abuse of notation, we will sometimes 
denote by the same symbol $\Pr_\bbK$ the underlying $2$-category
of $1$-ary operations of the $\Cat$-enriched colored operad from 
Definition \ref{def:PrKoperad}. It should be clear from the context whether
we mean by the symbol $\Pr_\bbK$ a $\Cat$-enriched colored operad or a $2$-category.
The underlying $2$-category $\Pr_\bbK$ is known to be (closed) symmetric monoidal
with respect to the Kelly-Deligne tensor product $\DD\boxtimes\EE$ of locally presentable
$\bbK$-linear categories, whose monoidal unit is given by the $\bbK$-linear category $\Vec_\bbK$
of vector spaces, see \cite{Kelly} and also \cite{BCJF} for a review. This symmetric monoidal
structure is linked as follows to our $\Cat$-enriched colored operad from Definition \ref{def:PrKoperad}:
By the universal property of the Kelly-Deligne tensor product, the categories of operations 
\begin{flalign}\label{eqn:DeligneKelly}
\Pr_{\bbK}\big(\substack{\TT\\ \und{\DD}}\big)\, \simeq \, 
\mathbf{Lin}_{\bbK,\cc}\Big(\bigboxtimes_{i=1}^n \DD_i,\TT\Big)
\end{flalign}
are equivalent to the categories of co-continuous $\bbK$-linear functors
out of the Kelly-Deligne tensor product. Hence, the $\Cat$-enriched colored operad $\Pr_\bbK$
can also be understood as the operad associated with the symmetric monoidal $2$-category 
$(\Pr_\bbK,\boxtimes,\Vec_\bbK)$. This alternative perspective will become useful
in some of our computations in Sections \ref{sec:trunc}, \ref{sec:gauging} and \ref{sec:localtoglobal}.
\end{rem}

Recalling Theorem \ref{theo:AQFTPFAequivalence}, 
ordinary AQFTs on $\ovr{\CC}$ are equivalently $\Alg_{\bbK}$-valued prefactorization 
algebras, i.e.\ $\AQFT(\ovr{\CC}) \cong \Alg_{\P_{\ovr{\CC}}}(\Alg_{\bbK})$. 
Replacing the target $\Alg_{\bbK}$ with $\Pr_{\bbK}$ suggests the following
\begin{defi}\label{def:2AQFT}
The $2$-category of {\em $2$-categorical algebraic quantum field theories} (2AQFTs)
on an orthogonal category $\ovr{\CC}$ is defined as
the $\Hom$-$2$-category (cf.\ Remark \ref{rem:HomCats})
\begin{flalign}
\2AQFT(\ovr{\CC}) \,:=\, \Alg_{\P_{\ovr{\CC}}}(\Pr_\bbK) \,:=\,[\P_{\ovr{\CC}},\Pr_{\bbK}]\quad,
\end{flalign}
where $\P_{\ovr{\CC}}$ is the prefactorization operad from 
Definition \ref{def:PCoperad} and $\Pr_\bbK$ is the $\Cat$-enriched colored 
operad from Definition \ref{def:PrKoperad}.
\end{defi}

\begin{rem}\label{rem:2AQFTunpacked}
Let us unpack this definition by using the definitions from Appendix \ref{app:2operads}.
\sk

An object $\AAAA\in \2AQFT(\ovr{\CC})$ is given by the following data:
\begin{enumerate}[(1)]
\item For each $c\in \CC$, a locally presentable $\bbK$-linear category $\AAAA(c)\in\Pr_\bbK$.

\item For each tuple $\und{f} = (f_1,\dots,f_n)\in\P_{\ovr{\CC}}\big(\substack{t \\ \und{c}}\big)$
of mutually orthogonal $\CC$-morphisms,  a functor (called {\em factorization product})
\begin{flalign}
\AAAA(\und{f})\,:\,\prod_{i=1}^n \AAAA(c_i)~\longrightarrow~\AAAA(t)
\end{flalign}
that is $\bbK$-linear and co-continuous in each variable. For the empty tuple
$\und{c} =\emptyset$, this defines an object $\a_t := \AAAA(\ast_t)\in\AAAA(t)$ (called {\em pointing}, see e.g.\ \cite{BZBJ})
that is associated to the only element $\ast_t\in \P_{\ovr{\CC}}\big(\substack{t \\ \emptyset}\big)$.

\item For each $\und{f}\in\P_{\ovr{\CC}}\big(\substack{t \\ \und{a}}\big)$ and $\und{\und{g}} = (\und{g}_1,\dots,\und{g}_n)\in\prod_{i=1}^n\P_{\ovr{\CC}}\big(\substack{a_i \\ \und{b}_i}\big)$, a natural isomorphism 
\begin{flalign}\label{eqn:A2coherencemaps}
\xymatrix@R=1.2em@C=2.5em{
\ar[ddrr]_-{\AAAA(\und{f}\,\und{\und{g}})} \prod\limits_{i=1}^n\prod\limits_{j=1}^{k_i} \AAAA(b_{ij}) \ar[rr]^-{\AAAA(\und{\und{g}})\,:=\,\prod_i \AAAA(\und{g}_i)}~&~~&~\prod\limits_{i=1}^n\AAAA(a_i)\ar[dd]^-{\AAAA(\und{f})} 
\ar@{=>}[dl]_-{\AAAA^2_{(\und{f},\und{\und{g}})}}\\
~&~\phantom{x}~&~\\
 ~&~~&~\AAAA(t)
}
\end{flalign}

\item For each $t\in \CC$, a natural isomorphism
\begin{flalign}\label{eqn:A0coherencemaps}
\xymatrix@R=0.5em@C=3em{
~& \ar@{=>}[dd]_-{\AAAA^0_t}&~\\
\AAAA(t) \ar@/^1.5pc/[rr]^-{\id_{\AAAA(t)}} \ar@/_1.5pc/[rr]_-{\AAAA(\id_t)}~&&~\AAAA(t)\\
~&&~
}
\end{flalign}

\item For each $\und{f}\in\P_{\ovr{\CC}}\big(\substack{t \\ \und{c}}\big)$ 
and permutation $\sigma \in \Sigma_n$, a natural isomorphism
\begin{flalign}\label{eqn:Asigmacoherencemaps}
\xymatrix@R=1.2em@C=2.5em{
\ar[rrdd]_-{\AAAA(\und{f}\sigma)} \prod\limits_{i=1}^n \AAAA(c_{\sigma(i)}) \ar[rr]^-{\flip_\sigma}~&~~&~\prod\limits_{i=1}^n \AAAA(c_{i})\ar[dd]^-{\AAAA(\und{f})} \ar@{=>}[dl]_-{\AAAA^\sigma_{\und{f}}}\\
~&~\phantom{x}~&~\\
~&~~&~\AAAA(t)
}
\end{flalign}
\end{enumerate}
These data are required to satisfy the axioms from Definition \ref{def:pseudomorphism}.
\sk

A $1$-morphism $\zeta : \AAAA\to \BBBB$ in $\2AQFT(\ovr{\CC})$ is given by the following data:
\begin{enumerate}[(1)]
\item For each $c\in\CC$, a co-continuous $\bbK$-linear functor $\zeta_c : \AAAA(c)\to\BBBB(c)$.

\item For each $\und{f}\in\P_{\ovr{\CC}}\big(\substack{t \\ \und{c}}\big)$, a natural isomorphism
\begin{flalign}\label{eqn:2AQFTmorphismcoherence}
\xymatrix@C=5em{
\ar[d]_-{\AAAA(\und{f})}\prod\limits_{i=1}^n\AAAA(c_i) \ar[r]^-{\prod_i\zeta_{c_i}}~&~\prod\limits_{i=1}^n\BBBB(c_i)\ar[d]^-{\BBBB(\und{f})} \ar@{=>}[dl]_-{\zeta_{\und{f}}~}\\
\AAAA(t)\ar[r]_-{\zeta_t}~&~\BBBB(t)
}
\end{flalign}
Note that, for $\und{f} = \ast_t\in \P_{\ovr{\CC}}\big(\substack{t \\ \emptyset}\big)$, this amounts to an 
isomorphism $\zeta_{\ast_t} : \b_t
\stackrel{\cong}{\longrightarrow} \zeta_t(\a_t)$ in $\BBBB(t)$ 
from the pointing $\b_t = \BBBB(\ast_t)\in\BBBB(t)$
to the image of the pointing $\a_t=\AAAA(\ast_t)\in \AAAA(t)$
under the functor $\zeta_t : \AAAA(t)\to\BBBB(t)$.
\end{enumerate}
These data are required to satisfy the axioms from Definition \ref{def:pseudotransformation}.
\sk

A $2$-morphism $\Gamma : \zeta \Rightarrow \kappa$ between two $1$-morphisms
$\zeta,\kappa  : \AAAA\to\BBBB$ in $\2AQFT(\ovr{\CC})$ is given by the following data:
\begin{enumerate}[(1)]
\item For each $c\in\CC$, a natural transformation
\begin{flalign}\label{eqn:2AQFT2morphism}
\xymatrix@R=0.5em@C=3em{
~&~\ar@{=>}[dd]_-{\Gamma_c}~&~\\
\AAAA(c) \ar@/^1.5pc/[rr]^-{\zeta_c} \ar@/_1.5pc/[rr]_-{\kappa_c}~&~~&~ \BBBB(c)\\
~&~~&~
}
\end{flalign}
\end{enumerate}
These data are required to satisfy the axioms from Definition \ref{def:modification}.
\end{rem}

\begin{rem}
Category-valued prefactorization algebras were studied before in the context of 
factorization homology of $2$-manifolds \cite{BZBJ,BZBJ2}.
Our framework for 2AQFTs  allows us to interpret 
the examples studied in these papers as $2$-dimensional topological AQFTs.
This is achieved by considering the orthogonal category $\ovr{\Man_2}$ of $2$-dimensional 
(oriented or framed) manifolds, with orthogonality relation given by disjointness, and restricting 
to topological theories by considering locally constant prefactorization algebras,
i.e.\  prefactorization algebras that assign to every isotopy equivalence $f : M\to N$ in $\ovr{\Man_2}$
an equivalence $\AAAA(f) : \AAAA(M)\to \AAAA(N)$ in the $2$-category $\Pr_\bbK$.
\end{rem}


\section{\label{sec:trunc}Inclusion-truncation biadjunction}
In this section we explore the relationship between ordinary AQFTs and our 
concept of 2AQFTs from Definition \ref{def:2AQFT}. We
shall show that every $\AAAA\in\2AQFT(\ovr{\CC})$ has an underlying ordinary AQFT
$\pi(\AAAA) \in\AQFT(\ovr{\CC})$, which we call the truncation  of
$\AAAA$. Our truncation construction is given by a $2$-functor
$\pi : \2AQFT(\ovr{\CC})\to\AQFT(\ovr{\CC})$. We shall also define,
for every $\AAA\in\AQFT(\ovr{\CC})$,  a 2AQFT $\iota(\AAA)\in\2AQFT(\ovr{\CC})$ that 
assigns to each object $c\in\CC$ the locally presentable $\bbK$-linear category $\iota(\AAA)(c)= \Mod_{\AAA(c)}$
of right $\AAA(c)$-modules. This construction is given by an
inclusion pseudo-functor $\iota : \AQFT(\ovr{\CC})\to\2AQFT(\ovr{\CC})$.
Inclusion and truncation are compatible with each other in the sense
that they determine a biadjunction $\iota \dashv \pi$, 
see e.g.\ \cite{Street,Street2} and also \cite{biadjointtriangles}
for the relevant bicategorical background. We prove that this biadjunction exhibits
$\AQFT(\ovr{\CC})$ as a coreflective full $2$-subcategory of $\2AQFT(\ovr{\CC})$.
The conceptual meaning and relevance of this result is as follows: On the one hand, 
ordinary AQFTs can be studied equally well inside the $2$-category of 2AQFTs
by applying the fully faithful inclusion pseudo-functor $\iota : \AQFT(\ovr{\CC})\to\2AQFT(\ovr{\CC})$.
There is no loss of information in doing so, because the unit
$\eta : \id \Rightarrow \pi \,\iota$ of the biadjunction is a natural isomorphism
and hence one can recover every $\AAA\in\AQFT(\ovr{\CC})$ 
from its corresponding 2AQFT $\iota(\AAA)$ by applying the truncation $2$-functor. 
On the other hand, the $2$-category $\2AQFT(\ovr{\CC})$ has in general also 
objects that {\em do not} lie in the essential image of the inclusion pseudo-functor $\iota$.  
These are the genuine 2AQFTs that are not fully determined by their truncation 
to an ordinary AQFT. We refer to Section \ref{sec:gauging} for concrete examples.

\subsection{\label{subsec:truncation}Truncation}
Given any $\AAAA\in\2AQFT(\ovr{\CC})$, we define its truncation
$\pi(\AAAA)\in\AQFT(\ovr{\CC})$,  which is an ordinary AQFT,  
by providing the required data listed in Remark \ref{rem:PFAunpacked}:
\begin{enumerate}[(1)]
\item For each $c\in\CC$, we set 
\begin{flalign}
\pi(\AAAA)(c)\,:=\, \End(\a_c)\,:=\, \AAAA(c)(\a_c,\a_c)
\end{flalign}
to be the endomorphism algebra of the pointing $\a_c\in\AAAA(c)$. (Note that 
this is an associative and unital $\bbK$-algebra, because $\AAAA(c)$ is 
a $\bbK$-linear category.)

\item For each non-empty tuple $\und{f} = (f_1,\dots,f_n)\in \P_{\ovr{\CC}}\big(\substack{t \\ \und{c}}\big)$
of mutually orthogonal $\CC$-morphisms, the given functor $\AAAA(\und{f}):\prod_{i=1}^n\AAAA(c_i)\to \AAAA(t)$
restricts to endomorphism algebras as $\AAAA(\und{f}) : \prod_{i=1}^n \End(\a_{c_i})\to 
\End\big(\AAAA(\und{f})(\a_{c_1},\dots,\a_{c_n})\big)$. Because 
$\AAAA(\und{f})$ is $\bbK$-linear in each variable,
we obtain an $\Alg_\bbK$-morphism $\AAAA(\und{f}) : 
\bigotimes_{i=1}^n \End(\a_{c_i})\to \End\big(\AAAA(\und{f})(\a_{c_1},\dots,\a_{c_n})\big)$ 
from the tensor product algebra. The coherence map in \eqref{eqn:A2coherencemaps} 
that is associated to $(\und{f},\ast_{\und{c}}) := (\und{f},(\ast_{c_1},\dots,\ast_{c_n}))$
provides an isomorphism $\AAAA^2_{(\und{f},\ast_{\und{c}})} : \AAAA(\und{f})(\a_{c_1},\dots,\a_{c_n})\to\a_t $
in the category $\AAAA(t)$, which we use to define the $\Alg_\bbK$-morphism
\begin{flalign}
\nn \pi(\AAAA)(\und{f})\,:\, \bigotimes_{i=1}^n \pi(\AAAA)(c_i)~&\longrightarrow~\pi(\AAAA)(t)\quad,\\
h_1\otimes \cdots \otimes h_n ~&\longmapsto ~\AAAA^2_{(\und{f},\ast_{\und{c}})} \circ \AAAA(\und{f})(h_1,\dots,h_n)\circ (\AAAA^2_{(\und{f},\ast_{\und{c}})})^{-1}\quad, \label{eqn:piA}
\end{flalign}
where $\circ$ denotes composition in $\AAAA(t)$.
As noted in Remark \ref{rem:PFAunpacked}, the $\Alg_\bbK$-morphism
$\pi(\AAAA)(\ast_t) : \bbK\to \pi(\AAAA)(t)$ associated to the empty tuple 
$\ast_t \in \P_{\ovr{\CC}}\big(\substack{t \\ \emptyset}\big)$ is the unit 
$\id_{\a_t}$ of $\pi(\AAAA)(t)$.
\end{enumerate}

Using the axioms of 2AQFTs from Remark \ref{rem:2AQFTunpacked},
it is easy to check that $\pi(\AAAA)$ satisfies 
the axioms of $\Alg_\bbK$-valued prefactorization algebras from Remark \ref{rem:PFAunpacked}.
Hence, $\pi(\AAAA)\in\AQFT(\ovr{\CC})$ is an AQFT by Theorem \ref{theo:AQFTPFAequivalence}.
\sk

Let us consider now a $1$-morphism $\zeta: \AAAA\to \BBBB$ in $\2AQFT(\ovr{\CC})$.
For each $c\in\CC$, the $\bbK$-linear functor $\zeta_c : \AAAA(c)\to \BBBB(c)$ restricts to
endomorphism algebras as $\zeta_c : \End(\a_c)\to \End(\zeta_c(\a_c))$.
The coherence map in \eqref{eqn:2AQFTmorphismcoherence}
that is associated to $\ast_c$ provides an isomorphism $\zeta_{\ast_c} : \b_c\to \zeta_c(\a_c)$
in the category $\BBBB(c)$, which we use to define the $\Alg_\bbK$-morphism
\begin{flalign}
\nn \pi(\zeta)_c\,:\, \pi(\AAAA)(c)~&\longrightarrow~\pi(\BBBB)(c)\quad,\\
h~&\longmapsto~(\zeta_{\ast_c})^{-1}\circ \zeta_c(h)\circ \zeta_{\ast_c}\quad. \label{eqn:pizeta}
\end{flalign}
Using the axioms of $1$-morphisms of 2AQFTs from Remark \ref{rem:2AQFTunpacked},
it is easy to check that $\pi(\zeta) : \pi(\AAAA)\to\pi(\BBBB)$ is a morphism
of $\Alg_\bbK$-valued prefactorization algebras in the sense of Remark \ref{rem:PFAunpacked},
and hence by Theorem \ref{theo:AQFTPFAequivalence} a morphism of AQFTs.
\sk

Let $\Gamma : \zeta\Rightarrow \kappa$ be a $2$-morphism between
two $1$-morphisms $\zeta,\kappa :\AAAA\to\BBBB$ in $\2AQFT(\ovr{\CC})$.
Using the axioms from Remark \ref{rem:2AQFTunpacked},
we obtain a commutative diagram
\begin{flalign}\label{eqn:zetakappadiagramTMP}
\xymatrix@C=6em{
\ar[d]_-{\zeta_{\ast_c}}^-{\cong}\b_c \ar[r]^-{=} ~&~\b_c\ar[d]_-{\cong}^-{\kappa_{\ast_c}}\\
\zeta_c(\a_c)\ar[r]_-{\Gamma_c}~&~ \kappa_c(\a_c)
}
\end{flalign}
in the category $\BBBB(c)$, where isomorphisms are indicated by $\cong$.
Hence $\Gamma_c$ in this diagram is an isomorphism too. 
From \eqref{eqn:zetakappadiagramTMP} and \eqref{eqn:pizeta}, we
compute
\begin{flalign}
\nn \pi(\kappa)_c (h)&= (\kappa_{\ast_c})^{-1}\circ \kappa_c(h)\circ \kappa_{\ast_c}
=(\zeta_{\ast_c})^{-1}\circ(\Gamma_c)^{-1} \circ \kappa_c(h)\circ\Gamma_c\circ \zeta_{\ast_c}\\
&=(\zeta_{\ast_c})^{-1}\circ \zeta_c(h)\circ \zeta_{\ast_c} = \pi(\zeta)_c(h)\quad,
\end{flalign}
where in the third step we used that \eqref{eqn:2AQFT2morphism} is a natural transformation.
Hence, $\pi(\kappa) = \pi(\zeta) : \pi(\AAAA)\to \pi(\BBBB)$ define the same morphism
in $\AQFT(\ovr{\CC})$ and we can set $\pi(\Gamma) := \Id : \pi(\zeta)\Rightarrow \pi(\kappa)$.
\begin{propo}\label{prop:truncation2functor}
For every orthogonal category $\ovr{\CC}$, the construction above 
defines a $2$-functor
\begin{flalign}
\pi \,:\, \2AQFT(\ovr{\CC})~\longrightarrow~\AQFT(\ovr{\CC})\quad,
\end{flalign}
which we call the  truncation $2$-functor.
\end{propo}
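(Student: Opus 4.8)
The plan is to verify that the three assignments defined above---on objects $\AAAA \mapsto \pi(\AAAA)$, on $1$-morphisms $\zeta \mapsto \pi(\zeta)$, and on $2$-morphisms $\Gamma \mapsto \pi(\Gamma) := \Id$---are compatible with all the compositions and units of the source $2$-category $\2AQFT(\ovr{\CC})$ and the (locally discrete) target $2$-category $\AQFT(\ovr{\CC})$, so that they jointly constitute a (strict) $2$-functor. The bulk of the definitional work has already been carried out in the paragraphs preceding the statement: we have checked that $\pi(\AAAA)$ satisfies the prefactorization algebra axioms of Remark \ref{rem:PFAunpacked} (hence lies in $\AQFT(\ovr{\CC})$ by Theorem \ref{theo:AQFTPFAequivalence}), that $\pi(\zeta)$ is a morphism of such prefactorization algebras, and that $\pi$ sends any $2$-morphism to an identity in a consistent way (using the commuting square \eqref{eqn:zetakappadiagramTMP} to conclude $\pi(\kappa)=\pi(\zeta)$). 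What remains is the functoriality bookkeeping.

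First I would record that $\AQFT(\ovr{\CC})$, being an ordinary $1$-category, is a locally discrete $2$-category: its only $2$-morphisms are identities. Consequently the data of a $2$-functor $\pi$ into it reduce to (a) a map on objects, (b) a map on $1$-morphisms, and (c) the requirement that $\pi$ preserves identity $1$-morphisms and horizontal composition of $1$-morphisms \emph{strictly}; the assignment on $2$-morphisms is then forced to be $\Gamma \mapsto \Id$, and the $2$-functor coherence constraints are automatically trivial. So the genuine content is: $\pi(\id_{\AAAA}) = \id_{\pi(\AAAA)}$ and $\pi(\kappa \circ \zeta) = \pi(\kappa) \circ \pi(\zeta)$ for composable $1$-morphisms $\zeta : \AAAA \to \BBBB$ and $\kappa : \BBBB \to \CCCC$.

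For the identity, the $1$-morphism $\id_{\AAAA}$ has component functors $\id_{\AAAA(c)}$ and coherence isomorphisms $(\id_{\AAAA})_{\ast_c} = \id_{\a_c}$; plugging these into \eqref{eqn:pizeta} gives $\pi(\id_{\AAAA})_c(h) = \id_{\a_c}^{-1} \circ \id(h) \circ \id_{\a_c} = h$, i.e.\ $\pi(\id_{\AAAA}) = \id_{\pi(\AAAA)}$. For composition, I would spell out the component functors and coherence data of $\kappa \circ \zeta$ according to Definition \ref{def:pseudotransformation}: the component at $c$ is $\kappa_c \circ \zeta_c$, and the coherence isomorphism at $\ast_c$ is the pasting of $\zeta_{\ast_c}$ with $\kappa_c$ applied appropriately, concretely $(\kappa \circ \zeta)_{\ast_c} = \kappa_c(\zeta_{\ast_c}) \circ \kappa_{\ast_c}$ as an isomorphism from the pointing of $\CCCC(c)$ to $\kappa_c(\zeta_c(\a_c))$. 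Substituting into \eqref{eqn:pizeta} and using functoriality of $\kappa_c$ (so that $\kappa_c(\zeta_c(h))$ conjugated by $\kappa_c(\zeta_{\ast_c})$ equals $\kappa_c\big((\zeta_{\ast_c})^{-1}\circ \zeta_c(h)\circ \zeta_{\ast_c}\big) = \kappa_c(\pi(\zeta)_c(h))$) yields $\pi(\kappa\circ\zeta)_c(h) = (\kappa_{\ast_c})^{-1} \circ \kappa_c(\pi(\zeta)_c(h)) \circ \kappa_{\ast_c} = \pi(\kappa)_c(\pi(\zeta)_c(h))$, which is exactly $\big(\pi(\kappa)\circ\pi(\zeta)\big)_c(h)$.

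The main obstacle, such as it is, is purely notational rather than conceptual: one must correctly extract the component functors and the coherence $2$-isomorphisms of composite $1$-morphisms from the general (bicategorical) definition of composition of pseudo-natural-transformation-like data in Appendix \ref{app:2operads}, and then track which way the various isomorphisms $\zeta_{\ast_c}$, $\kappa_{\ast_c}$ point so that the conjugations in \eqref{eqn:pizeta} telescope correctly. Once the convention for composing the $\ast_c$-coherence data is pinned down, the computation is a one-line application of functoriality of $\kappa_c$ as above. No new idea beyond the already-established prefactorization algebra translation (Theorem \ref{theo:AQFTPFAequivalence}) is needed, so I would keep the written proof short, essentially asserting that the identity- and composition-preservation follow by the displayed direct computation together with the observation that $\AQFT(\ovr{\CC})$ is locally discrete, which trivializes all higher coherence conditions.
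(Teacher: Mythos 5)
Your proposal is correct and matches the paper's (implicit) approach: the paper states Proposition \ref{prop:truncation2functor} without a separate proof, having already asserted in Section \ref{subsec:truncation} that the object-, $1$-morphism- and $2$-morphism-level assignments are well defined, and the remaining functoriality bookkeeping is exactly the routine check you carry out. Your explicit verification that $\pi$ strictly preserves identities and composition (via the telescoping conjugation by $\zeta_{\ast_c}$ and $\kappa_{\ast_c}$, and the local discreteness of $\AQFT(\ovr{\CC})$ trivializing all higher coherences) is a faithful and slightly more detailed account of what the paper leaves to the reader.
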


\subsection{\label{subsec:inclusion}Inclusion}
Let $\AAA\in\AQFT(\ovr{\CC})$ be an ordinary AQFT, regarded as an $\Alg_\bbK$-valued
prefactorization algebra via Theorem \ref{theo:AQFTPFAequivalence}. We define
its inclusion $\iota(\AAA)\in\2AQFT(\ovr{\CC})$ by providing
the data listed in Remark \ref{rem:2AQFTunpacked}:
\begin{enumerate}[(1)]
\item For each $c\in\CC$, we set 
\begin{flalign}
\iota(\AAA)(c)\,:=\, \Mod_{\AAA(c)}
\end{flalign} 
to be the $\bbK$-linear category of right $\AAA(c)$-modules. This is a  locally presentable $\bbK$-linear category, 
see e.g.\ \cite{BCJF}.

\item For each non-empty tuple $\und{f} = (f_1,\dots,f_n)\in\P_{\ovr{\CC}}\big(\substack{t \\ \und{c}}\big)$
of mutually orthogonal $\CC$-morphisms, the given $\Alg_\bbK$-morphism 
$\AAA(\und{f}) : \bigotimes_{i=1}^n \AAA(c_i)\to \AAA(t)$ induces a 
restriction functor $\AAA(\und{f})^\ast : \Mod_{\AAA(t)}\to \Mod_{\Motimes_{i=1}^n \AAA(c_i)}$,
which admits a left adjoint functor (called the {\em induced module functor})
\begin{flalign}\label{eqn:inducedmodfunctor}
\AAA(\und{f})_! = (-)\otimes_{\Motimes_{i=1}^n \AAA(c_i)}^{} \AAA(t) \,:\, \Mod_{\Motimes_{i=1}^n \AAA(c_i)}~
\longrightarrow~\Mod_{\AAA(t)}\quad.
\end{flalign}
The latter functor is clearly $\bbK$-linear and co-continuous.
Observe further that the functor $\otimes^{n} : 
\prod_{i=1}^n \Vec_\bbK\to\Vec_\bbK\,,~(V_1,\dots V_n)\mapsto V_1\otimes \cdots \otimes V_n$
taking $n$-ary tensor products of vector spaces induces a functor
\begin{flalign}
\otimes^n\,:\, \prod\limits_{i=1}^n\Mod_{\AAA(c_i)}~\longrightarrow~\Mod_{\Motimes_{i=1}^n \AAA(c_i)}\quad
\end{flalign}
that is $\bbK$-linear and co-continuous in each variable. We define by composition
\begin{flalign}
\iota(\AAA)(\und{f})\,:\, \xymatrix@C=3em{
 \prod\limits_{i=1}^n\Mod_{\AAA(c_i)} \ar[r]^-{\otimes^n}~&~\Mod_{\Motimes_{i=1}^n \AAA(c_i)} \ar[r]^-{\AAA(\und{f})_!}~&~\Mod_{\AAA(t)}
}\quad.
\end{flalign}
For the empty tuple $\und{c}=\emptyset$, we set the pointing 
$\iota(\AAA)(\ast_t) := \AAA(t) \in \Mod_{\AAA(t)}$
to be the rank $1$ free $\AAA(t)$-module.

\item The coherence natural isomorphisms in \eqref{eqn:A2coherencemaps} are given by pasting of
\begin{flalign}
\xymatrix@R=1.2em@C=1.5em{
\ar[ddrr]_-{\otimes^{\sum k_i}}\prod\limits_{i=1}^n \prod\limits_{j=1}^{k_i}\Mod_{\AAA(b_{ij})} \ar[rr]^-{\prod_i \otimes^{k_i}}~&~~&~ \prod\limits_{i=1}^n\Mod_{\Motimes_{j=1}^{k_i}\AAA(b_{ij})} \ar[rr]^-{\prod_i \AAA(\und{g}_i)_!}\ar[dd]_-{\otimes^n} \ar@{=>}[dl]_-{(\star)}~&~~&~\prod\limits_{i=1}^{n} \Mod_{\AAA(a_i)}\ar[dd]^-{\otimes^n} \ar@{=>}[ddll]_-{(\star\star)}\\
~&~\phantom{x}~&~~&~~&~\\
~&~~&~   \ar[ddrr]_-{\AAA(\und{f}\, \und{\und{g}})_!} \Mod_{\Motimes_{i=1}^n\Motimes_{j=1}^{k_i}\AAA(b_{ij})} \ar[rr]^-{(\Motimes_i \AAA(\und{g}_i))_!} ~&~~&~ \Mod_{\Motimes_{i=1}^n \AAA(a_i)}\ar[dd]^-{\AAA(\und{f})_!} \ar@{=>}[dl]_-{(\star\star\star)} \\
~&~~&~~&~~&~\\
~&~~&~~&~~&~ \Mod_{\AAA(t)}
}
\end{flalign}
The natural isomorphisms $(\star)$ and $(\star\star)$ are canonically determined
by the coherence isomorphisms for tensor products. (Recall that
the induced module functor \eqref{eqn:inducedmodfunctor} is given by a relative tensor product.)
The natural isomorphism $({\star}{\star}{\star})$ is canonically determined by 
uniqueness (up to a unique natural isomorphism)
of left adjoint functors and the strict composition property 
$(\bigotimes_i \AAA(\und{g}_i))^\ast~\AAA(\und{f})^\ast = \big(\AAA(\und{f})~
\bigotimes_i \AAA(\und{g}_i)\big)^\ast = \AAA(\und{f}\,\und{\und{g}})^\ast$ of the right adjoints, see also
\eqref{eqn:PFAaxiomcomposition}.

\item The coherence natural isomorphisms in \eqref{eqn:A0coherencemaps} are canonically determined
by uniqueness of left adjoint functors
and the strict identity property $\AAA(\id_t)^\ast =\id_{\AAA(t)}^\ast =\id_{\Mod_{\AAA(t)}}$ of the right adjoints,
see also \eqref{eqn:PFAaxiomunit}.

\item The coherence natural isomorphisms in \eqref{eqn:Asigmacoherencemaps} are given by pasting of
\begin{flalign}
\xymatrix@R=1.2em@C=1.5em{
\ar[dd]_-{\otimes^n} \prod\limits_{i=1}^n \Mod_{\AAA(c_{\sigma(i)})} \ar[rr]^-{\flip_\sigma}~&~~&~\prod\limits_{i=1}^n \Mod_{\AAA(c_i)}\ar[dd]^-{\otimes^n} \ar@{=>}[ddll]_-{(\star)~~}\\
~&~~&~\\
\ar[ddrr]_-{\AAA(\und{f}\sigma)_!}\Mod_{\Motimes_{i=1}^n \AAA(c_{\sigma(i)})} \ar[rr]^-{(\tau_\sigma)_!}~&~~&~ \Mod_{\Motimes_{i=1}^n \AAA(c_i)}\ar[dd]^-{\AAA(\und{f})_!} \ar@{=>}[dl]_-{(\star\star)}\\
~&~~&~\\
~&~~&~\Mod_{\AAA(t)}
}
\end{flalign}
The natural isomorphism $(\star)$ is canonically determined
by the coherence isomorphisms for tensor products
and the natural isomorphism $(\star\star)$ is canonically determined by uniqueness 
of left adjoint functors and the strict permutation property
$(\tau_\sigma)^\ast~\AAA(\und{f})^\ast = \big(\AAA(\und{f})~
\tau_\sigma\big)^\ast = \AAA(\und{f}\sigma)^\ast$ of the right adjoints,
see also \eqref{eqn:PFAaxiompermutation}.
\end{enumerate}

Since the coherences in (3-5) are canonically given by coherence isomorphisms, 
one confirms that $\iota(\AAA)\in\2AQFT(\ovr{\CC})$ satisfies the axioms of 2AQFTs from
Remark \ref{rem:2AQFTunpacked}.
\sk

Let us consider now a morphism $\zeta: \AAA\to\BBB$ in $\AQFT(\ovr{\CC})$.
Then the following data defines a $1$-morphism $\iota(\zeta) : \iota(\AAA)\to\iota(\BBB)$
in $\2AQFT(\ovr{\CC})$, see also Remark \ref{rem:2AQFTunpacked}:
\begin{enumerate}[(1)]
\item For each $c\in\CC$, we set
\begin{flalign}
\iota(\zeta)_c \,:= \, (\zeta_c)_!\, : \,\Mod_{\AAA(c)}~\longrightarrow~\Mod_{\BBB(c)}
\end{flalign}
to be the $\bbK$-linear and co-continuous induced module functor 
along the given $\Alg_\bbK$-morphism $\zeta_c:\AAA(c)\to\BBB(c)$.

\item The coherence natural isomorphisms in \eqref{eqn:2AQFTmorphismcoherence} are given by pasting of
\begin{flalign}
\xymatrix@C=5em{
\ar[d]_-{\otimes^n}\prod\limits_{i=1}^n \Mod_{\AAA(c_i)}\ar[r]^-{\prod_i (\zeta_{c_i})_!}~&~\prod\limits_{i=1}^n \Mod_{\BBB(c_i)} \ar@{=>}[dl]_-{(\star)~~} \ar[d]^-{\otimes^n}\\
\ar[d]_-{\AAA(\und{f})_!}\Mod_{\Motimes_{i=1}^n\AAA(c_i)} \ar[r]^-{(\Motimes_i \zeta_{c_i})_!}~&~\Mod_{\Motimes_{i=1}^n\BBB(c_i)}\ar[d]^-{\BBB(\und{f})_!} \ar@{=>}[dl]_-{(\star\star)~~}\\
\Mod_{\AAA(t)} \ar[r]_-{(\zeta_t)_!}~&~\Mod_{\BBB(t)}
}
\end{flalign}
where $(\star)$ is canonically determined by the coherence isomorphisms for tensor products
and $(\star\star)$ is canonically determined by uniqueness
of left adjoint functors and the strict naturality property
$(\bigotimes_i \zeta_{c_i})^\ast~\BBB(\und{f})^\ast = \big(\BBB(\und{f})~
\bigotimes_i \zeta_{c_i}\big)^\ast = \big(\zeta_t~\AAA(\und{f}) \big)^\ast = \AAA(\und{f})^\ast~(\zeta_t)^\ast$ 
of the right adjoints, see also \eqref{eqn:PFAmapaxiom}.
\end{enumerate}

\begin{propo}\label{prop:inclusionpseudofunctor}
For every orthogonal category $\ovr{\CC}$, the construction above 
defines a pseudo-functor
\begin{flalign}
\iota \,:\, \AQFT(\ovr{\CC})~\longrightarrow~\2AQFT(\ovr{\CC})\quad,
\end{flalign}
which we call the inclusion pseudo-functor.
\end{propo}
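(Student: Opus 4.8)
The plan is to verify that the data defining $\iota$ on objects and $1$-morphisms (as constructed just above the proposition) assemble into a pseudo-functor, which amounts to producing coherent compositor and unitor $2$-cells for $\iota$ and checking the three pseudo-functor axioms. First I would record the obvious unit datum: for $\AAA\in\AQFT(\ovr{\CC})$, the identity $\AQFT(\ovr{\CC})$-morphism $\id_\AAA$ is sent to the family of induced-module functors $((\id_{\AAA(c)})_!)_{c\in\CC}$, and there is a canonical natural isomorphism $(\id_{\AAA(c)})_!\cong\id_{\Mod_{\AAA(c)}}$ coming from the fact that $(-)\otimes_{\AAA(c)}\AAA(c)$ is naturally isomorphic to the identity; this provides the unitor $\iota(\id_\AAA)\cong\id_{\iota(\AAA)}$, and its coherence with the structure $2$-cells in \eqref{eqn:2AQFTmorphismcoherence} follows, again, from uniqueness of left adjoints together with the strict identity property $\id_{\AAA(t)}^\ast=\id_{\Mod_{\AAA(t)}}$.

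Next I would construct the compositor: given composable $\AQFT(\ovr{\CC})$-morphisms $\zeta:\AAA\to\BBB$ and $\xi:\BBB\to\CCC$, the underlying functors satisfy $(\xi_c)_!\,(\zeta_c)_!\cong(\xi_c\,\zeta_c)_!$ by the standard ``transitivity of induction'' natural isomorphism for relative tensor products, equivalently by uniqueness of left adjoints applied to the strict equality $(\zeta_c)^\ast\,(\xi_c)^\ast=(\xi_c\,\zeta_c)^\ast$ of restriction functors. I would then check that this family of natural isomorphisms is compatible with the coherence $2$-cells \eqref{eqn:2AQFTmorphismcoherence} attached to $\iota(\zeta)$, $\iota(\xi)$ and $\iota(\xi\zeta)$, i.e.\ that the relevant pasting diagram of $2$-cells built from the $(\star)$'s and $(\star\star)$'s commutes. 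The key principle throughout is that every $2$-cell appearing in the definition of $\iota$ is \emph{canonically} determined — either by the coherence isomorphisms of the iterated tensor product of vector spaces, or by uniqueness up to unique natural isomorphism of left adjoints — so that any two parallel pastings built from such canonical cells automatically agree. This reduces the pseudo-functor axioms (associativity coherence for the compositors, and the two unit coherences) to equalities between canonical $2$-cells, which hold by uniqueness.

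Concretely, the verification I would carry out has three parts, matching the axioms in the definition of pseudo-functor (cf.\ the appendix): (a) the pentagon-type axiom relating the compositors for a triple of composable $1$-morphisms $\AAA\xrightarrow{\zeta}\BBB\xrightarrow{\xi}\CCC\xrightarrow{\chi}\DDD$, which follows since both sides are the canonical comparison between $(\chi_c\,\xi_c\,\zeta_c)_!$ and the triple composite of induction functors; (b) the left and right unit axioms, which follow from the canonicity of the unitor $(\id)_!\cong\id$; and (c) naturality of the compositor with respect to $2$-morphisms in $\AQFT(\ovr{\CC})$ — but here $\AQFT(\ovr{\CC})$ is a $1$-category (it has only identity $2$-morphisms), so this is vacuous. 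I would also spell out that the pointings are respected: the comparison $\iota(\zeta)_t$ on the empty tuple sends the free module $\AAA(t)\in\Mod_{\AAA(t)}$ to $(\zeta_t)_!(\AAA(t))=\AAA(t)\otimes_{\AAA(t)}\BBB(t)\cong\BBB(t)$, which is exactly the pointing of $\iota(\BBB)(t)$, and this isomorphism is again canonical.

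I expect the main obstacle to be purely bookkeeping: writing out the large pasting diagrams of $2$-cells (of the kind already displayed in the construction of $\iota$) for the compositor-compatibility and associativity axioms, and checking that the $(\star)$-type cells (tensor-product coherence) and the $(\star\star)$/$(\star\star\star)$-type cells (uniqueness of left adjoints against strict composition of restriction functors) fit together. There is no genuine mathematical difficulty — the strictness of the right-adjoint side, i.e.\ that restriction functors compose \emph{strictly} ($\AAA(\und{f}\,\und{\und g})^\ast=\AAA(\und{f})^\ast\,(\bigotimes_i\AAA(\und g_i))^\ast$ on the nose) and that induced-module functors are determined up to \emph{unique} natural isomorphism by being left adjoint — is exactly what forces all coherence diagrams to commute. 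So the proof is essentially: ``all the required $2$-cells are canonical, hence all coherence axioms hold by uniqueness'', with the diagram-chasing relegated to a routine check left to the reader.
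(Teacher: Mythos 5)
Your proposal is correct and follows essentially the same route as the paper, which leaves the verification implicit precisely because all coherence cells ($(\star)$-type tensor coherences and $(\star\star)$-type cells from uniqueness of left adjoints against strictly composing restriction functors) are canonical. Your explicit identification of the compositor $(\xi_c)_!\,(\zeta_c)_!\cong(\xi_c\,\zeta_c)_!$ and unitor $(\id_{\AAA(c)})_!\cong\id_{\Mod_{\AAA(c)}}$ as the reason $\iota$ is only a pseudo-functor (rather than a strict $2$-functor) is exactly the intended content.
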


\subsection{\label{subsec:biadjunction}Biadjunction}
We now prove that the pseudo-functors in Propositions \ref{prop:truncation2functor} and
\ref{prop:inclusionpseudofunctor} determine a biadjunction, with the inclusion 
$\iota:\AQFT(\ovr{\CC})\to\2AQFT(\ovr{\CC})$ as the left adjoint and the truncation 
$\pi: \2AQFT(\ovr{\CC})\to\AQFT(\ovr{\CC})$ as the right adjoint.
\sk

We describe first the unit $\eta: \id\Rightarrow \pi\,\iota$ of this biadjunction,
which is easier than the counit $\epsilon : \iota\,\pi\Rightarrow \id$ because 
$\AQFT(\ovr{\CC})$ is just a $1$-category, hence $\eta$ is a natural transformation
between ordinary functors.
Let $\AAA\in \AQFT(\ovr{\CC})$ be an ordinary AQFT. 
From the explicit descriptions of $\pi$ and $\iota$ in Sections \ref{subsec:truncation} and
\ref{subsec:inclusion}, we observe that
\begin{flalign}
(\pi\,\iota(\AAA))(c)\,=\,\End(\AAA(c)) \,= \, \Mod_{\AAA(c)}\big(\AAA(c),\AAA(c)\big)
\end{flalign}
is the endomorphism algebra of the rank $1$ free module $\AAA(c)\in\Mod_{\AAA(c)}$,
for every $c\in\CC$. We define the $\AAA$-component $\eta_\AAA : \AAA\to \pi\,\iota(\AAA)$
of the unit $\eta$ as the $\AQFT(\ovr{\CC})$-morphism determined
by the component $\Alg_\bbK$-morphisms
\begin{flalign}\label{eqn:etacomponents}
(\eta_\AAA)_c \,:\, \AAA(c)~\longrightarrow~\End(\AAA(c))~~,\quad a~\longmapsto~\mu_{\AAA(c)}(a\otimes -)\quad,
\end{flalign}
for all $c\in\CC$, where explicitly $\mu_{\AAA(c)}(a\otimes -) :\AAA(c)\to\AAA(c)\,,~a^\prime\mapsto a\,a^\prime$
is the right $\AAA(c)$-module endomorphism given by left multiplication by $a\in\AAA(c)$.
Naturality of $\eta_{\AAA}$ in $\AAA\in\AQFT(\ovr{\CC})$ is obvious, hence we have constructed
the desired natural transformation $\eta:\id \Rightarrow \pi\,\iota$. We further observe that
$\eta$ is a natural isomorphism because each of its components \eqref{eqn:etacomponents} 
is an isomorphism, with inverse given by the $\Alg_\bbK$-morphism
$(\eta_\AAA)_c^{-1} : \End(\AAA(c))\to \AAA(c)\,,~h\mapsto h(1_{\AAA(c)})$ that 
evaluates an endomorphism $h$ on the unit element $1_{\AAA(c)}\in\AAA(c)$.
\sk

Using the natural transformation $\eta: \id \Rightarrow \pi\,\iota$,
we can define, for every $\AAA\in\AQFT(\ovr{\CC})$ and $\BBBB\in\2AQFT(\ovr{\CC})$,
a functor between $\Hom$-categories
\begin{flalign}\label{eqn:biadjunctionfunctor}
\widetilde{(-)}\,:\,
\2AQFT(\ovr{\CC})\big(\iota(\AAA),\BBBB\big)~\longrightarrow~\AQFT(\ovr{\CC})\big(\AAA,\pi(\BBBB)\big)
\quad,
\end{flalign}
where we note that the target is a discrete category,  i.e.\ a category with only identity morphisms.
To a $1$-morphism $\zeta : \iota(\AAA)\to \BBBB$ in $\2AQFT(\ovr{\CC})$,
this functor assigns the $\AQFT(\ovr{\CC})$-morphism
\begin{flalign}
\widetilde{\zeta} \,:\, \xymatrix@C=2em{
\AAA \ar[r]^-{\eta_\AAA}~&~\pi\,\iota(\AAA) \ar[r]^-{\pi(\zeta)}~&~\pi(\BBBB)
}\quad.
\end{flalign}
Given any $2$-morphism $\Gamma : \zeta\Rightarrow\kappa$ between 
$1$-morphism $\zeta, \kappa : \iota(\AAA)\to \BBBB$ in $\2AQFT(\ovr{\CC})$,
we have already seen in Section \ref{subsec:truncation}
that $\pi(\zeta) = \pi(\kappa)$, hence setting $\widetilde{\Gamma} = 
\Id : \widetilde{\zeta}\Rightarrow \widetilde{\zeta}=\widetilde{\kappa} $
consistently defines the functor \eqref{eqn:biadjunctionfunctor}.
\begin{theo}\label{theo:biadjunction}
Let $\ovr{\CC}$ be any orthogonal category. Then
the functor \eqref{eqn:biadjunctionfunctor} is an equivalence of categories,
for every $\AAA\in\AQFT(\ovr{\CC})$ and $\BBBB\in\2AQFT(\ovr{\CC})$.
As a consequence, we obtain a biadjunction
\begin{flalign}
\xymatrix{
\iota \,:\, \AQFT(\ovr{\CC}) ~\ar@<0.5ex>[r]&\ar@<0.5ex>[l]  ~\2AQFT(\ovr{\CC}) \,:\,\pi
}\quad,
\end{flalign}
whose left adjoint is the inclusion pseudo-functor
from Proposition \ref{prop:inclusionpseudofunctor}
and whose right adjoint is the truncation $2$-functor from
Proposition \ref{prop:truncation2functor}.
Because the unit $\eta: \id\Rightarrow \pi\,\iota$ is a natural isomorphism, 
this biadjunction exhibits the category $\AQFT(\ovr{\CC})$ 
of ordinary AQFTs as a coreflective full $2$-subcategory of the $2$-category
$\2AQFT(\ovr{\CC})$.
\end{theo}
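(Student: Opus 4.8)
The plan is to establish the equivalence of categories in \eqref{eqn:biadjunctionfunctor} directly, from which the biadjunction $\iota\dashv\pi$ follows by the standard bicategorical characterization (see \cite{Street,Street2}), and then read off the coreflectivity statement from the fact that $\eta$ has already been shown to be a natural isomorphism. First I would fix $\AAA\in\AQFT(\ovr{\CC})$ and $\BBBB\in\2AQFT(\ovr{\CC})$ and construct an explicit quasi-inverse $\Phi$ to $\widetilde{(-)}$. Given an $\AQFT(\ovr{\CC})$-morphism $\phi:\AAA\to\pi(\BBBB)$, I would build a $1$-morphism $\Phi(\phi):\iota(\AAA)\to\BBBB$ of 2AQFTs as follows: at each $c\in\CC$, the component $\phi_c:\AAA(c)\to\End(\b_c)=\pi(\BBBB)(c)$ is an algebra map, hence equips the pointing $\b_c\in\BBBB(c)$ with the structure of a right $\AAA(c)$-module object, i.e.\ a $\bbK$-linear functor $\Mod_{\AAA(c)}\to\BBBB(c)$ out of the free module; since $\BBBB(c)$ is cocomplete this extends (uniquely up to unique isomorphism, by the universal property of $\Mod_{\AAA(c)}$ as the free cocompletion generated by the pointed algebra) to a co-continuous $\bbK$-linear functor $\Phi(\phi)_c:\Mod_{\AAA(c)}\to\BBBB(c)$ sending $\AAA(c)\mapsto\b_c$. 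The coherence isomorphisms $\Phi(\phi)_{\und{f}}$ in \eqref{eqn:2AQFTmorphismcoherence} would then be obtained from the coherence data $\BBBB^2$, $\BBBB^\sigma$ of $\BBBB$ together with the compatibility of $\phi$ with factorization products, again invoking uniqueness of left adjoints to pin them down.

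Next I would verify that $\Phi$ and $\widetilde{(-)}$ are mutually quasi-inverse. One direction, $\widetilde{\Phi(\phi)}=\phi$, is essentially immediate: unpacking $\widetilde{\Phi(\phi)}=\pi(\Phi(\phi))\circ\eta_\AAA$ and using that $\Phi(\phi)_c$ sends the free module $\AAA(c)$ to $\b_c$ with the module action given by $\phi_c$, one recovers $\phi_c$ on the generator $1_{\AAA(c)}$, which determines the algebra map. The other direction, $\Phi(\widetilde\zeta)\cong\zeta$, is where a $2$-isomorphism rather than equality is needed; the key point is that a co-continuous $\bbK$-linear functor out of $\Mod_{\AAA(c)}$ is determined up to unique natural isomorphism by its restriction to the rank-$1$ free module together with the induced action of $\End(\AAA(c))=\AAA(c)$, so comparing $\zeta_c$ and $\Phi(\widetilde\zeta)_c$ on that generator produces the required component isomorphisms, and these are forced to be compatible with the coherence data. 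For morphisms of $\Hom$-categories one uses the fact, already established in Section \ref{subsec:truncation}, that $2$-morphisms between $1$-morphisms out of an $\iota(\AAA)$ are trivial after applying $\pi$, matching the fact that the target of \eqref{eqn:biadjunctionfunctor} is a set.

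From the family of equivalences \eqref{eqn:biadjunctionfunctor}, natural in $\AAA$ and $\BBBB$ in the appropriate pseudo sense, the biadjunction $\iota\dashv\pi$ follows: this is precisely the hom-wise characterization of a biadjunction between bicategories, with $\iota$ left adjoint. The unit is the $\eta$ constructed above and the counit $\epsilon:\iota\,\pi\Rightarrow\id$ is obtained by transporting identities through the equivalences, i.e.\ $\epsilon_\BBBB:=\Phi(\id_{\pi(\BBBB)})$. Finally, since $\eta:\id\Rightarrow\pi\,\iota$ is a natural isomorphism (shown by exhibiting the evaluation-at-unit inverse), $\iota$ is fully faithful in the bicategorical sense, so $\AQFT(\ovr{\CC})$ embeds as a full $2$-subcategory of $\2AQFT(\ovr{\CC})$; a biadjunction whose unit is invertible is exactly a coreflection, which gives the last assertion.

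I expect the main obstacle to be the coherence bookkeeping in defining $\Phi$ on $1$-morphisms and in checking that $\Phi(\widetilde\zeta)\cong\zeta$ as $1$-morphisms of 2AQFTs — that is, verifying that the natural isomorphisms produced by ``uniqueness of left adjoints'' satisfy the pseudo-transformation axioms of Definition \ref{def:pseudotransformation} and assemble into a genuine $2$-isomorphism. The conceptual content is light (everything is governed by the universal property of $\Mod_{\AAA(c)}$ together with the strictness of restriction functors along algebra maps), but making the $2$-categorical naturality of the whole family \eqref{eqn:biadjunctionfunctor} precise, so that it legitimately yields a biadjunction rather than merely a level-wise equivalence, is the step requiring genuine care.
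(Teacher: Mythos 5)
Your proposal is correct and follows essentially the same route as the paper: the key input in both is the universal property of $\Mod_{\AAA(c)}$ relative to its one-object full subcategory on the rank-$1$ free module (the restriction equivalence $\mathbf{Lin}_{\bbK,\cc}(\Mod_{\AAA(c)},\BBBB(c))\simeq\mathbf{Lin}_{\bbK}(\catEnd(\AAA(c)),\BBBB(c))$ from \cite{BCJF}, which is exactly your ``free cocompletion'' step), together with the reduction of all coherence checks to components at the generators via the Kelly--Deligne tensor product. The only difference is packaging — you build an explicit quasi-inverse $\Phi$, whereas the paper verifies faithfulness, fullness and essential surjectivity of $\widetilde{(-)}$ directly, with your $\Phi(\phi)$ coinciding with the $\kappa$ constructed in the essential-surjectivity step.
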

\begin{proof}
Let us recall from \cite{BCJF} the following fact: 
For any associative and unital $\bbK$-algebra $A \in \Alg_{\bbK}$, denote by 
$\catEnd(A)$ the full $\bbK$-linear subcategory of $\Mod_A \in \Pr_{\bbK}$ 
on the object $A \in \Mod_A$. (Note that $\catEnd(A)$ is just the endomorphism
algebra $\End(A)$ regarded as a $\bbK$-linear category with only one object.) 
Then, for any locally presentable $\bbK$-linear category $\DD \in \Pr_{\bbK}$, the restriction 
along the inclusion $\catEnd(A) \subseteq \Mod_A$ of $\bbK$-linear 
categories induces an equivalence (i.e.\ a fully faithful and essentially surjective functor) 
\begin{flalign}\label{eqn:EndRestriction}
\xymatrix@C=3em{
\mathbf{Lin}_{\bbK,\cc}\big(\Mod_A,\DD\big)\ar[r]^-{\simeq}~&~
\mathbf{Lin}_{\bbK}\big(\catEnd(A),\DD\big)
}
\end{flalign}
from the full subcategory of $\Fun(\Mod_A,\DD)$ that consists of $\bbK$-linear 
and co-continuous functors to the full subcategory of $\Fun(\catEnd(A),\DD)$ 
that consists of $\bbK$-linear functors. Using this result,
we can check that the functor \eqref{eqn:biadjunctionfunctor} 
is fully faithful and essentially surjective as claimed. 
\sk

\textit{Faithful:} Let $\Gamma, \Delta: \zeta \Rightarrow \kappa$ 
be $2$-morphisms between the $1$-morphisms
$\zeta,\kappa: \iota(\AAA) \to \BBBB$ in $\2AQFT(\ovr{\CC})$. 
(Note that $\widetilde{\Gamma} = \widetilde{\Delta}$ is automatic.) 
From \eqref{eqn:zetakappadiagramTMP} we deduce that, for every 
$c \in \CC$,  the morphisms $\Gamma_c = \Delta_c: \zeta_c(\AAA(c)) \to 
\kappa_c(\AAA(c))$ in $\BBBB(c)$ coincide.
This means that the two natural transformations 
$\Gamma_c, \Delta_c: \zeta_c \Rightarrow \kappa_c$ between the co-continuous 
$\bbK$-linear functors $\zeta_c, \kappa_c: \Mod_{\AAA(c)} \to \BBBB(c)$ 
have the same restriction along the inclusion 
$\catEnd(\AAA(c)) \subseteq \Mod_{\AAA(c)}$. 
Recalling that the restriction functor \eqref{eqn:EndRestriction} 
is faithful, we conclude that $\Gamma_c = \Delta_c: \zeta_c \Rightarrow \kappa_c$ coincide 
as natural transformations,  for all $c \in \CC$, and 
hence that $\Gamma = \Delta: \zeta \Rightarrow \kappa$ 
coincide as $2$-morphisms in $\2AQFT(\ovr{\CC})$. 
This shows that the functor \eqref{eqn:biadjunctionfunctor} is faithful.
\sk

\textit{Full:} Let $\zeta,\kappa: \iota(\AAA) \to \BBBB$ be $1$-morphisms 
in $\2AQFT(\ovr{\CC})$ such that $\widetilde\zeta = \widetilde\kappa: 
\AAA \to \pi(\BBBB)$ in $\AQFT(\ovr{\CC})$. (Recall that
$\AQFT(\ovr{\CC})$ only has identity $2$-morphisms.) 
For each $c \in \CC$, consider the morphism $\kappa_{\ast_c} \circ 
(\zeta_{\ast_c})^{-1}: \zeta_c(\AAA(c)) \to \kappa_c(\AAA(c))$ in $\BBBB(c)$. 
Using $\widetilde\zeta = \widetilde\kappa$, one shows that this defines a 
natural transformation between the restrictions along the inclusion functor 
$\catEnd(\AAA(c)) \subseteq \Mod_{\AAA(c)}$ of the co-continuous 
$\bbK$-linear functors $\zeta_c, \kappa_c: \Mod_{\AAA(c)} \to \BBBB(c)$. 
Recalling that the restriction functor \eqref{eqn:EndRestriction} is full, 
there exists a natural transformation $\Gamma_c: \zeta_c \Rightarrow \kappa_c$ 
whose $\AAA(c)$-component is $\kappa_{\ast_c} \circ (\zeta_{\ast_c})^{-1}$. 
We still have to prove that the collection $\Gamma_c$, for all $c \in \CC$, defines 
a $2$-morphism $\Gamma: \zeta \Rightarrow \kappa$ between the 1-morphisms 
$\zeta,\kappa: \iota(\AAA) \to \BBBB$ in $\2AQFT(\ovr{\CC})$. This amounts to
showing that the diagram
\begin{flalign}\label{eqn:biadjunctionproofdiagram}
\xymatrix@C=5em{
\ar[d]_-{\zeta_{\und{f}}}\BBBB(\und{f})\, \prod_i \zeta_{c_i} \ar[r]^-{\Id\ast \prod_i \Gamma_{c_i}}~&~\BBBB(\und{f})\, \prod_i \kappa_{c_i}\ar[d]^-{\kappa_{\und{f}}}\\
\zeta_t\,\iota(\AAA)(\und{f}) \ar[r]_-{\Gamma_t\ast \Id}~&~ \kappa_t\, \iota(\AAA)(\und{f})
}
\end{flalign}
of natural transformations commutes, for all $\und{f}\in\P_{\ovr{\CC}}\big(\substack{t \\ \und{c}}\big)$.
Since this diagram lives in the category
$\Pr_{\bbK}\big(\substack{\BBBB(t)\\ \iota(\AAA)(\und{c})}\big)$,
i.e.\ all functors are $\bbK$-linear and co-continuous in each variable,
we deduce from the equivalences in \eqref{eqn:DeligneKelly} and \eqref{eqn:EndRestriction} 
that the diagram \eqref{eqn:biadjunctionproofdiagram} of natural transformations
commutes if and only if the corresponding component
on the object $\prod_{i=1}^n \AAA(c_i) \in \prod_{i=1}^n \Mod_{\AAA(c_i)}$ commutes.
This can be checked directly by using that $\zeta,\kappa: \iota(\AAA) \to \BBBB$ 
are $1$-morphisms in $\2AQFT(\ovr{\CC})$. (Here the axioms of 
Definition \ref{def:pseudotransformation} enter explicitly.) 
This shows that the functor \eqref{eqn:biadjunctionfunctor} is full.
\sk

\textit{Essentially surjective:} Let $\zeta: \AAA \to \pi(\BBBB)$ 
be any $\AQFT(\ovr{\CC})$-morphism. We denote its component
$\Alg_{\bbK}$-morphisms by $\zeta_c: \AAA(c) \to \End(\b_c)$, for all $c \in \CC$. Recalling that 
$\AAA(c) \in \Alg_{\bbK}$ is naturally isomorphic via $\eta$ (cf.\ \eqref{eqn:etacomponents})
to the endomorphism algebra $\End(\AAA(c))$ of the object $\AAA(c) \in \Mod_{\AAA(c)}$, 
we define a functor $\widehat{\zeta}_c: \catEnd(\AAA(c)) \to \BBBB(c)$ 
that sends the only object $\AAA(c) \in \catEnd(\AAA(c))$ to $\b_c \in \BBBB(c)$ 
and each $\catEnd(\AAA(c))$-morphism $h \in \End(\AAA(c))$ 
to the $\BBBB(c)$-morphism $\widehat{\zeta}_c(h) := \zeta_c((\eta_\AAA)_c^{-1} (h))$. This functor is by construction 
$\bbK$-linear, i.e.\ $\widehat{\zeta}_c \in \mathbf{Lin}_{\bbK}(\catEnd(\AAA(c)),\BBBB(c))$. 
Since the functor \eqref{eqn:EndRestriction} is essentially surjective, 
there exists a $\bbK$-linear and co-continuous functor 
$\kappa_c \in \mathbf{Lin}_{\bbK,\mathrm{c}}(\Mod_{\AAA(c)},\BBBB(c))$ 
and a natural isomorphism $\kappa_{\ast_c}$ from the functor $\widehat{\zeta}_c$ 
to the restriction along the inclusion 
$\catEnd(\AAA(c)) \subseteq \Mod_{\AAA(c)}$ of the functor $\kappa_c$. 
Because $\AAA(c) \in \catEnd(\AAA(c))$ is the only object, 
the natural isomorphism $\kappa_{\ast_c}$ consists of a single
$\BBBB(c)$-isomorphism $\kappa_{\ast_c}: \b_c \to \kappa_c(\AAA(c))$, with
naturality being encoded in the condition $\kappa_c(h) \circ \kappa_{\ast_c} = 
\kappa_{\ast_c} \circ \widehat{\zeta}_c(h)$, for all $h \in \End(\AAA(c))$. 
Note that we have just constructed part of the data defining a $1$-morphism 
$\kappa: \iota(\AAA) \to \BBBB$ in $\2AQFT(\ovr{\CC})$ (cf.\ Remark \ref{rem:2AQFTunpacked}). 
To complete the data, we have to construct, for each $\und{f}\in\P_{\ovr{\CC}}\big(\substack{t \\ \und{c}}\big)$, 
a natural isomorphism $\kappa_{\und{f}} : \BBBB(\und{f})\, \prod_i  \kappa_{c_i} 
\Rightarrow \kappa_t\, \iota(\AAA)(\und{f})$ 
between functors from $\prod_{i=1}^n \Mod_{\AAA(c_i)}$ to $\BBBB(t)$
that are $\bbK$-linear and co-continuous in each variable.
Using again the equivalences in \eqref{eqn:DeligneKelly} and \eqref{eqn:EndRestriction},
this problem is equivalent to constructing a $\BBBB(t)$-isomorphism, denoted 
with a slight abuse of notation also by
$\kappa_{\und{f}}: \BBBB(\und{f})\left(\prod_i \kappa_{c_i} \big( \prod_i \AAA(c_i) \big)\right) 
\to  \kappa_t \left( \iota(\AAA)(\und{f}) \big( \prod_i \AAA(c_i) \big)\right)$, fulfilling the naturality condition 
$\kappa_t \left( \iota(\AAA)(\und{f}) (\und{h}) \right) \circ \kappa_{\und{f}} 
= \kappa_{\und{f}} \circ \BBBB(\und{f}) \left( \prod_i \kappa_{c_i} (\und{h}) \right)$, 
for all $\und{h} \in \prod_{i=1}^n \End(\AAA(c_i))$.
We define the $\BBBB(t)$-isomorphism $\kappa_{\und{f}}$ according to 
\begin{flalign}
\xymatrix{
\BBBB(\und{f})\left(\prod_i \kappa_{c_i} \big( \prod_i \AAA(c_i) \big)\right) \ar@{-->}[rr]^-{\kappa_{\und{f}}} ~&~ ~&~ \kappa_t \left( \iota(\AAA)(\und{f}) \big( \prod_i \AAA(c_i) \big)\right) \ar[d]_-{\cong}^-{\kappa_t\big( \iota(\AAA)^{2}_{(\und{f},\ast_{\und{c}})}\big)} \\ 
\BBBB(\und{f}) \big( \prod_i \b_{c_i} \big) \ar[r]_-{\BBBB^{2}_{(\und{f},\ast_{\und{c}})}} \ar[u]_-{\cong}^-{\BBBB(\und{f})\left(\prod_i \kappa_{\ast_{c_i}}\right)} ~&~ \b_t \ar[r]_-{\kappa_{\ast_t}} ~&~ \kappa_t(\AAA(t)) 
}
\end{flalign}
and observe that the required naturality condition for $\kappa_{\und{f}}$ 
follows from naturality of $\kappa_{\ast_c}$ and of $\zeta$. 
This provides us with the desired natural isomorphism 
$\kappa_{\und{f}} : \BBBB(\und{f})\, \prod_i \kappa_{c_i} \Rightarrow \kappa_t\, \iota(\AAA)(\und{f})$
and hence completes the data needed to define a $1$-morphism 
$\kappa: \iota(\AAA) \to \BBBB$ in $\2AQFT(\ovr{\CC})$. 
It remains to check that the relevant axioms hold (cf.\ Remark \ref{rem:2AQFTunpacked} and 
Definition \ref{def:pseudotransformation}).
Using once again the equivalences in \eqref{eqn:DeligneKelly} and \eqref{eqn:EndRestriction}, 
confirming these axioms can be reduced to checking that certain diagrams in $\BBBB(t)$ commute.
This can be done directly by using that $\iota(\AAA)$ and $\BBBB$ are objects 
in $\2AQFT(\ovr{\CC})$. (Here the axioms of Definition \ref{def:pseudomorphism} 
enter explicitly.) 
Since by construction the $\AQFT(\ovr{\CC})$-morphisms $\widetilde{\kappa} = \zeta: \AAA \to \pi(\BBBB)$ 
coincide, this shows that the functor \eqref{eqn:biadjunctionfunctor} is essentially surjective. 
\end{proof}

\begin{rem}\label{rem:counit}
The counit $\epsilon : \iota\,\pi \Rightarrow \id$ of the inclusion-truncation biadjunction 
is determined implicitly by Theorem \ref{theo:biadjunction},
see e.g.\ \cite[Definition 2.5 and Remark 2.6]{biadjointtriangles} for further details on biadjunctions. 
Its $\BBBB$-component $\epsilon_{\BBBB} : \iota\,\pi(\BBBB)\to \BBBB$ 
is a $1$-morphism in $\2AQFT(\ovr{\CC})$ which maps under the equivalence of categories 
in  \eqref{eqn:biadjunctionfunctor} to the identity $\widetilde{\epsilon_{\BBBB}} 
= \id_{\pi(\BBBB)} : \pi(\BBBB)\to \pi(\BBBB)$ in $\AQFT(\ovr{\CC})$.
Note that the latter property determines the $1$-morphism $\epsilon_{\BBBB} $ up to invertible 
$2$-morphisms in $\2AQFT(\ovr{\CC})$.
\end{rem}

The counit allows us to detect whether an object $\BBBB\in\2AQFT(\ovr{\CC})$
lies in the essential image of the inclusion pseudo-functor $\iota : \AQFT(\ovr{\CC})\to\2AQFT(\ovr{\CC})$.
\begin{defi}\label{def:truncatedobject}
We say that $\BBBB\in\2AQFT(\ovr{\CC})$ is {\em truncated} if the
corresponding component $\epsilon_{\BBBB} : \iota\,\pi(\BBBB)\to \BBBB$ 
of the counit is an equivalence in $\2AQFT(\ovr{\CC})$. 
\end{defi}

This means that a truncated
2AQFT $\BBBB$ is fully determined by its truncation $\pi(\BBBB)\in\AQFT(\ovr{\CC})$,
which is an ordinary AQFT, as it can be reconstructed (up to equivalence) by applying the inclusion pseudo-functor $\iota$.
Our goal in Section \ref{sec:gauging} is to construct examples 
of 2AQFTs that are {\em not} truncated in the above sense.


\section{\label{sec:gauging}Gauging construction and orbifold 2AQFTs}
We present a construction of 2AQFTs from the data
of an ordinary AQFT $\AAA\in\AQFT(\ovr{\CC})$ that is endowed 
with an action of a finite group $G$.  (A generalization 
to infinite groups or also algebraic groups is possible,  however we prefer to 
restrict ourselves to finite groups for which we can prove the main Theorem \ref{theo:gaugingtruncated}
of this section,  see also the related Theorem \ref{theo:Hopfgalois}.)
This construction can be interpreted physically
as a local gauging of $\AAA$ with respect to $G$
and the resulting 2AQFT as the corresponding 
categorified orbifold theory, see  Proposition \ref{prop:orbifoldtruncation} and
Remark \ref{rem:orbifoldinterpretation} below.
Let us start with introducing some relevant terminology.
\begin{defi}\label{def:equivariantAQFT}
Let $\ovr{\CC}$ be an orthogonal category and $G$ a finite group. 
A {\em $G$-equivariant AQFT on $\ovr{\CC}$} is a pair $(\AAA,\rho)$ 
consisting of an object $\AAA\in \AQFT(\ovr{\CC})$
and a representation $\rho : G \to \Aut(\AAA)$ of $G$ as
natural automorphisms of $\AAA$. A morphism $\zeta : (\AAA,\rho)\to (\BBB,\sigma)$
of $G$-equivariant AQFTs is an $\AQFT(\ovr{\CC})$-morphism
$\zeta : \AAA\to \BBB$ that commutes with the $G$-actions,
i.e.\ $\zeta\, \rho(g) = \sigma(g)\, \zeta$, for all $g\in G$.
We denote by $G{\text{-}}\AQFT(\ovr{\CC})$ the category of $G$-equivariant AQFTs on $\ovr{\CC}$.
\end{defi}

\begin{rem}\label{rem:equivariantAQFT}
Our choice of terminology in Definition \ref{def:equivariantAQFT} 
is motivated by the following equivalent perspective on $G$-equivariant AQFTs.
Let us denote by $\Rep_{\bbK}(G)$ the category of
$\bbK$-linear representations of $G$. Recall that this is a (closed) 
symmetric monoidal category with monoidal product given by
the tensor product $V\otimes W$ of representations,
monoidal unit given by the trivial representation $\bbK$ and symmetric braiding given by the flip map.
Hence, we can introduce the category $G{\text{-}}\Alg_{\bbK} := \Alg_{\As}(\Rep_{\bbK}(G))$
of associative and unital algebras in $\Rep_{\bbK}(G)$, which are also called
$G$-equivariant associative and unital $\bbK$-algebras. 
(Note that for the trivial group $G=\{ e\}$, this is just the category $\Alg_{\bbK}$
that we considered in Section \ref{sec:1AQFT}.)
It is then easy to check that a $G$-equivariant AQFT $(\AAA,\rho)$, 
as introduced in Definition \ref{def:equivariantAQFT},
is the same data as a functor $\CC\to G{\text{-}}\Alg_{\bbK}$ to the category
of $G$-equivariant associative and unital $\bbK$-algebras that satisfies
the $\perp$-commutativity property from Definition \ref{def:AQFT}.
From this perspective, morphisms of $G$-equivariant AQFTs are 
just natural transformations of functors from  $\CC$ to $G{\text{-}}\Alg_{\bbK}$.
\end{rem}

Given any $G$-equivariant AQFT $(\AAA,\rho)\in G{\text{-}}\AQFT(\ovr{\CC})$, one can construct 
its associated {\em orbifold theory} $\AAA_0^G\in\AQFT(\ovr{\CC})$ 
by taking the invariants of the action $\rho : G\to\Aut(\AAA)$, see e.g.\ \cite{Xu,Mueger}.
We have added a subscript $0$ to emphasize that, 
as we shall show in Proposition \ref{prop:orbifoldtruncation}, 
the traditional orbifold theory $\AAA_0^G\cong\pi(\AAA^G)$ 
is only the truncation of an in general richer {\em categorified orbifold theory}
$\AAA^G\in \2AQFT(\ovr{\CC})$. 
(We shall illustrate later in Example \ref{ex:RepGFredCat}
that these higher categorical structures are particularly important for local-to-global constructions.)
The latter will be described by a gauging construction
that we develop in this section. We also refer to Remark \ref{rem:orbifoldinterpretation} 
for a physical interpretation of our gauging construction and the resulting categorified orbifold theory.
\sk

As a preparation, let us briefly recall some standard facts and constructions from 
the theory of equivariant algebras and modules. 
As already mentioned in Remark \ref{rem:equivariantAQFT},
the representation category $\Rep_{\bbK}(G)$ of a finite group $G$
is a (closed) symmetric monoidal category, hence we can introduce the category
$G{\text{-}}\Alg_{\bbK}$ of {\em $G$-equivariant associative and unital $\bbK$-algebras}.
Analogously to the non-equivariant case $\Alg_\bbK$ from Section \ref{sec:1AQFT},
this category is symmetric monoidal with monoidal product the tensor
product algebra $A\otimes B$ (endowed with the tensor product $G$-action),
monoidal unit the algebra $\bbK$ (endowed with the trivial $G$-action) and 
symmetric braiding given by the flip map.
For every object $A\in G{\text{-}}\Alg_{\bbK}$, one can introduce
the locally presentable $\bbK$-linear category 
$G{\text{-}}\Mod_A := \Mod_A(\Rep_{\bbK}(G))$ of {\em $G$-equivariant right $A$-modules}.
An object in this category is an object
$V\in \Rep_{\bbK}(G)$ together with a $\Rep_{\bbK}(G)$-morphism
$V\otimes A\to V$ that satisfies the usual axioms of a right $A$-action.
Morphisms are $\Rep_{\bbK}(G)$-morphism that preserve the right $A$-actions.
Similarly to the non-equivariant case, given any morphism $\kappa :A\to B$ 
in $G{\text{-}}\Alg_{\bbK}$, one can define 
a $\bbK$-linear induced module functor $\kappa_! = (-)\otimes_{A}^{}B :  G{\text{-}}\Mod_A \to G{\text{-}}\Mod_B$.
This functor has a right adjoint given by the 
restriction functor $\kappa^\ast : G{\text{-}}\Mod_B \to G{\text{-}}\Mod_A$.
As a consequence, $\kappa_!$ is a co-continuous $\bbK$-linear functor between
locally presentable $\bbK$-linear categories, i.e.\ a $1$-morphism
in the $2$-category $\Pr_\bbK$.
\sk

Let now $(\AAA,\rho)\in G{\text{-}}\AQFT(\ovr{\CC})$ be any $G$-equivariant AQFT on $\ovr{\CC}$.
We define its gauging $\AAA^G\in \2AQFT(\ovr{\CC})$ by a $G$-equivariant generalization
of the inclusion pseudo-functor $\iota$ from Section \ref{subsec:inclusion}. 
Concretely,  $\AAA^G$ is described by the following data as
listed in Remark \ref{rem:2AQFTunpacked}:
\begin{enumerate}[(1)]
\item For each $c\in\CC$, we set
\begin{flalign}
\AAA^G(c)\,:=\, G{\text{-}}\Mod_{\AAA(c)}
\end{flalign}
to be the locally presentable $\bbK$-linear category of $G$-equivariant
right modules over the $G$-equivariant associative and unital $\bbK$-algebra
$\AAA(c)\in G{\text{-}}\Alg_{\bbK}$.

\item For each non-empty tuple $\und{f} = (f_1,\dots,f_n)\in\P_{\ovr{\CC}}\big(\substack{t \\ \und{c}}\big)$
of mutually orthogonal $\CC$-morphisms, we define the functor
\begin{flalign}\label{eqn:AGfactorizationproduct}
\AAA^G(\und{f})\,:\, \xymatrix@C=3em{
 \prod\limits_{i=1}^n G{\text{-}}\Mod_{\AAA(c_i)} \ar[r]^-{\otimes^n}~&~G{\text{-}}\Mod_{\Motimes_{i=1}^n \AAA(c_i)}\ar[r]^-{\AAA(\und{f})_!}~&~G{\text{-}}\Mod_{\AAA(t)}
}\quad,
\end{flalign}
where $\otimes^n: (V_1,\dots,V_n)\mapsto V_1\otimes\cdots\otimes V_n$ 
is the functor assigning the $n$-ary tensor product of representations 
(equipped with the induced structure of a $G$-equivariant module 
over the tensor product of algebras) and $\AAA(\und{f})_!$
is the induced module functor for $G$-equivariant modules along
the $G{\text{-}}\Alg_{\bbK}$-morphism $\AAA(\und{f}) : \bigotimes_{i=1}^n \AAA(c_i)\to \AAA(t)$.
Note that the functor \eqref{eqn:AGfactorizationproduct} is $\bbK$-linear
and co-continuous in each variable, i.e.\ it defines a $1$-operation in $\Pr_\bbK$.
For the empty tuple $\und{c}=\emptyset$, we set the pointing to be
$\AAA^G(\ast_t) := \AAA(t) \in G{\text{-}}\Mod_{\AAA(t)}$.

\item[(3-5)] The coherence isomorphisms for $\AAA^G$ are completely analogous to
the ones for the inclusion $\iota(\AAA)\in \2AQFT(\ovr{\CC})$ from Section \ref{subsec:inclusion}.
\end{enumerate}

Let us consider now a morphism $\zeta: (\AAA,\rho)\to(\BBB,\sigma)$ in $G{\text{-}}\AQFT(\ovr{\CC})$.
Then the following data defines a $1$-morphism $\zeta^G : \AAA^G\to \BBB^G$
in $\2AQFT(\ovr{\CC})$, see also Remark \ref{rem:2AQFTunpacked}:
\begin{enumerate}[(1)]
\item For each $c\in\CC$, we set
\begin{flalign}
(\zeta^G)_c \,:= \, (\zeta_c)_!\, : \,G{\text{-}}\Mod_{\AAA(c)}~\longrightarrow~G{\text{-}}\Mod_{\BBB(c)}
\end{flalign}
to be the $\bbK$-linear and co-continuous induced module functor for $G$-equivariant modules
along the $G{\text{-}}\Alg_{\bbK}$-morphism $\zeta_c:\AAA(c)\to\BBB(c)$.

\item The coherence isomorphisms for $\zeta^G : \AAA^G\to\BBB^G$ are completely analogous to the ones for 
$\iota(\zeta):\iota(\AAA)\to\iota(\BBB)$ from Section \ref{subsec:inclusion}.
\end{enumerate}

\begin{propo}\label{prop:gaugingconstruction}
For every orthogonal category $\ovr{\CC}$ and finite group $G$, 
the construction above defines a pseudo-functor
\begin{flalign}
(-)^G \,:\, G{\text{-}}\AQFT(\ovr{\CC})~\longrightarrow~\2AQFT(\ovr{\CC})\quad,
\end{flalign}
which we call the gauging construction.
\end{propo}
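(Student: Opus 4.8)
The plan is to verify directly that the assignments described above are compatible with composition and identities, up to coherent natural isomorphisms, and thereby assemble a pseudo-functor $(-)^G : G\text{-}\AQFT(\ovr{\CC}) \to \2AQFT(\ovr{\CC})$. The essential point is that almost nothing new happens relative to the non-equivariant inclusion pseudo-functor $\iota$ from Section \ref{subsec:inclusion}: the category $\Rep_\bbK(G)$ is a (closed) symmetric monoidal locally presentable $\bbK$-linear category, and for any $A \in G\text{-}\Alg_\bbK$ the category $G\text{-}\Mod_A = \Mod_A(\Rep_\bbK(G))$ is again locally presentable $\bbK$-linear. Moreover, every construction used in Section \ref{subsec:inclusion} — relative tensor products of equivariant modules, induced module functors $\kappa_! = (-)\otimes_A B$ as left adjoints to restriction functors $\kappa^\ast$, the $n$-ary tensor product functor $\otimes^n$ — is internal to $\Rep_\bbK(G)$ and retains its co-continuity and $\bbK$-linearity in each variable. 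So the first step is simply to record that each datum listed in Remark \ref{rem:2AQFTunpacked} is well-defined: $\AAA^G(c) \in \Pr_\bbK$, the factorization products $\AAA^G(\und{f})$ of \eqref{eqn:AGfactorizationproduct} are $\bbK$-linear and co-continuous in each variable, and the pointing $\AAA^G(\ast_t) = \AAA(t)$ is the rank-one free equivariant module.

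The second step is to produce the coherence $2$-isomorphisms (3)--(5) of Remark \ref{rem:2AQFTunpacked} and check the pseudo-morphism axioms of Definition \ref{def:pseudomorphism}. Here I would argue, exactly as in Section \ref{subsec:inclusion}, that $\AAA^G{}^2_{(\und{f},\und{\und{g}})}$, $\AAA^G{}^0_t$ and $\AAA^G{}^\sigma_{\und{f}}$ are canonically determined by: (a) the coherence isomorphisms for iterated relative tensor products, and (b) uniqueness up to unique natural isomorphism of left adjoints, applied to the strict composition, identity and permutation properties $(\bigotimes_i \AAA(\und{g}_i))^\ast\,\AAA(\und{f})^\ast = \AAA(\und{f}\,\und{\und{g}})^\ast$, $\AAA(\id_t)^\ast = \id$, $(\tau_\sigma)^\ast\,\AAA(\und{f})^\ast = \AAA(\und{f}\sigma)^\ast$ of the equivariant restriction functors. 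Since these are the same formal manipulations of adjunctions and tensor-product coherences, the required diagrams commute by the same uniqueness arguments; the equivariant structure is carried along for the ride and imposes no new conditions because all functors involved are morphisms in the symmetric monoidal $2$-category built from $\Rep_\bbK(G)$.

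The third step deals with $1$-morphisms: for $\zeta : (\AAA,\rho) \to (\BBB,\sigma)$ in $G\text{-}\AQFT(\ovr{\CC})$ one sets $(\zeta^G)_c := (\zeta_c)_!$ and builds the coherence isomorphism \eqref{eqn:2AQFTmorphismcoherence} by the same pasting used for $\iota(\zeta)$, invoking uniqueness of left adjoints against the strict naturality property $(\bigotimes_i \zeta_{c_i})^\ast\,\BBB(\und{f})^\ast = \AAA(\und{f})^\ast\,(\zeta_t)^\ast$ of the right adjoints; the pseudo-transformation axioms of Definition \ref{def:pseudotransformation} then follow. Finally one checks functoriality up to coherent $2$-isomorphism: $(\id_\AAA)^G \cong \id_{\AAA^G}$ because the induced module functor along an identity is naturally isomorphic to the identity, and $(\zeta'\,\zeta)^G \cong \zeta'^G\,\zeta^G$ because $(\zeta'_c\,\zeta_c)_! \cong (\zeta'_c)_!\,(\zeta_c)_!$ canonically; one verifies that these invertible $2$-morphisms satisfy the coherence conditions making $(-)^G$ a pseudo-functor, which again reduces to standard coherence for composites of adjoints. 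The main obstacle, such as it is, is purely bookkeeping: one must be careful that the equivariant relative tensor products, the $\otimes^n$ functors, and the various adjunctions are all taken internally to $\Rep_\bbK(G)$ and that the coherence isomorphisms chosen in (3)--(5) are mutually compatible in the sense of Definition \ref{def:pseudomorphism}; but since $(-)^G$ differs from the already-established $\iota$ only by replacing $\Vec_\bbK$-enrichment with $\Rep_\bbK(G)$-enrichment throughout, no genuinely new coherence issue arises, and the proof is a routine adaptation of Propositions \ref{prop:inclusionpseudofunctor}.
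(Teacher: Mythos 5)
Your proposal is correct and takes essentially the same route as the paper: the paper gives no separate proof of Proposition \ref{prop:gaugingconstruction}, instead declaring the coherence data for $\AAA^G$ and $\zeta^G$ to be ``completely analogous'' to those of the inclusion pseudo-functor $\iota$ from Section \ref{subsec:inclusion}, with the only change being that all module categories, relative tensor products and induced/restriction adjunctions are taken internally to $\Rep_\bbK(G)$ rather than $\Vec_\bbK$. Your write-up simply makes explicit the bookkeeping that the paper leaves implicit, and identifies correctly that no new coherence issues arise.
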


The following result relates our 
gauging construction to the traditional concept of orbi\-fold theories from \cite{Xu,Mueger}. 
\begin{propo}\label{prop:orbifoldtruncation}
For every $G$-equivariant AQFT $(\AAA,\rho)\in G{\text{-}}\AQFT(\ovr{\CC})$,
there exists a natural isomorphism $\AAA_0^G \cong \pi(\AAA^G)$
in $\AQFT(\ovr{\CC})$ between the traditional orbifold theory $\AAA_0^G$
(that assigns subalgebras of $G$-invariants)
and the truncation (cf.\ Section \ref{subsec:truncation}) of
the gauging construction $\AAA^G\in\2AQFT(\ovr{\CC})$ from Proposition \ref{prop:gaugingconstruction}.
\end{propo}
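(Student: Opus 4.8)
The plan is to construct, for each $c\in\CC$, an explicit algebra isomorphism $\theta_c:\AAA(c)^G\to\pi(\AAA^G)(c)$, to check that the family $\{\theta_c\}_{c\in\CC}$ is compatible with all factorization products, and finally to verify naturality in $(\AAA,\rho)$; by Theorem~\ref{theo:AQFTPFAequivalence} this yields the claimed natural isomorphism $\AAA_0^G\cong\pi(\AAA^G)$ of AQFTs. For the object-level statement I would start from the description of truncation in Section~\ref{subsec:truncation}: since the pointing of $\AAA^G$ at $c$ is $\a_c=\AAA(c)\in G{\text{-}}\Mod_{\AAA(c)}$, the rank one free module equipped with the canonical $G$-equivariant structure induced by $\rho$, one has $\pi(\AAA^G)(c)=\End_{G{\text{-}}\Mod_{\AAA(c)}}(\AAA(c))$. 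As in the non-equivariant situation, a right $\AAA(c)$-module endomorphism of the free module $\AAA(c)$ is left multiplication $\lambda_a:=\mu_{\AAA(c)}(a\otimes-)$ by a unique $a\in\AAA(c)$, and $\lambda_a$ commutes with the $G$-action $g\mapsto\rho(g)$ if and only if $\rho(g)(a)=a$ for all $g\in G$ (evaluate the equivariance condition on $1_{\AAA(c)}$ for one direction, and use that each $\rho(g)$ is an algebra automorphism for the other). Hence $\theta_c:\AAA_0^G(c)=\AAA(c)^G\to\pi(\AAA^G)(c)$, $a\mapsto\lambda_a$, is a well-defined algebra isomorphism --- the $G$-equivariant analogue of the unit component \eqref{eqn:etacomponents} --- with inverse $h\mapsto h(1_{\AAA(c)})$.

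Next I would check that $\{\theta_c\}$ intertwines the factorization products, which by Theorem~\ref{theo:AQFTPFAequivalence} suffices. On the orbifold side, the factorization product of $\AAA_0^G$ attached to $\und{f}=(f_1,\dots,f_n)\in\P_{\ovr{\CC}}\big(\substack{t\\ \und{c}}\big)$ is the corestriction of $\AAA(\und{f}):\bigotimes_{i}\AAA(c_i)\to\AAA(t)$ to $G$-invariants, which is well defined because $\AAA(\und{f})$ is a morphism of $G$-equivariant algebras. On the other side, $\pi(\AAA^G)(\und{f})$ is obtained from formula \eqref{eqn:piA}: it is the restriction of $\AAA^G(\und{f})=\AAA(\und{f})_!\circ\otimes^n$ to endomorphism algebras of pointings, conjugated by the coherence isomorphism of $\AAA^G$ associated to $(\und{f},\ast_{\und{c}})$, which has the form $\AAA^G(\und{f})(\AAA(c_1),\dots,\AAA(c_n))\to\AAA(t)$. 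Since the coherences of $\AAA^G$ are built exactly as those of the inclusion pseudo-functor in Section~\ref{subsec:inclusion} --- from uniqueness of left adjoints and the identification of the pointing with the free module --- this coherence isomorphism is, modulo the canonical identifications $\otimes^n(\AAA(c_1),\dots,\AAA(c_n))\cong\bigotimes_i\AAA(c_i)$ and $\bigl(\bigotimes_i\AAA(c_i)\bigr)\otimes_{\bigotimes_i\AAA(c_i)}\AAA(t)\cong\AAA(t)$, the evident module isomorphism; and under it $\AAA^G(\und{f})$ applied to $(\lambda_{a_1},\dots,\lambda_{a_n})$ becomes $\lambda_{\AAA(\und{f})(a_1\otimes\cdots\otimes a_n)}$. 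Comparing with the orbifold factorization product gives $\theta_t\circ\AAA_0^G(\und{f})=\pi(\AAA^G)(\und{f})\circ\bigl(\bigotimes_i\theta_{c_i}\bigr)$, and the unit and permutation compatibilities are checked in the same way, so $\{\theta_c\}$ is an isomorphism of $\Alg_\bbK$-valued prefactorization algebras.

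For naturality in $(\AAA,\rho)$: given a morphism $\zeta:(\AAA,\rho)\to(\BBB,\sigma)$ in $G{\text{-}}\AQFT(\ovr{\CC})$, the induced orbifold morphism $\zeta_0^G$ is the restriction of $\zeta$ to $G$-invariants, while $\pi(\zeta^G)$ is produced from \eqref{eqn:pizeta} out of $(\zeta^G)_c=(\zeta_c)_!$ and its coherence isomorphism $(\zeta^G)_{\ast_c}$. Again the coherences of $\zeta^G$ coincide with those of $\iota(\zeta)$, so $(\zeta^G)_{\ast_c}$ is the canonical identification $\BBB(c)\cong(\zeta_c)_!(\AAA(c))=\AAA(c)\otimes_{\AAA(c)}\BBB(c)$, and a direct computation with left multiplications shows $\theta_c^{\BBB}\circ(\zeta_0^G)_c=\pi(\zeta^G)_c\circ\theta_c^{\AAA}$ for every $c\in\CC$. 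Hence $\theta=\{\theta_c\}$ assembles into a natural isomorphism $(-)_0^G\cong\pi\circ(-)^G$ of functors $G{\text{-}}\AQFT(\ovr{\CC})\to\AQFT(\ovr{\CC})$, which is the assertion.

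The step I expect to be the main obstacle is the matching of factorization products: one has to unwind the coherence isomorphisms of $\AAA^G$, which are inherited verbatim from the inclusion construction and are only specified implicitly (via uniqueness of left adjoints), against the formula \eqref{eqn:piA} defining the truncated factorization products, while keeping track of all the canonical module identifications. Once the relevant coherence is identified with the evident module isomorphism the remaining checks are forced and amount to bookkeeping. A cleaner way to organize this --- paralleling the proof of Theorem~\ref{theo:biadjunction} --- is to restrict everything along the inclusions of the one-object subcategories $\catEnd(\AAA(c))\subseteq G{\text{-}}\Mod_{\AAA(c)}$ and invoke the $G$-equivariant analogue of the equivalence \eqref{eqn:EndRestriction}, which reduces all identities to statements about the single objects $\AAA(c)$.
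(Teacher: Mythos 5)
Your proposal is correct and follows essentially the same route as the paper's proof: identify $\pi(\AAA^G)(c)=\End(\AAA(c))$ in $G{\text{-}}\Mod_{\AAA(c)}$ with the $G$-invariant subalgebra via left multiplication (equivariance of $\lambda_a$ forcing $\rho(g)(a)=a$), then match the factorization products using the explicit description of the truncation and check naturality. The paper leaves the latter two steps as routine verifications, which you spell out, but there is no difference in strategy.
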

\begin{proof}
From the description of the truncation $2$-functor in Section \ref{subsec:truncation},
we obtain that $\pi(\AAA^G)(c)\,=\,\End(\AAA(c))$ is the endomorphism algebra of the 
$G$-equivariant module $\AAA(c)\in G{\text{-}}\Mod_{\AAA(c)}$, for each $c\in\CC$.
Since morphisms in $G{\text{-}}\Mod_{\AAA(c)}$ preserve by definition the $G$-action,
it follows that $\End(\AAA(c))$ is isomorphic to the subalgebra of $G$-invariants in $\AAA(c)$,
hence $\pi(\AAA^G)(c)\cong \AAA^G_0(c)$ is isomorphic to the algebra that is 
assigned by the traditional orbifold theory $\AAA^G_0$. Using further
the explicit description of the factorization products of $\pi(\AAA^G)$ 
from Section \ref{subsec:truncation}, one checks that this family of 
$\Alg_\bbK$-isomorphisms defines an $\AQFT(\ovr{\CC})$-isomorphism
$\pi(\AAA^G)\cong \AAA^G_0$. Naturality of this isomorphism 
in $(\AAA,\rho)\in G{\text{-}}\AQFT(\ovr{\CC})$ is obvious.
\end{proof}

The previous proposition provides a justification for the following
terminology.
\begin{defi}\label{def:catorbifold}
We call the gauging construction $\AAA^G\in\2AQFT(\ovr{\CC})$ 
the {\em categorified orbifold theory} associated to the $G$-equivariant AQFT 
$(\AAA,\rho)\in G{\text{-}}\AQFT(\ovr{\CC})$.
\end{defi}

\begin{rem}\label{rem:orbifoldinterpretation} 
In addition to our result in Proposition \ref{prop:orbifoldtruncation},
there is further justification for calling the 2AQFT $\AAA^G$ a 
categorified orbifold theory. The presentation in this remark is
intentionally rather informal, which is convenient
to convey our main message.
\sk

Let us recall that the field configurations 
of a classical $\sigma$-model are given by maps $\phi : \Sigma \to X$ 
from a world-sheet $\Sigma$ to a target space $X$. When a finite 
group $G$ acts on the target space $X$, one can form the orbifold
(i.e.\ quotient stack) $X//G$ and consider the corresponding
{\em orbifold $\sigma$-model} whose field configurations
are now maps $\phi : \Sigma\to X//G$ with values in a stack. As a consequence,
the ``space'' of field configurations is a stack too, namely the mapping stack
\begin{flalign}\label{eqn:fieldstack}
\mathrm{Fields}(\Sigma)\,:=\, \mathrm{Map}\big(\Sigma,X//G \big)\quad.
\end{flalign}
In order to study local aspects of this field theory,
let us introduce the orthogonal category $\ovr{\Disk(\Sigma)}$, whose underlying category $\mathbf{Disk}(\Sigma)\subseteq \Open(\Sigma)$ 
consists of open subsets $U\subseteq \Sigma$ that are diffeomorphic to a Cartesian
space $U\cong \bbR^m$ and whose orthogonal category structure is the restriction to $\mathbf{Disk}(\Sigma)$ of
the orthogonal category structure from Example \ref{ex:opens} on $\ovr{\Open(\Sigma)}$. 
Considering such subsets $U\subseteq \Sigma$,
the stack of fields is equivalent to the stacky quotient
\begin{flalign}\label{eqn:fieldstackyquotient}
\mathrm{Fields}(U) \, \simeq \, \mathrm{Map}(U,X)//G
\end{flalign}
of the {\em ordinary} mapping space $\mathrm{Map}(U,X)$ by the finite group $G$.
If we ignore for the moment the stacky quotient by $G$,
we are in the familiar scenario where the space of fields
$\mathrm{Map}(U,X)$ is just an {\em ordinary} space and not a stack.
Formal deformation quantization of such a field theory
leads to an {\em ordinary} AQFT  $\AAA\in \AQFT(\ovr{\mathbf{Disk}(\Sigma)})$, 
which in the case there are no anomalies will carry a $G$-action, i.e.\ 
$(\AAA,\rho)\in  G{\text{-}}\AQFT(\ovr{\mathbf{Disk}(\Sigma)})$.
By construction, $\AAA(U)$ is a deformation quantization of
a suitable $G$-equivariant function algebra $\O(\mathrm{Map}(U,X))$.
\sk

Things get more interesting when we consider the stacky 
quotient by $G$ in \eqref{eqn:fieldstackyquotient}.
From the perspective of \cite{Brandenburg,LurieTannaka}, which we recalled in Section \ref{sec:intro},
it is better to assign to the stack $\mathrm{Fields}(U)$ in \eqref{eqn:fieldstackyquotient} 
its category of quasi-coherent sheaves 
\begin{flalign}
\QCoh\big(\mathrm{Fields}(U)\big)\,\simeq\, \QCoh\big(\mathrm{Map}(U,X)//G\big) 
\,\simeq \, G{\text{-}}\Mod_{\O( \mathrm{Map}(U,X))}\quad,
\end{flalign}
which is the symmetric monoidal category of $G$-equivariant modules over the classical $G$-equivariant
function algebra $\O( \mathrm{Map}(U,X))$.
The $G$-equivariant quantization $\AAA(U)$ of $\O( \mathrm{Map}(U,X))$ from the previous paragraph 
then allows us to introduce the pointed category $\AAA^G(U) = G{\text{-}}\Mod_{\AAA(U)}$, which we
interpret as a pointed category quantizing the symmetric monoidal category $\QCoh(\mathrm{Fields}(U))$ (cf.\ Section \ref{sec:intro}) and recognize as our gauging construction.  
Hence, our gauging construction encodes local aspects of orbifold $\sigma$-models.
It is well-known that orbifold theories exhibit rich global phenomena,  such as the so-called twisted sectors
and anomalies, see e.g.\ \cite{JohnsonFreydOrbifold} for an excellent review.  Simple examples of some of these phenomena will be
discussed later in Section \ref{sec:localtoglobal}.
\end{rem}

We still have to address the important question whether or not it is possible
to obtain genuine non-truncated $\AAA^G\in \2AQFT(\ovr{\CC})$
from our gauging construction. This will of course depend on details
of the group $G$ and its action $\rho : G\to\Aut(\AAA)$ on $\AAA\in\AQFT(\ovr{\CC})$.
For example, if $G=\{e\}$ is the trivial group, then the gauging construction from Proposition
\ref{prop:gaugingconstruction} coincides with the inclusion pseudo-functor $\iota$
from Proposition \ref{prop:inclusionpseudofunctor}, hence gauging the trivial
group $G=\{e\}$ always leads to truncated 2AQFTs in the sense of Definition \ref{def:truncatedobject}. 
On the other hand, it is very easy to give simple examples of gaugings that define
non-truncated 2AQFTs.
\begin{ex}\label{ex:trivialAQFTgauging}
Let $G$ be a finite group and consider the trivial AQFT $\AAA=\bbK\in\AQFT(\ovr{\CC})$
that assigns $\AAA(c)=\bbK\in\Alg_\bbK$ to every $c\in\CC$. When endowed with the trivial $G$-action 
$\rho : G\to\Aut(\bbK)\,,~g\mapsto \id_{\bbK}$, this defines a $G$-equivariant AQFT
$(\bbK,\rho)\in G{\text{-}}\AQFT(\ovr{\CC})$. Note that the corresponding 
traditional orbifold theory $\bbK^G_0 =\bbK$ is of course the trivial AQFT too.
In contrast, the categorified orbifold theory $\bbK^G \in \2AQFT(\ovr{\CC})$
that is obtained from our gauging construction is much more interesting.
It assigns to every $c\in\CC$ the representation category
$\bbK^G(c) = G{\text{-}}\Mod_{\bbK} = \Rep_{\bbK}(G)$ of the group $G$
and its factorization products $\bbK^G(\und{f})= \otimes^n : \prod_{i=1}^n\Rep_{\bbK}(G) \to\Rep_{\bbK}(G)$
are given by the $n$-ary tensor products of representations. The pointing
$\bbK^G(\ast_t)=\bbK\in\Rep_{\bbK}(G)$ is given by the trivial representation.
By Proposition \ref{prop:orbifoldtruncation}, the truncation $\pi(\bbK^G)\cong \bbK^G_0 = \bbK$
of this 2AQFT is the trivial theory
\sk

Our claim is that the categorified orbifold theory $\bbK^G \in \2AQFT(\ovr{\CC})$ is not truncated, 
whenever $G\neq \{e\}$ is non-trivial. To prove this claim, we consider as explained
in Remark \ref{rem:counit} the corresponding component
$\epsilon_{\bbK^G} : \iota\,\pi(\bbK^G) \to \bbK^G$ of the counit of the inclusion-truncation biadjunction.
This is a $1$-morphism in $\2AQFT(\ovr{\CC})$ whose components
$(\epsilon_{\bbK^G})_c : \iota\,\pi(\bbK^G)(c)\simeq \Vec_\bbK \to \bbK^G(c)=\Rep_\bbK(G)$
are given by co-continuous $\bbK$-linear functors from the category of vector spaces
to the representation category of $G$. Because $1$-morphisms in $\2AQFT(\ovr{\CC})$
preserve the pointings (up to coherence isomorphisms), we know that the 
$1$-dimensional vector space $\bbK\in\Vec_\bbK$ is mapped to a {\em trivial}
representation $(\epsilon_{\bbK^G})_c (\bbK) \cong \bbK \in \Rep_{\bbK}(G)$.
Using further that every vector space $V\cong \bigoplus_{b\in B} \bbK \in\Vec_\bbK$ 
is isomorphic to a coproduct of the $1$-dimensional vector space $\bbK$ (by choosing
a basis $B$) and co-continuity of the functor $\epsilon_{\bbK^G}$, we observe that the 
essential image of $(\epsilon_{\bbK^G})_c : \Vec_\bbK \to \Rep_\bbK(G)$
lies in the full subcategory of trivial $G$-representations, hence it cannot be an equivalence
of categories as every finite group $G\neq \{e\}$ admits 
non-trivial $\bbK$-linear representations. As a consequence,  
the component $\epsilon_{\bbK^G} : \iota\,\pi(\bbK^G) \to \bbK^G$
of the counit is not an equivalence in $\2AQFT(\ovr{\CC})$ and hence
the categorified orbifold theory $\bbK^G\in\2AQFT(\ovr{\CC})$ is not truncated.
\end{ex}

Quite remarkably, it is possible to characterize precisely those $G$-equivariant
AQFTs $(\AAA,\rho)\in  G{\text{-}}\AQFT(\ovr{\CC})$ whose associated
gauging construction $\AAA^G\in \2AQFT(\ovr{\CC})$ is truncated.
Our arguments below make use of some standard terminology and results
from Hopf-Galois theory, see e.g.\ \cite{Doi} and also the review article 
\cite{Montgomery}. Let $H$ be a Hopf algebra over $\bbK$. (In our applications below,
$H = \O(G) = \mathrm{Map}(G,\bbK)$ is the function Hopf algebra of a finite group $G$.)
A {\em right $H$-comodule algebra} is an algebra $A\in\Alg_\bbK$
endowed with a right $H$-coaction $\delta : A\to A\otimes H$ 
that is an $\Alg_\bbK$-morphism. We denote by $B :=A^{\mathrm{co} H} := 
\{a\in A\,:\, \delta(a) = a\otimes 1_H\}\subseteq A$ the subalgebra of $H$-coaction invariants.
\begin{defi}\label{def:HopfGalois}
The algebra extension $B= A^{\mathrm{co} H}\subseteq A$ is called {\em $H$-Hopf-Galois}
if the canonical map
\begin{flalign}\label{eqn:canonicalmap}
\beta\,:\, A\otimes_B^{} A~\longrightarrow~A\otimes H~,~~a\otimes_B^{} a^\prime \longmapsto (a\otimes 1_H)\, \delta(a^\prime)
\end{flalign}
is bijective. 
\end{defi}
Associated to any right $H$-comodule algebra $A$ are two 
$\bbK$-linear categories of interest: First, we have the 
category $\Mod_A^H$ of {\em right $(H,A)$-Hopf modules}.
An object in this category is a right $A$-module $V\in \Mod_A$ that is endowed
with a compatible right $H$-comodule structure $\delta^V : V\to V\otimes H$,
i.e.\ $\delta^V(v\,a) = \delta^V(v)\, \delta(a)$, for all $v\in V$ and $a\in A$. The morphisms
in $\Mod_A^H$ are $\bbK$-linear maps that preserve both the $A$-actions and the $H$-coactions.
Second, we have the $\bbK$-linear category $\Mod_B$ of right modules over 
the subalgebra $B= A^{\mathrm{co} H}\subseteq A$
of $H$-coaction invariants.  These two categories are
related by a $\bbK$-linear adjunction
\begin{flalign}\label{eqn:Hopfadjunction}
\xymatrix{
\Phi  \,:\, \Mod_B ~\ar@<0.5ex>[r]&\ar@<0.5ex>[l]  ~\Mod_A^H\,:\, \Psi
}\quad,
\end{flalign}
whose left adjoint $\Phi = (-)\otimes_B^{} A$ is the induced module functor,
where $W\otimes_B^{} A$ is endowed with the right $H$-coaction $\id\otimes_B^{} \delta$, for all $W\in\Mod_B$, 
and whose right adjoint $\Psi = (-)^{\mathrm{co}H}$  is the functor
taking $H$-coaction invariants $V^{\mathrm{co} H} := \{v\in V\,:\, \delta^V(v) = v\otimes 1_H\}$,
for all $V\in\Mod_{A}^{H}$.
We shall need the following result (cf.\ \cite[Theorem 5.6]{Montgomery}), which is originally
due to Doi and Takeuchi \cite{Doi}.
\begin{theo}\label{theo:Hopfgalois}
Let $H$ be finite dimensional. Then $B= A^{\mathrm{co} H}\subseteq A$
is $H$-Hopf-Galois if and only if the counit $\epsilon : \Phi \, \Psi\Rightarrow \id$
of the adjunction \eqref{eqn:Hopfadjunction} is a natural isomorphism.
\end{theo}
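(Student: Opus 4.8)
The plan is to identify the canonical map $\beta$ with a single well-chosen component of the counit $\epsilon$, which settles one direction essentially formally, and then to run the classical Doi--Takeuchi translation-map construction for the converse. As a preliminary I would write down the adjunction \eqref{eqn:Hopfadjunction} explicitly: for $W\in\Mod_B$ the unit is $\eta_W : W\to(W\otimes_B^{}A)^{\mathrm{co}H}$, $w\mapsto w\otimes_B^{}1_A$, and for $V\in\Mod_A^H$ the counit is the right $A$-action map $\epsilon_V : V^{\mathrm{co}H}\otimes_B^{}A\to V$, $v\otimes_B^{}a\mapsto v\cdot a$. Everything below uses these two formulas, and I write $\delta(a)=a_{(0)}\otimes a_{(1)}$ for the coaction of $A$ in Sweedler notation.

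Suppose first that $\epsilon$ is a natural isomorphism. I would evaluate it on the right $(H,A)$-Hopf module $A\otimes H$ carrying the $H$-coaction $\id_A\otimes\Delta$ and the ``diagonal'' right $A$-action $(a\otimes h)\cdot a':=a\,a'_{(0)}\otimes h\,a'_{(1)}$; a short Sweedler computation confirms that this is a Hopf module. Since the coinvariants of the regular right $H$-comodule $(H,\Delta)$ are exactly $\bbK\,1_H$, one obtains $(A\otimes H)^{\mathrm{co}H}\cong A$ as right $B$-modules, and under this identification the component $\epsilon_{A\otimes H} : A\otimes_B^{}A\to A\otimes H$ is precisely the canonical map $\beta$ of \eqref{eqn:canonicalmap}. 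Hence $\beta$ is bijective, i.e.\ $B=A^{\mathrm{co}H}\subseteq A$ is $H$-Hopf-Galois.

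Conversely, suppose $\beta$ is bijective and set $\tau:=\beta^{-1}(1_A\otimes-) : H\to A\otimes_B^{}A$, written $\tau(h)=\tau(h)^1\otimes_B^{}\tau(h)^2$. Using that $\beta$ is left $A$-linear (left multiplication on the first tensor factor on both sides) and right $H$-colinear (coaction $\id\otimes_B^{}\delta$ on $A\otimes_B^{}A$ versus $\id_A\otimes\Delta$ on $A\otimes H$), one derives the standard translation-map identities, in particular $\tau(h)^1\,\tau(h)^2=\varepsilon_H(h)\,1_A$, the identity $a_{(0)}\cdot\tau(a_{(1)})=1_A\otimes_B^{}a$ for all $a\in A$, and a coassociativity-type identity for $\tau$ involving the coproduct of $H$. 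With these in hand, for each $V\in\Mod_A^H$ one defines $\kappa_V : V\to V^{\mathrm{co}H}\otimes_B^{}A$ by $\kappa_V(v):=v_{(0)}\cdot\tau(v_{(1)})^1\otimes_B^{}\tau(v_{(1)})^2$; one checks via the coassociativity identity that $v_{(0)}\cdot\tau(v_{(1)})^1$ is $H$-coinvariant, so that $\kappa_V$ is well defined, notes that $\kappa_V$ is natural in $V$, and verifies by a direct Sweedler computation that $\epsilon_V\circ\kappa_V=\id_V$ and $\kappa_V\circ\epsilon_V=\id$. Thus $\epsilon_V$ is an isomorphism for every $V$, so $\epsilon$ is a natural isomorphism. (As $H$ is finite dimensional one could alternatively dualize everything, using $\Mod_A^H\simeq\Mod_{A\,\#\,H^{\ast}}$ together with $B=A^{H^{\ast}}$, but this does not really shorten the argument.)

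The step I expect to be the main obstacle is the converse direction: isolating the exact list of translation-map identities that follow from bijectivity of $\beta$, and then the bookkeeping that $v_{(0)}\cdot\tau(v_{(1)})^1$ lies in $V^{\mathrm{co}H}$ and that $\kappa_V$ is a two-sided inverse of $\epsilon_V$. These are routine but genuinely fiddly manipulations in Sweedler calculus, whereas the forward direction (the identification $\epsilon_{A\otimes H}=\beta$) and all the adjunction bookkeeping are purely formal. Since this is a classical theorem of Doi and Takeuchi, in the write-up it would be legitimate to just refer to \cite{Doi} and \cite[Theorem~5.6]{Montgomery} for these computational details.
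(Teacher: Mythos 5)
The paper does not actually prove Theorem \ref{theo:Hopfgalois}: it imports it from the literature with a citation to \cite{Doi} and \cite[Theorem 5.6]{Montgomery}, so your closing suggestion to cite these references is precisely what the authors do. Judged as a reconstruction of the Doi--Takeuchi argument, your forward direction is correct and is the standard one: $A\otimes H$ with coaction $\id_A\otimes\Delta$ and the diagonal right $A$-action is a Hopf module, its coinvariants are $A\otimes 1_H\cong A$ as right $B$-modules, and under this identification $\epsilon_{A\otimes H}$ is literally the canonical map $\beta$ of \eqref{eqn:canonicalmap}. (Note that this implication does not use finite-dimensionality of $H$ at all.)

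The converse, however, has a genuine gap, and the tell-tale sign is that your argument nowhere uses $H$ being finite dimensional --- yet for infinite-dimensional $H$ the weak structure theorem is known to require extra hypotheses (faithful flatness of $A$ over $B$, or relative injectivity of $A$ as an $H$-comodule; this is why Schneider's structure theorems carry such assumptions), so a proof that never invokes the hypothesis cannot be complete. Concretely, the problem is the target of $\kappa_V$. Since $\tau(h)=\tau(h)^1\otimes_B^{}\tau(h)^2$ lives in $A\otimes_B^{}A$, the expression $v_{(0)}\cdot\tau(v_{(1)})^1\otimes_B^{}\tau(v_{(1)})^2$ is well defined only as an element of $V\otimes_B^{}A$, and the translation-map identities show only that its image under $\delta^V\otimes_B^{}\id_A$ equals itself tensored with $1_H$, i.e.\ that it is ``coinvariant in the first leg''. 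To promote it to an element of $V^{\mathrm{co}H}\otimes_B^{}A$ --- and, equally, to deduce $\kappa_V\circ\epsilon_V=\id$ from an identity that you verify only after mapping into $V\otimes_B^{}A$ --- you need the canonical map $V^{\mathrm{co}H}\otimes_B^{}A\to V\otimes_B^{}A$ to be injective with image exactly the first-leg coinvariants; this amounts to $A$ being flat as a left $B$-module and does not follow formally from bijectivity of $\beta$. For finite-dimensional $H$ it is supplied by the Kreimer--Takeuchi theorem (surjectivity of $\beta$ forces $A$ to be projective over $B$), whose proof runs through integrals for $H^{\ast}$ and the smash product $A\#H^{\ast}$ --- precisely the duality you set aside in your final parenthetical. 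Either insert that result as a lemma before the translation-map computation, or do as the paper does and cite \cite{Doi} outright.
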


In the case of interest to us, the Hopf algebra $H = \O(G)=\mathrm{Map}(G,\bbK)$ 
is given by the function Hopf algebra of a finite group $G$.
In particular, $H$ is finite dimensional. One easily observes that a
right $\O(G)$-coaction $\delta^V : V\to V\otimes \O(G)$
is the same datum as a group action $\rho : G\to \Aut(V)$ by the relationship
$\rho(g)(v) = v_{(0)}^{} \, \langle v_{(1)}^{},g\rangle$,
where we used Sweedler notation $\delta^V(v) = v_{(0)}^{}\otimes v_{(1)}^{}$
and the duality pairing $\langle\cdot,\cdot\rangle : \O(G)\otimes \bbK[G]\to\bbK$.
In particular, right $\O(G)$-comodule algebras are precisely $G$-equivariant 
algebras $A\in G{\text{-}}\Alg_\bbK$ and the $\bbK$-linear
category $\Mod_A^{\O(G)}$ of right $(\O(G),A)$-Hopf modules is
the locally presentable $\bbK$-linear category $G{\text{-}}\Mod_A$ of
$G$-equivariant right $A$-modules. Moreover,
the subalgebra $B=A^{\mathrm{co}H} = A^G_0\subseteq A$ of $\O(G)$-coaction 
invariants is precisely the subalgebra of $G$-invariants.
\begin{cor}\label{cor:HopfGalois}
Let $G$ be a finite group and $H = \O(G)$ the function Hopf algebra of $G$.
In this case, the adjunction in \eqref{eqn:Hopfadjunction} reads as
\begin{flalign}\label{eqn:Hopfadjunction2}
\xymatrix{
\Phi \,:\, \Mod_B ~\ar@<0.5ex>[r]&\ar@<0.5ex>[l]  ~G{\text{-}}\Mod_A\,:\,\Psi
}\quad.
\end{flalign}
This is an (adjoint) equivalence in the $2$-category
$\Pr_\bbK$ of locally presentable $\bbK$-linear categories
if and only if $B=A^{\mathrm{co}H} = A^G_0\subseteq A$ is $\O(G)$-Hopf-Galois.
\end{cor}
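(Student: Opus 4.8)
The plan is to deduce the statement directly from Theorem~\ref{theo:Hopfgalois}, after adding one observation that is special to the group case in characteristic zero. As explained in the paragraph preceding the corollary, for $H=\O(G)$ the category $\Mod_A^{H}$ of right $(\O(G),A)$-Hopf modules is identified with $G{\text{-}}\Mod_A$, the subalgebra $A^{\mathrm{co}H}$ of coaction invariants is the subalgebra $B=A^G_0$ of $G$-invariants, and the adjunction \eqref{eqn:Hopfadjunction} becomes \eqref{eqn:Hopfadjunction2}. Since $G$ is finite, $H=\O(G)$ is finite dimensional (of dimension $|G|$), so Theorem~\ref{theo:Hopfgalois} applies and tells us that the counit $\epsilon:\Phi\,\Psi\Rightarrow\id$ of \eqref{eqn:Hopfadjunction2} is a natural isomorphism if and only if $B=A^G_0\subseteq A$ is $\O(G)$-Hopf-Galois. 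Recalling that an adjunction $\Phi\dashv\Psi$ is an adjoint equivalence precisely when \emph{both} its unit $\eta:\id\Rightarrow\Psi\,\Phi$ and its counit $\epsilon$ are natural isomorphisms, it therefore suffices to show that the unit $\eta$ is \emph{always} a natural isomorphism, independently of the Hopf-Galois condition.

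To see this I would use that $\bbK$ has characteristic zero, so $|G|$ is invertible and the averaging map $e:=\tfrac{1}{|G|}\sum_{g\in G}\rho(g):A\to A$ is a well-defined idempotent endomorphism of the $B$-bimodule $A$ with image exactly $B=A^G_0$. For $W\in\Mod_B$ the unit component is $\eta_W:W\to(W\otimes_B A)^{G}$, $w\mapsto w\otimes_B 1_A$, and I claim it is bijective. On the one hand, $e$ is left $B$-linear, so $\id_W\otimes_B e:W\otimes_B A\to W\otimes_B B\cong W$ is well defined and $(\id_W\otimes_B e)\circ\eta_W=\id_W$, hence $\eta_W$ is a split monomorphism. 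On the other hand, since $b\,\rho(g)(a)=\rho(g)(b\,a)$ for $b\in B$, the map $\bar e:=\tfrac{1}{|G|}\sum_{g\in G}(\id_W\otimes_B\rho(g))$ is a well-defined idempotent endomorphism of $W\otimes_B A$ whose image is precisely the $G$-invariant subspace $(W\otimes_B A)^{G}$; a short computation on simple tensors (moving $e(a)\in B$ across $\otimes_B$) shows $\eta_W\circ(\id_W\otimes_B e)=\bar e$. Hence every $\xi\in(W\otimes_B A)^{G}=\mathrm{im}\,\bar e$ satisfies $\xi=\eta_W\big((\id_W\otimes_B e)(\xi)\big)$, so $\eta_W$ is also surjective, hence bijective; naturality in $W$ is immediate, so $\eta$ is a natural isomorphism.

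Combining the two steps, the adjunction \eqref{eqn:Hopfadjunction2} is an adjoint equivalence if and only if $\epsilon$ is a natural isomorphism, i.e.\ if and only if $B=A^G_0\subseteq A$ is $\O(G)$-Hopf-Galois. Finally, $\Phi$ is an induced-module functor, hence co-continuous and $\bbK$-linear, i.e.\ a $1$-morphism in $\Pr_\bbK$; when \eqref{eqn:Hopfadjunction2} is an adjoint equivalence, the quasi-inverse $\Psi$ is automatically co-continuous and $\bbK$-linear as well, so that an adjoint equivalence between the underlying categories is the same as an equivalence in the $2$-category $\Pr_\bbK$. I do not expect any genuine obstacle beyond bookkeeping; the only points requiring care are the well-definedness of $e$, of $\bar e$ and of $\id_W\otimes_B e$ on the relative tensor product $W\otimes_B A$, all of which reduce to the fact that $B$ consists of $G$-invariant elements of $A$.
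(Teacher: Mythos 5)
Your proof is correct and follows essentially the same route as the paper: reduce to the Doi--Takeuchi criterion (Theorem \ref{theo:Hopfgalois}) for the counit and show separately that the unit $\eta_W : W \to (W\otimes_B A)^G$ is always an isomorphism because $G$ is finite and $\mathrm{char}(\bbK)=0$. Your explicit averaging-idempotent computation is just a concrete unpacking of the paper's terser remark that $G$-invariants coincide with $G$-coinvariants (which the paper also uses to note that $\Psi$ is co-continuous unconditionally, a point you only establish in the equivalence case, though this does not affect the stated if-and-only-if).
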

\begin{proof}
The left adjoint functor  $\Phi = (-)\otimes_B^{} A$ is clearly $\bbK$-linear  and co-continuous,
i.e.\ a $1$-morphism in $\Pr_\bbK$. The right adjoint functor $\Psi = (-)^{\mathrm{co}H} = (-)^G_0$
assigns the $G$-invariants (given by a categorical limit), which for actions of
finite groups $G$ and $\mathrm{char}(\bbK) =0$ coincides with the
$G$-{\em co}invariants (i.e.\ a categorical colimit). Hence, the right adjoint $\Psi$
is a $\bbK$-linear and co-continuous functor too
and the adjunction \eqref{eqn:Hopfadjunction2} is in the $2$-category $\Pr_\bbK$.
\sk

The unit $\eta : \id \Rightarrow \Psi\,\Phi$ 
of the adjunction \eqref{eqn:Hopfadjunction2} is given
by the components $\eta_W : W\to (W\otimes_B^{} A)^{G}_0\,,~w\mapsto w\otimes_B^{} 1_A$,
for all $W\in \Mod_B$. Using again that forming $G$-invariants coincides with forming
$G$-{\em co}invariants, we find that  $\eta : \id \Rightarrow \Psi\,\Phi$  is a natural isomorphism.
Our claim then follows from Theorem \ref{theo:Hopfgalois}.
\end{proof}

\begin{theo}\label{theo:gaugingtruncated}
Let $G$ be a finite group and $(\AAA,\rho)\in G{\text{-}}\AQFT(\ovr{\CC})$
a $G$-equivariant AQFT. Then the categorified orbifold theory $\AAA^G\in\2AQFT(\ovr{\CC})$
is truncated if and only if the algebra extension $\AAA^G_0(c) \subseteq \AAA(c)$ 
is $\O(G)$-Hopf-Galois, for all $c\in\CC$.
\end{theo}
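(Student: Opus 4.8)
The plan is to reduce the statement to Corollary~\ref{cor:HopfGalois} by identifying, for each $c\in\CC$, the component $(\epsilon_{\AAA^G})_c$ of the counit $\epsilon: \iota\,\pi\Rightarrow\id$ of the inclusion--truncation biadjunction with the induced-module functor $\Phi=(-)\otimes_{\AAA^G_0(c)}\AAA(c)$ of \eqref{eqn:Hopfadjunction2}. First recall from Proposition~\ref{prop:orbifoldtruncation} that $\pi(\AAA^G)\cong\AAA^G_0$, so that $\iota\,\pi(\AAA^G)(c)=\Mod_{\AAA^G_0(c)}$ while $\AAA^G(c)=G{\text{-}}\Mod_{\AAA(c)}$, and recall from Remark~\ref{rem:truncatedobject} that $\AAA^G$ is truncated exactly when $\epsilon_{\AAA^G}:\iota\,\pi(\AAA^G)\to\AAA^G$ is an equivalence in $\2AQFT(\ovr{\CC})$. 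The strategy therefore has three ingredients: (i) a $1$-morphism in $\2AQFT(\ovr{\CC})$ is an equivalence if and only if all of its component $1$-morphisms are equivalences in $\Pr_\bbK$; (ii) $(\epsilon_{\AAA^G})_c\simeq\Phi$ in $\Pr_\bbK$; (iii) Corollary~\ref{cor:HopfGalois}, which says $\Phi$ is an equivalence in $\Pr_\bbK$ precisely when $\AAA^G_0(c)\subseteq\AAA(c)$ is $\O(G)$-Hopf-Galois.

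For ingredient (ii) I would not compute the implicitly defined counit directly, but instead exhibit a concrete $1$-morphism $\kappa:\iota(\AAA^G_0)\to\AAA^G$ and show, using Remark~\ref{rem:truncatedobject} and the equivalence \eqref{eqn:biadjunctionfunctor}, that $\widetilde{\kappa}=\id_{\pi(\AAA^G)}$; this forces $\kappa\simeq\epsilon_{\AAA^G}$ up to invertible $2$-morphisms. The components are taken to be $\kappa_c:=(-)\otimes_{\AAA^G_0(c)}\AAA(c)=\Phi$, which are $\bbK$-linear and co-continuous since they are relative tensor products, i.e.\ left adjoints. The coherence isomorphisms $\kappa_{\und{f}}$ are the canonical base-change isomorphisms for induced-module functors along the commuting squares of $G{\text{-}}\Alg_\bbK$-morphisms relating $\bigotimes_i\AAA^G_0(c_i)\to\AAA^G_0(t)$ with $\bigotimes_i\AAA(c_i)\to\AAA(t)$ through the inclusions of $G$-invariants, built exactly as the coherences in Section~\ref{subsec:inclusion}. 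Verifying that $\kappa$ satisfies the axioms of Definition~\ref{def:pseudotransformation} is reduced, via the equivalences \eqref{eqn:DeligneKelly} and \eqref{eqn:EndRestriction}, to checking equalities of morphisms in $\AAA^G(t)$ on the generating objects $\prod_i\AAA^G_0(c_i)$, which follows from the $\2AQFT$ axioms of $\iota(\AAA^G_0)$ and $\AAA^G$ just as in the proof of Theorem~\ref{theo:biadjunction}. Finally one checks $\widetilde{\kappa}=\id$: on the distinguished objects $\kappa_c$ sends left multiplication by $b\in\AAA^G_0(c)$ on the free module $\AAA^G_0(c)$ to left multiplication by $b$ on $\AAA(c)$, which under $\End_{G{\text{-}}\Mod_{\AAA(c)}}(\AAA(c))\cong\AAA^G_0(c)$ is again $b$, so $\widetilde{\kappa}$ is the identity of $\AAA^G_0\cong\pi(\AAA^G)$. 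Hence $(\epsilon_{\AAA^G})_c\simeq\Phi$ for all $c\in\CC$.

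For ingredient (i), equivalences of pseudo-algebras over $\P_{\ovr{\CC}}$ are detected and created on underlying $1$-ary operations: if $\epsilon_{\AAA^G}$ is an equivalence in $\2AQFT(\ovr{\CC})$, the components of a pseudo-inverse give componentwise quasi-inverses to each $(\epsilon_{\AAA^G})_c$; conversely, choosing componentwise quasi-inverses of the $(\epsilon_{\AAA^G})_c$ together with their unit and counit isomorphisms and transporting the coherence data of $\epsilon_{\AAA^G}$ along them assembles a pseudo-inverse $1$-morphism in $\2AQFT(\ovr{\CC})$ (this is the standard argument, using only that the hom-$2$-categories of $\Pr_\bbK$ are groupoid-enriched on invertible $2$-cells, cf.\ the appendix). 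Combining (i)--(iii): $\AAA^G$ is truncated iff $\epsilon_{\AAA^G}$ is an equivalence in $\2AQFT(\ovr{\CC})$, iff each $(\epsilon_{\AAA^G})_c\simeq\Phi$ is an equivalence in $\Pr_\bbK$, iff the extension $\AAA^G_0(c)=\AAA(c)^{\mathrm{co}\,\O(G)}\subseteq\AAA(c)$ is $\O(G)$-Hopf-Galois for every $c\in\CC$, which is the claim. The main obstacle I anticipate is ingredient (ii), namely pinning down the counit concretely: since $\epsilon$ is defined only implicitly by the biadjunction, one must build the candidate $\kappa$ by hand and discharge its pseudo-naturality coherences, which is where the reductions \eqref{eqn:DeligneKelly}--\eqref{eqn:EndRestriction} and the $\2AQFT$ axioms carry the load; ingredient (i), while routine, also needs some care because it concerns equivalences in a $2$-category of pseudo-algebras rather than in $\Pr_\bbK$ itself.
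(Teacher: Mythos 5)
Your proposal is correct and follows essentially the same route as the paper's proof: identify the counit component $\epsilon_{\AAA^G}$ (up to invertible $2$-morphisms) with the explicitly constructed $1$-morphism whose components are the induced-module functors $\Phi_c=(-)\otimes_{\AAA^G_0(c)}\AAA(c)$ via the adjunct condition $\widetilde{\Phi}=\id_{\pi(\AAA^G)}$, reduce equivalence in $\2AQFT(\ovr{\CC})$ to componentwise equivalence in $\Pr_\bbK$, and conclude with Corollary \ref{cor:HopfGalois}. Your additional detail on discharging the pseudo-naturality coherences of $\kappa$ and on verifying $\widetilde{\kappa}=\id$ only makes explicit what the paper leaves as "one observes".
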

\begin{proof}
Recalling Definition \ref{def:truncatedobject}, 
the 2AQFT $\AAA^G\in\2AQFT(\ovr{\CC})$ is by definition truncated 
if the corresponding component $\epsilon_{\AAA^G} : \iota\,\pi(\AAA^G)\to \AAA^G$
of the counit of the inclusion-truncation biadjunction from Theorem \ref{theo:biadjunction} 
is an equivalence in $\2AQFT(\ovr{\CC})$. The component $\epsilon_{\AAA^G}$ of the counit is determined
uniquely (up to invertible $2$-morphisms in $\2AQFT(\ovr{\CC})$) by the condition
$\widetilde{\epsilon_{\AAA^G}} = \id_{\pi(\AAA^G)} : \pi(\AAA^G)\to \pi(\AAA^G)$
on its adjunct under \eqref{eqn:biadjunctionfunctor}.
Using the explicit description of the inclusion and truncation pseudo-functors
from Section \ref{sec:trunc} and the one of the gauging construction
from the present section, one observes that the induced module functors
$\Phi_c = (-)\otimes_{\AAA^G_0(c)} \AAA(c) : \iota\,\pi(\AAA^G)(c)\cong 
\Mod_{\AAA^G_0(c)}\to \AAA^G(c) = G{\text{-}}\Mod_{\AAA(c)} $ 
(together with the obvious coherence isomorphisms) define a $1$-morphism 
$\Phi : \iota\,\pi(\AAA^G)\to \AAA^G$
in $\2AQFT(\ovr{\CC})$ that satisfies $\widetilde{\Phi} = \id_{\pi(\AAA^G)} : \pi(\AAA^G)\to \pi(\AAA^G)$.
Hence, $\Phi \cong \epsilon_{\AAA^G}$ and we can equivalently investigate if $\Phi$ 
is an equivalence in $\2AQFT(\ovr{\CC})$.
\sk

By a straightforward but slightly lengthy calculation, one proves that a $1$-morphism in 
$\2AQFT(\ovr{\CC})$ is an equivalence if and only if all its components are 
equivalences in the $2$-category $\Pr_\bbK$.
(In this proof one uses that every equivalence in any $2$-category (here $\Pr_\bbK$) 
can be upgraded to an adjoint equivalence in order to define the quasi-inverse 
$1$-morphism in $\2AQFT(\ovr{\CC})$.) Thus, to prove that $\AAA^G\in\2AQFT(\ovr{\CC})$ 
is truncated we can equivalently study the components 
$\Phi_c = (-)\otimes_{\AAA^G_0(c)} \AAA(c) :
\Mod_{\AAA^G_0(c)}\to G{\text{-}}\Mod_{\AAA(c)} $, for all $c\in \CC$.
By Corollary \ref{cor:HopfGalois}, these components are equivalences in $\Pr_\bbK$
if and only if the algebra extension $\AAA^G_0(c) \subseteq \AAA(c)$ 
is $\O(G)$-Hopf-Galois, for all $c\in\CC$. This completes the proof.
\end{proof}

\begin{rem}\label{rem:HopfGalois}
We would like to emphasize that our result in Theorem \ref{theo:gaugingtruncated}
matches perfectly our physical interpretation of the gauging construction 
$\AAA^G\in\2AQFT(\ovr{\CC})$ in terms of orbifold $\sigma$-models from Remark \ref{rem:orbifoldinterpretation}. The $H=\O(G)$-Hopf-Galois
condition from Definition \ref{def:HopfGalois} should be interpreted
as a non-commutative algebraic generalization of a free $G$-action on a space,
see e.g.\ \cite[Examples 2.11 and 2.12]{Montgomery}. Because the
quotient stack $X//G\simeq X/G$ corresponding to a free $G$-action
is equivalent to the ordinary quotient space, the resulting ``orbifold'' $\sigma$-model in this case
is just an ordinary $\sigma$-model with target space $X/G$. In particular, for free $G$-actions
one does not expect higher categorical features in the corresponding ``orbifold'' $\sigma$-model.
This is precisely what we have proven in Theorem \ref{theo:gaugingtruncated} 
for orbifold quantum field theories.
\end{rem}

We conclude this section by presenting more examples
of  non-truncated and also truncated categorified orbifold theories
$\AAA^G\in\2AQFT(\ovr{\CC})$. 
\begin{ex}\label{ex:CCR}
Let us denote by $\Disk(\bbS^1)\subset \Open(\bbS^1)$ the full subcategory of all
non-empty open intervals $I\subset \bbS^1$ in the circle $\bbS^1$. Restricting 
the orthogonality relation $\perp_{\bbS^1}$ from Example \ref{ex:opens}, we obtain
a full orthogonal subcategory $\ovr{\Disk(\bbS^1)} \subset \ovr{ \Open(\bbS^1)}$.
Objects $\AAA\in \AQFT(\ovr{\Disk(\bbS^1)})$ are interpreted as chiral conformal AQFTs
\cite{Kawahigashi}. In this example we set $\bbK=\bbC$ to be the field of complex numbers.
Let us consider the following specific theory, which is called the {\em chiral free boson}. 
To each interval
$I\subset \bbS^1$, we assign the canonical commutation relations (CCR) algebra
\begin{flalign}\label{eqn:scalarintervalalgebra}
\AAA(I) \,:=\, T_\bbC^\otimes C^\infty_\cc(I)\Big/ \Big\langle\varphi_1\otimes \varphi_2- \varphi_2\otimes\varphi_1 - i \hbar \,\int_I \varphi_1\,\dd\varphi_2~1\Big\rangle\, \in\, \Alg_\bbC \quad,
\end{flalign}
where $\hbar\in\bbR$ is the deformation parameter,  i.e.\ Planck's constant
(treated here as a number and not as a formal parameter),  
$C^\infty_\cc(I)$ denotes the vector space of compactly supported real-valued functions
on $I\subset\bbS^1$ and $T_\bbC^\otimes C^\infty_\cc(I) := \bigoplus_{n=0}^\infty 
(C^\infty_\cc(I)\otimes_\bbR^{} \bbC)^{\otimes n} \in\Alg_\bbC$ is the complexified free algebra.
To each interval inclusion $\iota_I^J : I\to J$, we assign the $\Alg_\bbK$-morphism
$\AAA(\iota_I^J) : \AAA(I)\to\AAA(J)$ that is defined on the generators
by pushforward (i.e.\ extension by zero) of compactly supported functions.
This defines an AQFT $\AAA\in \AQFT(\ovr{\Disk(\bbS^1)})$ in the sense of Definition \ref{def:AQFT}.
\sk

Let us consider the representation $\rho : G=\bbZ_2\to \Aut(\AAA)$
of the cyclic group of order $2$ that is defined  on the generators of $\AAA(I)$
by multiplication with $\pm 1$, i.e.\ $\rho(\pm 1) (\varphi) = \pm \varphi$,
for all $\varphi\in \AAA(I)$. This defines a $\bbZ_2$-equivariant AQFT $(\AAA,\rho)$
and we can form the corresponding categorified orbifold theory
$\AAA^{\bbZ_2}\in\2AQFT(\ovr{\Disk(\bbS^1)})$ from Definition \ref{def:catorbifold}.
To find out whether this theory is truncated or not, we use our results
from Theorem \ref{theo:gaugingtruncated}. Let us consider
an arbitrary interval $I\subset \bbS^1$ and set $A:=\AAA(I)$.
Observe that the subalgebra $B:=A^{\bbZ_2}_0\subset A$ of $\bbZ_2$-invariants is the 
even part of the algebra \eqref{eqn:scalarintervalalgebra}. Regarding $A=\AAA(I)$
as a $B$-bimodule, we obtain a direct sum decomposition $A=B\oplus V$,
where $V$ is the odd part of \eqref{eqn:scalarintervalalgebra}. Hence,
the source of the canonical map  \eqref{eqn:canonicalmap} is isomorphic
to $A\otimes_B^{} A \cong B\oplus (V\otimes_B^{} V)\oplus V \oplus V$.
Using further that $A \otimes \O(G) \cong \prod_{g\in G} A$, 
the canonical map \eqref{eqn:canonicalmap} explicitly reads as
\begin{flalign}
\nn \beta\,:\,  B\oplus (V\otimes_B^{} V)\oplus V \oplus V&~\longrightarrow~
\prod_{g\in\bbZ_2} A\quad,\\
b+v\otimes_B^{} v^\prime + v_1 + v_2 &~\longmapsto~\begin{pmatrix}
b + v\,v^\prime + v_1 + v_2\\
b - v\,v^\prime + v_1 - v_2
\end{pmatrix}\quad.\label{eqn:betaexplicit}
\end{flalign}
Note that the canonical map $\beta$ is bijective if and only if the map
$\mu : V\otimes_B^{} V\to B\,,~v\otimes_B^{} v^\prime\mapsto v\,v^\prime$ 
that is induced by the multiplication $\mu: A\otimes A\to A$ on 
\eqref{eqn:scalarintervalalgebra} is bijective. 
\sk

Let us consider first the case where the deformation parameter
$\hbar = 0$ is zero, which describes a classical (i.e.\ not quantized) field theory. 
In this case \eqref{eqn:scalarintervalalgebra} is a complexified symmetric algebra
over $C^\infty_\cc(I)$ and the map $\mu : V\otimes_B^{} V\to B$
is {\em not} surjective because its image is at least quadratic in the generators.
This implies that the canonical map $\beta$ in \eqref{eqn:betaexplicit} is not bijective, hence
by Theorem \ref{theo:gaugingtruncated} the categorified orbifold theory 
$\AAA^{\bbZ_2}\in \2AQFT(\ovr{\Disk(\bbS^1)})$ for $\hbar = 0$ is non-truncated.
\sk

The situation changes drastically in the quantum case where  $ 0\neq \hbar\in\bbR$ is a {\em non-formal} 
parameter.  From the canonical 
commutation relations in \eqref{eqn:scalarintervalalgebra}, we deduce that
one can always find two generators $\varphi_1,\varphi_2 \in C^\infty_\cc(I)\subseteq V\subseteq A$ 
that satisfy $[\varphi_1,\varphi_2]=i\hbar\, 1$. Dividing by
$\hbar$, which is possible because we assumed that $\bbR\ni \hbar \neq 0$, 
we can now prove that the map $\mu : V\otimes_B^{} V\to B$ is bijective.
For surjectivity, consider an arbitrary $b\in B$ and observe that
\begin{flalign}
\mu\Big(b\, \frac{1}{i\hbar} (\varphi_1\otimes_B^{} \varphi_2 - \varphi_2\otimes_B^{}\varphi_1)\Big) = b\,\frac{1}{i\hbar}[\varphi_1,\varphi_2]=b\quad.
\end{flalign} 
For injectivity, consider $\sum_j v_j\otimes_B^{} v_j^\prime\in V\otimes_B^{} V$ 
such that $\sum_j v_j\,v^\prime_j =0$ and observe that
\begin{flalign}
\nn \sum_j v_j\otimes_B^{} v_j^\prime &= \frac{1}{i\hbar} \sum_j [\varphi_1,\varphi_2] \,v_j\otimes_B^{} v_j^\prime \\
&=\frac{1}{i\hbar} \,\varphi_1 \otimes_B^{} \varphi_2\sum_{j}v_j\,v_j^\prime - \frac{1}{i\hbar} \,\varphi_2 \otimes_B^{} \varphi_1\sum_{j}v_j\,v_j^\prime  =0\quad,
\end{flalign}
where we also used that $\varphi_1\, v_j\in B$ and $\varphi_2\, v_j\in B$.
Theorem \ref{theo:gaugingtruncated} then implies that the categorified orbifold 
theory $\AAA^{\bbZ_2}\in \2AQFT(\ovr{\Disk(\bbS^1)})$ for $\hbar \neq 0$ is truncated.
\sk

Summing up, we have seen an example of a {\em non-truncated} classical orbifold field theory
that is quantized to a {\em truncated} orbifold quantum field theory. 
We would like to emphasize that this result crucially relies on inverting the deformation 
parameter $0\neq \hbar\in \bbR$ and hence it does not arise in formal deformation
quantization. (In fact, treating $\hbar$ in  \eqref{eqn:scalarintervalalgebra} as a formal parameter,
the categorified orbifold theory $\AAA^{\bbZ_2}\in \2AQFT(\ovr{\Disk(\bbS^1)})$
is non-truncated as in the classical case.)
A similar interplay between quantization and orbifold singularities
was observed before within a different framework \cite{pillow1,pillow2}. 
\end{ex}

\begin{rem}
We would like to emphasize that the results of Example \ref{ex:CCR} hold true in much greater generality.
Let $\ovr{\CC}$ be any orthogonal category and $\AAA\in \AQFT(\ovr{\CC})$ any AQFT that assigns, to every $c\in\CC$,
a CCR-algebra $\AAA(c) = \CCR(\mathcal{L}(c),\sigma_c)$ of a symplectic vector space $(\mathcal{L}(c),\sigma_c)$,
i.e.\ $\sigma_c$ is non-degenerate.
Using similar arguments as in Example \ref{ex:CCR}, one shows that the categorified
orbifold theory corresponding to the $\bbZ_2$-action
$\rho(\pm 1) : (\mathcal{L}(c),\sigma_c)\to (\mathcal{L}(c),\sigma_c)\, ,~\varphi \mapsto \pm\varphi$ is truncated,
provided that $\bbR\ni \hbar \neq 0$. The same holds true for AQFTs assigning canonical 
anticommutation relation (CAR) algebras of non-degenerate inner product spaces.
\end{rem}


\section{\label{sec:localtoglobal}Fredenhagen's universal category}
The goal of this section is to present a categorified 
version of Fredenhagen's universal algebra,  which plays the role of 
a local-to-global construction in AQFT that is 
analogous to factorization homology in topological QFTs \cite{AyalaFrancis,BZBJ,BZBJ2}.
Let us briefly recall the original $1$-categorical construction
for ordinary AQFTs from \cite{Fre1,Fre2,Fre3}, 
see also \cite{Lang,BSWoperad} for more details.
Given a full orthogonal subcategory embedding $J: \ovr{\CC} \to \ovr{\DD}$ 
and any ordinary AQFT $\AAA \in \AQFT(\ovr{\CC})$ on $\ovr{\CC}$, 
operadic left Kan extension along the induced operad morphism $J : \P_{\ovr{\CC}}\to\P_{\ovr{\DD}}$ determines a canonical extension
$J_!(\AAA) \in \AQFT(\ovr{\DD})$ of $\AAA$ to the larger orthogonal category
$\ovr{\DD}$. The algebra $J_!(\AAA)(d) \in \Alg_\bbK$ that is assigned by the extended AQFT $J_!(\AAA)$ 
to an object $d \in \DD$ is usually referred to as Fredenhagen's universal algebra.
This extension prescription is canonical in the sense that it is part of an adjunction 
\begin{flalign}\label{eqn:Fredenhagenadjunction}
\xymatrix{
J_! \,:\, \AQFT(\ovr{\CC}) ~\ar@<0.5ex>[r]&\ar@<0.5ex>[l]  ~\AQFT(\ovr{\DD}) \,:\,J^\ast
}\quad,
\end{flalign} 
where the right adjoint functor $J^\ast$ is given by restriction of AQFTs along $J$. 
The following two examples of full 
orthogonal subcategory embeddings $J: \ovr{\CC} \to \ovr{\DD}$
are typically considered in applications to physics.
\begin{ex}\label{ex:diskextension}
Recall from Example \ref{ex:opens} 
the orthogonal category $\ovr{\Open(M)} = (\Open(M),{\perp}_M)$ 
of non-empty open subsets of a manifold $M$. 
Consider the full subcategory $\Disk(M) \subseteq \Open(M)$ 
of all disks in $M$, i.e.\ all open subsets $U \subseteq M$ such that 
$U\cong \bbR^m$ is diffeomorphic to a Cartesian space,
and endow it with the restricted orthogonality relation.
This defines an orthogonal category $\ovr{\Disk(M)}$
together with a full orthogonal subcategory embedding
$J: \ovr{\Disk(M)}\to \ovr{\Open(M)} $. For the case of the 
circle $M=\bbS^1$, the corresponding extension functor
$J_! :  \AQFT(\overline{\Disk(\bbS^1)})\to\AQFT(\overline{\Open(\bbS^1)})$ 
is studied in the context of chiral conformal AQFT \cite{Fre1,Fre2,Fre3}. 
\end{ex}

\begin{ex}\label{ex:locextension}
Recall from Example \ref{ex:Loc} the orthogonal category $\ovr{\Loc} = (\Loc,{\perp}_{\Loc})$ 
of oriented and time-oriented globally hyperbolic Lorentzian manifolds.
Consider the full subcategory $\Locc \subseteq \Loc$ 
of all objects $M\in\Loc$ whose underlying manifold is diffeomorphic
to a Cartesian space and endow it with the restricted orthogonality relation.
This defines an orthogonal category $\ovr{\Locc}$
together with a full orthogonal subcategory embedding
$J : \ovr{\Locc}\to \ovr{\Loc} $. The corresponding extension functor
$J_! :  \AQFT(\overline{\Locc})\to\AQFT(\overline{\Loc})$ 
is studied in the context of locally covariant AQFT \cite{Lang,BSWoperad}.
\end{ex}

We will study a generalization of this extension construction to 2AQFTs, 
which is based on the biadjunction
\begin{flalign}\label{eqn:J!J*biadjunction}
\xymatrix{
J_! \,:\, \2AQFT(\ovr{\CC}) ~\ar@<0.5ex>[r]&\ar@<0.5ex>[l]  ~\2AQFT(\ovr{\DD}) \,:\,J^\ast
}\quad,
\end{flalign}
where the right adjoint $2$-functor $J^\ast$ is given by restriction of 2AQFTs along $J$.
Hence, the left adjoint pseudo-functor $J_!$ is a $2$-categorical generalization of operadic left Kan extension.
(With a mild abuse of notation, we denote both the $1$-categorical adjunction \eqref{eqn:Fredenhagenadjunction}
and the $2$-categorical adjunction \eqref{eqn:J!J*biadjunction} by the same symbols.
Below it will be clear from the context,  and from our different notations for ordinary AQFTs and 2AQFTs,  
whether $J_!\dashv J^\ast$ refers to the $1$- or $2$-categorical adjunction.)
Given any $\AAAA \in \2AQFT(\ovr{\CC})$ on $\ovr{\CC}$, 
this determines a canonical extension $J_!(\AAAA) \in \2AQFT(\ovr{\DD})$ 
to the larger orthogonal category $\ovr{\DD}$. Following AQFT terminology, 
we shall refer to the locally presentable $\bbK$-linear category $J_!(\AAAA)(d) \in \Pr_{\bbK}$ 
that is assigned by the extended 2AQFT $J_!(\AAAA)$ to an object $d \in \DD$ 
as {\it Fredenhagen's universal category}. In the context of Example \ref{ex:diskextension}, we 
will provide examples of such categories for 2AQFTs on the circle $M = \bbS^1$.

\subsection{Preliminaries}
Our construction of the extension pseudo-functor
$J_!: \2AQFT(\ovr{\CC}) \to \2AQFT(\ovr{\DD})$ associated to
a full orthogonal subcategory embedding $J:\ovr{\CC}\to\ovr{\DD}$
uses the {\em monoidal envelope} $\P^\otimes_{\ovr{\CC}}$ of the prefactorization operad
$\P_{\ovr{\CC}}$ from Definition \ref{def:PCoperad}.
We refer the reader to \cite[Theorem 4.2]{Mandell}
for details on monoidal envelopes for colored operads.
In our case of interest, $\P_{\ovr{\CC}}^\otimes$
is given by the following symmetric monoidal category: 
\begin{description}
\item[Objects:] (Possibly empty) tuples 
$\und{c} =(c_1,\dots,c_n)\in \CC^n$ of objects in $\P_{\ovr{\CC}}$.

\item[Morphisms:] Pairs $(\alpha,\und{\und{f}}): 
\und{c} = (c_1,\ldots,c_n) \to \und{t} = (t_1,\ldots,t_m)$
with $\alpha: \{1,\dots,n\} \to \{1,\dots,m\}$ a map of sets
and $\und{\und{f}} = (\und{f}_1,\dots,\und{f}_m)$ a tuple of operations
$\und{f}_j = (f_{j1},\dots,f_{jk_j}) \in \P_{\ovr{\CC}}\big(\substack{ t_j \\ \und{c}_{\alpha,j}}\big)$, 
for $j=1,\dots,m$, where $\und{c}_{\alpha,j}$ denotes 
the (possibly empty) sub-tuple of $\und{c}$ containing only 
the $c_i$'s satisfying $\alpha(i)=j$ 
and $k_j$ denotes the length of $\und{c}_{\alpha,j}$. 

\item[Identities and composition:] The identity morphism
for $\und{c} = (c_1,\ldots,c_n)\in \P_{\ovr{\CC}}^\otimes$
is given by $\id_{\und{c}} := (\id,(\id_{c_1},\dots,\id_{c_n})): \und{c} \to \und{c}$.
The composition of two morphisms $(\alpha,\und{\und{f}}) : \und{b} \to \und{a}$
and $(\beta,\und{\und{g}}) : \und{a} \to \und{t}$  in $\P_{\ovr{\CC}}^\otimes$
is given by $(\beta,\und{\und{g}}) \circ (\alpha,\und{\und{f}})  := 
(\beta \alpha, \und{\und{h}}): \und{b} \to \und{t}$, 
where $\beta\alpha$ is the usual composition of maps of sets
and $\und{\und{h}} := (\und{h}_1,\ldots,\und{h}_\ell)$ 
is the tuple of operations $\und{h}_k := \und{g}_k\, \und{\und{f}}_{\beta,k}
\in \P_{\ovr{\CC}}\big(\substack{ t_k \\ \und{b}_{\beta\alpha,k}}\big)$ determined
by operadic composition, for $k=1,\dots,\ell$, where $\und{\und{f}}_{\beta,k}$ 
is the sub-tuple of $\und{\und{f}} = (\und{f}_1,\dots,\und{f}_m)$ 
containing only the $\und{f}_j$'s satisfying $\beta(j) = k$.

\item[Symmetric monoidal structure:] The tensor product 
$\und{c} \otimes \und{c}^\prime := (\und{c},\und{c^\prime})$ 
is defined by concatenation of tuples, the monoidal unit is 
the empty tuple $\emptyset$ and the symmetric braiding  
is given by the $\P_{\ovr{\CC}}^\otimes$-morphisms 
$(\alpha_{n,n^\prime},(\id_{c_1},\dots,\id_{c_n},\id_{c_1^\prime},\dots,\id_{c^\prime_{n^\prime}})): 
\und{c} \otimes \und{c}^\prime \to \und{c}^\prime \otimes \und{c}$, where 
$\alpha_{n,n^\prime}: \{1,\ldots,n+n^\prime\} \to \{1,\dots,n+n^\prime\}$ 
is defined by $\alpha_{n,n^\prime}(i) = n^\prime+i$, 
for $i=1,\dots,n$, and $\alpha_{n,n^\prime}(n+i) = i$, 
for $i=1,\dots,n^\prime$. 
\end{description}

Applying the same construction to $\P_{\ovr{\DD}}$ 
defines a symmetric monoidal category $\P_{\ovr{\DD}}^\otimes$.
Furthermore, the orthogonal functor $J: \ovr{\CC} \to \ovr{\DD}$
induces an operad morphism $J :\P_{\ovr{\CC}}\to \P_{\ovr{\DD}}$ and hence
a symmetric monoidal functor $J^\otimes: \P_{\ovr{\CC}}^\otimes \to \P_{\ovr{\DD}}^\otimes$ 
between the monoidal envelopes. The latter reads explicitly as follows:
\begin{description}
\item[On objects:] For $\und{c} = (c_1,\ldots,c_n) \in 
\P_{\ovr{\CC}}^\otimes$, we set $J^\otimes (\und{c})
:= (J(c_1),\dots,J(c_n)) \in \P_{\ovr{\DD}}^\otimes$.

\item[On morphisms:] For $(\alpha,\und{\und{f}}): \und{c} \to \und{t}$ 
in $\P_{\ovr{\CC}}^\otimes$, 
we set $J^\otimes(\alpha,\und{\und{f}}) := (\alpha,J(\und{\und{f}})): 
J^\otimes (\und{c}) \to J^\otimes (\und{t})$ in $\P_{\ovr{\DD}}^\otimes$, 
where $J(\und{\und{f}}) := ((J(f_{11}),\dots,J(f_{1k_1})),\dots,(J(f_{m1}),\dots,J(f_{mk_m})))$.

\item[Symmetric monoidal structure:] Since 
$J^\otimes(\und{c}) \otimes J^\otimes(\und{c}^\prime) 
= J^\otimes(\und{c} \otimes \und{c}^\prime)$ 
and $\emptyset = J^\otimes(\emptyset)$, it is straightforward 
to equip $J^\otimes$ with a symmetric monoidal structure. 
\end{description}

Recall from Definition \ref{def:2AQFT} that 2AQFTs on $\ovr{\CC}$
are by definition $\P_{\ovr{\CC}}$-algebras.
Hence, by the universal property of monoidal envelopes, we can associate
to every $\AAAA \in \2AQFT(\ovr{\CC})$ a symmetric monoidal pseudo-functor 
\begin{subequations}\label{eqn:undAAAA}
\begin{flalign}
\und{\AAAA}\,:\, \P_{\ovr{\CC}}^\otimes ~\longrightarrow~ \Pr_{\bbK}
\end{flalign} 
from the monoidal envelope of $\P_{\ovr{\CC}}$.
This pseudo-functor acts on objects $\und{c} = (c_1,\dots,c_n) \in \P_{\ovr{\CC}}^\otimes$ as
the  $n$-ary Kelly-Deligne tensor product
\begin{flalign}
\und{\AAAA}(\und{c}) \,:=\, \bigboxtimes_{i=1}^n \AAAA(c_i)
\end{flalign}
of the locally presentable $\bbK$-linear categories $\AAAA(c_i)\in\Pr_\bbK$, cf.\ Remark \ref{rem:PrKmonoidal}.
(By convention, we set $\und{\AAAA}(\emptyset) := \Vec_\bbK$ to be the monoidal unit of $\Pr_{\bbK}$.)
On morphisms $(\alpha,\und{\und{f}}): \und{c} \to \und{t}$ in $\P_{\ovr{\CC}}^\otimes$, 
this pseudo-functor acts as
\begin{flalign}
\und{\AAAA}(\alpha,\und{\und{f}})\,:\, 
\xymatrix@C=2em{
\und\AAAA(\und{c})= \bigboxtimes\limits_{i=1}^n \AAAA(c_i) \ar[r]^-{\simeq_\alpha} ~&~ 
\bigboxtimes\limits_{j=1}^m \und{\AAAA}(\und{c}_{\alpha,j})
\ar[rr]^-{\bigboxtimes_j \AAAA(\und{f}_j)} ~&&~ \bigboxtimes\limits_{j=1}^m \AAAA(t_j)=\und{\AAAA}(\und{t})
}\quad,
\end{flalign}
\end{subequations}
where $\simeq_\alpha$ is the equivalence in the symmetric monoidal $2$-category 
$\Pr_\bbK$ that is associated to the displayed permutation determined by $\alpha$. 
The coherence data for the symmetric monoidal pseudo-functor 
$\und{\AAAA} : \P_{\ovr{\CC}}^\otimes \to \Pr_{\bbK}$ are canonically given
by the coherence data for $\AAAA\in\2AQFT(\ovr{\CC})$ and the symmetric 
monoidal structure on $\Pr_\bbK$.

\subsection{\label{subsec:extension}Extension}
The extension pseudo-functor 
$J_!: \2AQFT(\ovr{\CC}) \longrightarrow \2AQFT(\ovr{\DD})$
in the biadjunction \eqref{eqn:J!J*biadjunction} is obtained 
canonically via operadic left pseudo-Kan extension along 
$J: \P_{\ovr{\CC}} \to \P_{\ovr{\DD}}$. 
Passing from colored operads to their monoidal envelopes, 
$J_!$ can be obtained via (categorical) left pseudo-Kan extension 
along $J^\otimes: \P_{\ovr{\CC}}^\otimes \to \P_{\ovr{\DD}}^\otimes$, 
cf.\ \cite{Horel}. Furthermore, the latter left pseudo-Kan 
extension can be computed in terms of suitable bicolimits 
\cite{Lack, biadjointtriangles}. Using this approach, 
we can now describe the extension $J_!(\AAAA) \in \2AQFT(\ovr{\DD})$ 
of a 2AQFT $\AAAA \in \2AQFT(\ovr{\CC})$. 
For each $d \in \DD$, Fredenhagen's universal category 
is the locally presentable $\bbK$-linear category 
\begin{flalign}\label{eqn:J!1}
J_!(\AAAA)(d) \,:=\, \bicolim \left( 
\xymatrix{ 
J^\otimes / (d) \ar[r]^-{\text{forget}} 
~&~ \P_{\ovr{\CC}}^\otimes \ar[r]^-{\und\AAAA} ~&~ \Pr_{\bbK} 
} 
\right)
\end{flalign}
obtained as a bicolimit in $\Pr_{\bbK}$, 
where $J^\otimes / (d)$ denotes the slice category 
for the functor $J^\otimes: \P_{\ovr{\CC}}^\otimes \to 
\P_{\ovr{\DD}}^\otimes$ over the object 
$(d) \in \P_{\ovr{\DD}}^\otimes$. Recall also \eqref{eqn:undAAAA} 
for the construction of the pseudo-functor 
$\und\AAAA: \P_{\ovr{\CC}}^\otimes \to \Pr_{\bbK}$.
(To avoid confusion, let us stress that the symbol $(d)$ stands for the tuple of length one 
that is defined by the object $d\in\DD$, i.e.\  $(d)\in\P^\otimes_{\ovr{\DD}}$
is an object in the monoidal envelope.)
This bicolimit always exists because $\Pr_\bbK$ is 
bicategorically cocomplete, see e.g.\ \cite[Lemma 2.5]{BCJF}.
For each tuple $\und{g} = (g_1,\dots,g_n)\in\P_{\ovr{\DD}}\big(\substack{s \\ \und{d}}\big)$
of mutually orthogonal $\DD$-morphisms, 
we set the factorization product 
\begin{subequations}
\begin{flalign}\label{eqn:J!2}
J_!(\AAAA)(\und{g})\,:\,\prod_{i=1}^n J_!(\AAAA)(d_i)~\longrightarrow~J_!(\AAAA)(s) 
\end{flalign}
to be the functor that is defined below, 
which is co-continuous and $\bbK$-linear in each entry: 
Consider the diagram 
\begin{flalign}\label{eqn:J!3}
\xymatrix@C=5em{
\prod\limits_{i=1}^n J^\otimes / (d_i) \ar[r]^-{\prod_i \text{forget}} \ar[d]_-{\und{g}_\ast} ~&~ \prod\limits_{i=1}^n \P_{\ovr{\CC}}^\otimes \ar[r]^-{\prod_i \und{\AAAA}} \ar[d]_-{\otimes^n} ~&~ \prod\limits_{i=1}^n \Pr_{\bbK} \ar[d]^-{\boxtimes^n} \ar@{=>}[dl]_-{(\star)} \\
J^\otimes / (s) \ar[r]_-{\text{forget}} ~&~ \P_{\ovr{\CC}}^\otimes \ar[r]_-{\und{\AAAA}} ~&~ \Pr_{\bbK}
}
\end{flalign}
\end{subequations}
where $\und{g}_\ast: \prod_{i=1}^n J^\otimes/(d_i) \to J^\otimes/(s)$ 
is the functor induced by post-composition with $\und{g}$ in 
the colored operad $\P_{\ovr{\DD}}$. By direct inspection, 
the left square commutes. In the right square, instead, 
the clockwise and counter-clockwise paths give functors 
that are related by the natural isomorphism $(\star)$ determined 
by the symmetric monoidal structure on the pseudo-functor $\und\AAAA$. 
Passing to bicolimits and recalling that the Kelly-Deligne 
tensor product $\boxtimes$ commutes with $\bicolim$ (in each entry) 
provides a co-continuous  $\bbK$-linear functor 
$\bigboxtimes_{i=1}^n J_!(\AAAA)(d_i) \to J_!(\AAAA)(s)$. 
Pre-composition with the canonical functor 
$\prod_{i=1}^n J_!(\AAAA)(d_i) \to \bigboxtimes_{i=1}^n 
J_!(\AAAA)(d_i)$, which is co-continuous and $\bbK$-linear 
in each entry, completes the construction of \eqref{eqn:J!2}. 
For the empty tuple $\und{d} = \emptyset$, the pointing 
$J_!(\AAAA)(\ast_s) \in J_!(\AAAA)(s)$ 
of Fredenhagen's universal category $J_!(\AAAA)(s)$ 
is obtained in the same fashion from \eqref{eqn:J!3}.
(Notice that empty products are initial categories, 
while $\otimes^0$ and $\boxtimes^0$ assign the respective monoidal units.) 
The coherence data, cf.\ Remark \ref{rem:2AQFTunpacked}, 
for the extended 2AQFT $J_!(\AAAA) \in \2AQFT(\ovr{\DD})$ 
are obtained canonically from the construction above 
and the symmetric monoidal pseudo-functor 
$\und\AAAA: \P_{\ovr{\CC}}^\otimes \to \Pr_{\bbK}$. 
\sk

For an arbitrary $d \in \DD$, we shall now describe Fredenhagen's 
universal category $J_!(\AAAA)(d)$ in fully explicit terms, using the 
prescription in \cite[Lemma 2.5]{BCJF} to compute the relevant bicolimit \eqref{eqn:J!1}. 
This is a two-step procedure: 
\begin{itemize}
\item[1.)] Every co-continuous $\bbK$-linear functor between two
locally presentable $\bbK$-linear categories
admits a right adjoint by the special adjoint functor theorem, 
cf.\ \cite{AdamekRosicky, BCJF}. 
Hence, from the pseudo-functor 
$\und\AAAA: \P_{\ovr{\CC}}^\otimes \to \Pr_{\bbK}$, 
we obtain a new pseudo-functor 
$\und\AAAA^\mathrm{R} : (\P_{\ovr{\CC}}^\otimes)^\op \to \Cat$ that acts 
on objects as $\und\AAAA$, i.e.\ $\und\AAAA^\mathrm{R}(\und{c}) := \und\AAAA(\und{c})$ for 
all $\und{c} \in \P_{\ovr{\CC}}^\otimes$,  and that assigns to a morphism $(\alpha,\und{\und{f}}): \und{c} \to \und{t}$ 
in $\P_{\ovr{\CC}}^\otimes$ the right adjoint of 
the co-continuous $\bbK$-linear functor assigned by $\und{\AAAA}$, i.e.\ $\und\AAAA(\alpha,\und{\und{f}}) \dashv 
\und\AAAA^\mathrm{R}(\alpha,\und{\und{f}}): \und\AAAA^\mathrm{R}(\und{t}) \to \und\AAAA^\mathrm{R}(\und{c})$. 
(Note that $\und\AAAA^\mathrm{R}$ is just a pseudo-functor to $\Cat$ and not necessarily to $\Pr_{\bbK}$
because the right adjoint functors $\und\AAAA^{\mathrm{R}}(\alpha,\und{\und{f}})$ may fail to be co-continuous.) 

\item[2.)] The category underlying the bicolimit \eqref{eqn:J!1} of 
$\und\AAAA \circ \text{forget}: J^\otimes / (d) \to \Pr_{\bbK}$ 
can be computed as a bilimit of the pseudo-functor $\und\AAAA^\mathrm{R} 
\circ \text{forget}: (J^\otimes / (d))^\op \to \Cat$. 
The outcome is a locally presentable $\bbK$-linear category 
in a canonical way, cf.\ \cite{BCJF}.
\end{itemize}

Using the explicit model \cite{Street,biadjointtriangles} for computing bilimits of pseudo-functors to $\Cat$,
we obtain the following description of Fredenhagen's universal category $J_!(\AAAA)(d)$ 
in terms of explicit data and conditions: 
\begin{description}
\item[Objects:] An object
\begin{flalign}
(V,\xi^V)\,:=\, \big(\{V_{\und{h}}\},\{\xi^V_{(\alpha,\und{\und{f}})}\}\big)\,\in\,J_!(\AAAA)(d)
\end{flalign}
consists of the following data:
\begin{subequations}
\begin{enumerate}[(1)]
\item For each object 
$\big(\und{h} := (\ast,\und{h}): \und{c}\to (d)\big) \in J^\otimes / (d)$,  
where $\ast: \{1, \dots,n\} \to \{1\}$ denotes the unique map of sets to the singleton $\{1\}$,
an object 
\begin{flalign}
V_{\und{h}} \, \in \, \und{\AAAA}(\und{c}) = \bigboxtimes_{i=1}^n \AAAA(c_i) \quad.
\end{flalign}

\item For each morphism $(\alpha,\und{\und{f}}) : \und{h}\to \und{h}^\prime$ in $J^\otimes / (d)$, an isomorphism 
\begin{flalign} 
\xi^V_{(\alpha,\und{\und{f}})}\,:\,  \und{\AAAA}^{\mathrm{R}}(\alpha,\und{\und{f}})\big(V_{\und{h}^\prime}\big) ~\longrightarrow~V_{\und{h}}
\end{flalign} 
in the category $\und\AAAA(\und{c})$.
\end{enumerate}
\end{subequations}

These data have to satisfy the following cocycle conditions:
\begin{subequations}\label{eqn:Vxicoherences}
\begin{itemize}
\item[(i)] For all objects $\big(\und{h}: \und{c} \to (d)\big) \in J^\otimes / (d)$, 
the diagram 
\begin{flalign}
\xymatrix@C=5em{
\ar[d]_-{\und{\AAAA}^{\mathrm{R} \,0}_{\und{c}}}^-{\cong} 
\und{\AAAA}^\mathrm{R}(\id_{\und{h}})\big(V_{\und{h}}\big)\ar[r]^-{\xi^V_{\id_{\und{h}}}}~&~V_{\und{h}}\\
V_{\und{h}}\ar[ru]_-{\id_{V_{\und{h}}}} ~&~
}
\end{flalign}
in $\und\AAAA(\und{c})$ commutes, where $\und{\AAAA}^{\mathrm{R}\, 0}_{\und{c}}$ denotes
the coherence isomorphisms for identities that are associated with the pseudo-functor $\und{\AAAA}^{\mathrm{R}}$.

\item[(ii)]
For all composable pairs of morphisms 
$(\alpha,\und{\und{f}}) : \und{h} \to \und{h}^\prime$ and
$(\beta,\und{\und{g}}) : \und{h}^\prime \to \und{h}^{\prime\prime}$ in $J^\otimes / (d)$, 
the diagram
\begin{flalign}
\xymatrix@C=6.5em{
\ar[d]_-{\und{\AAAA}^{\mathrm{R}\,2}_{( (\beta,\und{\und{g}}) ,(\alpha,\und{\und{f}}) )}} ^-{\cong} 
\und{\AAAA}^\mathrm{R}(\alpha,\und{\und{f}})\, \und{\AAAA}^\mathrm{R}(\beta,\und{\und{g}})\big(V_{\und{h}^{\prime\prime}}\big) 
\ar[r]^-{ \und{\AAAA}^\mathrm{R}(\alpha,\und{\und{f}})(\xi^V_{(\beta,\und{\und{g}}) })}
 ~&~ 
 \und{\AAAA}^\mathrm{R}(\alpha,\und{\und{f}})\big(V_{\und{h}^\prime}\big)\ar[d]^-{\xi^V_{(\alpha,\und{\und{f}})}}\\
 \und{\AAAA}^\mathrm{R}\big((\beta,\und{\und{g}}) \circ (\alpha,\und{\und{f}}) \big)\big(V_{\und{h}^{\prime\prime}}\big)
 \ar[r]_-{\xi^V_{(\beta,\und{\und{g}}) \circ (\alpha,\und{\und{f}}) }}
~&~
V_{\und{h}}
}
\end{flalign}
in $\und\AAAA(\und{c})$ commutes, where $\und{\AAAA}^{\mathrm{R}\,2}_{( (\beta,\und{\und{g}}) ,(\alpha,\und{\und{f}}) )}$
denotes the coherence isomorphisms for compositions that are associated
with the pseudo-functor $\und{\AAAA}^\mathrm{R}$.
\end{itemize}
\end{subequations}

\item[Morphisms:] A morphism 
\begin{flalign}
\Gamma \,:=\, \{\Gamma_{\und{h}}\}\, : \, (V,\xi^V) ~\longrightarrow~ (W,\xi^W)
\end{flalign}
in $J_!(\AAAA)(d)$ consists of a family of $\und\AAAA(\und{c})$-morphisms
\begin{subequations}\label{eqn:Gammacoherences}
\begin{flalign}
\Gamma_{\und{h}}\,:\, V_{\und{h}} ~\longrightarrow~W_{\und{h}}\quad,
\end{flalign}
for all $\big(\und{h} : \und{c}\to (d) \big)\in J^\otimes/(d)$,
such that the diagrams
\begin{flalign}
\xymatrix@C=6.5em{
\ar[d]_-{\xi^V_{(\alpha,\und{\und{f}})}} \und{\AAAA}^\mathrm{R}(\alpha,\und{\und{f}}) \big(V_{\und{h}^\prime}\big) 
\ar[r]^-{ \und{\AAAA}^\mathrm{R}(\alpha,\und{\und{f}})(\Gamma_{\und{h}^\prime})}
~&~\und{\AAAA}^\mathrm{R}(\alpha,\und{\und{f}})\big(W_{\und{h}^\prime}\big)\ar[d]^-{\xi^W_{(\alpha,\und{\und{f}})}}\\
V_{\und{h}} \ar[r]_-{\Gamma_{\und{h}}}~&~W_{\und{h}}
}
\end{flalign}
\end{subequations}
in $\und\AAAA(\und{c})$ commute, for all morphisms 
$(\alpha,\und{\und{f}}) : \und{h}\to\und{h}^\prime$ in $J^\otimes / (d) $.

\item[Identities and composition:] Identities and composition 
are defined component-wise. 
\end{description}

\subsection{Examples on $M=\bbS^1$}
The aim of this subsection is to study examples 
of Fredenhagen's universal category for the full
orthogonal subcategory embedding $J : \ovr{\Disk(\bbS^1)}\to \ovr{\Open(\bbS^1)}$
introduced in Example \ref{ex:diskextension}. Given any
$\AAAA\in \2AQFT(\ovr{\Disk(\bbS^1)})$, which is by definition only defined on open intervals in $\bbS^1$,
we are particularly interested in the locally presentable $\bbK$-linear category
\begin{flalign}\label{eqn:extS1}
J_!(\AAAA)(\bbS^1)\,=\, \bicolim \left( 
\xymatrix{ 
J^\otimes / (\bbS^1) \ar[r]^-{\text{forget}} 
~&~ \P_{\ovr{\Disk(\bbS^1)}}^\otimes \ar[r]^-{\und{\AAAA}} ~&~ \Pr_{\bbK} 
} 
\right)
\end{flalign}
that is assigned to the whole circle $\bbS^1$.
The slice category $J^\otimes / (\bbS^1)$ in the present case 
admits the following simple description:
An object is a tuple $\und{I} = (I_1,\dots,I_n)$ of mutually disjoint open 
intervals $I_i\subset \bbS^1 $, i.e.\ $I_i\cap I_j =\emptyset$ for all $i\neq j$.
A morphism $\alpha : \und{I} = (I_1,\dots,I_n)\to \und{J}=(J_1,\dots,J_m)$
is a map of sets $\alpha : \{1,\dots,n\}\to \{1,\dots,m\}$ such that
$I_i \subseteq J_{\alpha(i)}$, for all $i=1,\dots,n$.

\begin{ex}\label{ex:TruncFredCat}
Let us consider first the simplest case where  $\AAAA$ is truncated, i.e.\ $\AAAA = \iota(\AAA)$ 
with $\AAA\in \AQFT(\ovr{\Disk(\bbS^1)})$ an ordinary AQFT.
We have the following square of biadjunctions
\begin{flalign}\label{eqn:bigggsquare}
\xymatrix@R=3em@C=5em{
\ar@<-0.7ex>[d]_-{J_!} \AQFT(\ovr{\Disk(\bbS^1)}) \ar@<0.7ex>[r]^-{\iota} ~&~ \2AQFT(\ovr{\Disk(\bbS^1)}) \ar@<-0.7ex>[d]_-{J_!} \ar@<0.7ex>[l]^-{\pi}\\
\AQFT(\ovr{\Open(\bbS^1)}) \ar@<-0.7ex>[u]_-{J^\ast} \ar@<0.7ex>[r]^-{\iota}~&~\2AQFT(\ovr{\Open(\bbS^1)})\ar@<0.7ex>[l]^-{\pi} \ar@<-0.7ex>[u]_-{J^\ast}
}
\end{flalign}
where we recall that,  with a mild abuse of notation,  the extension-restriction adjunctions
for both ordinary AQFTs \eqref{eqn:Fredenhagenadjunction} and 2AQFTs \eqref{eqn:J!J*biadjunction}
are denoted by the same symbol $J_!\dashv J^\ast$.
The horizontal biadjunctions in this diagram are the inclusion-truncation biadjunctions from Theorem \ref{theo:biadjunction}.
By direct inspection, one confirms that the square formed by the right adjoint $2$-functors commutes, i.e.\
$\pi \,J^\ast = J^\ast\,\pi$, hence the square formed by the left adjoint pseudo-functors
commutes up to an equivalence, i.e.\ $\iota\,J_! \simeq J_!\,\iota$. 
We would like to stress that this observation implies the (expected) 
result that the extension of a truncated 2AQFT is truncated too.
More concretely,  we observe that
Fredenhagen's universal category for a truncated 2AQFT 
\begin{flalign}\label{eqn:truncequivalence}
J_!(\iota(\AAA))(\bbS^1)\, \simeq\,\Mod_{J_!(\AAA)(\bbS^1)}
\end{flalign}
is equivalent to the category of right modules over Fredenhagen's universal 
algebra $J_!(\AAA)(\bbS^1)\in \Alg_\bbK$. 
The latter is given by the ordinary colimit 
\begin{flalign}\label{eqn:ordinaryFredenhagenS1}
J_!(\AAA)(\bbS^1)\,=\, \colim \left( 
\xymatrix{ 
J^\otimes / (\bbS^1) \ar[r]^-{\text{forget}} 
~&~ \P_{\ovr{\Disk(\bbS^1)}}^\otimes \ar[r]^-{\und{\AAA}} ~&~ \Alg_{\bbK} 
} 
\right) \quad,
\end{flalign}
where $\und{\AAA}: \P_{\ovr{\Disk(\bbS^1)}}^{\otimes} \to \Alg_\bbK$
is the symmetric monoidal functor from the monoidal envelope that is 
determined by $\AAA\in\AQFT(\ovr{\Disk(\bbS^1)})$.
\sk

To obtain a better understanding of the objects and morphisms 
in our general presentation of Fredenhagen's universal category 
$J_!(\iota(\AAA))(\bbS^1)$, we construct explicitly a functor
$\Mod_{J_!(\AAA)(\bbS^1)} \to J_!(\iota(\AAA))(\bbS^1)$ that 
implements the equivalence \eqref{eqn:truncequivalence}.
Let us first describe this functor on objects.
Given any right module $V\in \Mod_{J_!(\AAA)(\bbS^1)}$ over Fredenhagen's 
universal algebra, we use the canonical $\Alg_\bbK$-morphisms
$\chi_{\und{I}}^{} : \und{\AAA}(\und{I})=\bigotimes_{i=1}^n \AAA(I_i)\to J_!(\AAA)(\bbS^1)$
to the colimit \eqref{eqn:ordinaryFredenhagenS1} to define
\begin{flalign}\label{eqn:VItruncated}
V_{\und{I}}^{} \,:=\, \chi_{\und{I}}^{\ast}\big(V\big)\,\in\,\Mod_{\und{\AAA}(\und{I})}
\end{flalign}
by restriction of modules, for each $\und{I}\in J^\otimes / (\bbS^1) $.
Given any morphism $\alpha : \und{I}\to\und{J}$ in $J^\otimes / (\bbS^1)$,
the functor $\iota(\AAA)^\mathrm{R}(\alpha) = \und{\AAA}(\alpha)^\ast : 
\Mod_{\und{\AAA}(\und{J})} \to \Mod_{\und{\AAA}(\und{I})}$ is given by restriction
of modules along the $\Alg_\bbK$-morphism $ \und{\AAA}(\alpha) :  \und{\AAA}(\und{I})\to  \und{\AAA}(\und{J})$.
Because $\{\chi_{\und{I}}^{}\}_{\und{I}}^{}$ is a co-cone, we obtain
\begin{subequations}\label{eqn:xiatruncated}
\begin{flalign}
\iota(\AAA)^\mathrm{R}(\alpha) \big(V_{\und{J}}^{}\big) =  \und{\AAA}(\alpha)^\ast\,\chi_{\und{J}}^{\ast}\big(V\big)
= \big(\chi_{\und{J}}^{}\,\und{\AAA}(\alpha)\big)^{\ast}\big(V\big)= \chi_{\und{I}}^\ast\big(V\big) = V_{\und{I}}^{}
\end{flalign}
and therefore we can set
\begin{flalign}
\xi^V_\alpha \,:=\, \id_{V_{\und{I}}^{}}^{}\,:\, \iota(\AAA)^\mathrm{R}(\alpha) \big(V_{\und{J}}^{}\big) ~\longrightarrow~V_{\und{I}}^{}
\end{flalign}
\end{subequations}
to be the identity morphism. One easily checks that the coherence conditions \eqref{eqn:Vxicoherences} are satisfied,
hence we have defined an object $(V,\xi^V)\in J_!(\iota(\AAA))(\bbS^1)$.
Let us now define the functor $\Mod_{J_!(\AAA)(\bbS^1)} \to J_!(\iota(\AAA))(\bbS^1)$ on morphisms.
Given any morphism $L : V\to W$ in $\Mod_{J_!(\AAA)(\bbS^1)}$, consider the restrictions
\begin{flalign}\label{eqn:Ltruncated}
L_{\und{I}}^{} \,:=\, \chi_{\und{I}}^\ast\big(L\big)\,:\, V_{\und{I}}^{} = \chi_{\und{I}}^\ast\big(V\big)~\longrightarrow~\chi_{\und{I}}^\ast\big(W\big)=W_{\und{I}}^{} \quad,
\end{flalign}
for all $\und{I}\in J^\otimes / (\bbS^1) $. One easily checks that the coherence conditions \eqref{eqn:Gammacoherences} are satisfied,
hence we have defined a morphism $L:(V,\xi^V)\to (W,\xi^W)$ in $J_!(\iota(\AAA))(\bbS^1)$. 
Using the universal property of the colimit \eqref{eqn:ordinaryFredenhagenS1}, one checks that the
resulting functor $\Mod_{J_!(\AAA)(\bbS^1)} \to J_!(\iota(\AAA))(\bbS^1)$ implements the equivalence \eqref{eqn:truncequivalence}.
\sk

Summing up, we have found that, in the case of a truncated 2AQFT $\AAAA=\iota(\AAA)$,
the objects of Fredenhagen's universal category can be described as
families of right modules \eqref{eqn:VItruncated} over the local algebras 
$\und{\AAA}(\und{I}) = \bigotimes_{i=1}^n\AAA(I_i)$  on disjoint unions of intervals, 
whose restrictions along inclusions $\alpha : \und{I}\to\und{J}$ coincide \eqref{eqn:xiatruncated}.
Morphisms in Fredenhagen's universal category can be described by 
locally defined module morphisms \eqref{eqn:Ltruncated}, whose restrictions along inclusions $\alpha : \und{I}\to\und{J}$ coincide.
\end{ex}

\begin{ex}\label{ex:RepGFredCat}
Let us consider the gauging  $\bbK^G\in\2AQFT(\ovr{\Disk(\bbS^1)})$ 
of the trivial AQFT $\bbK\in \AQFT(\ovr{\Disk(\bbS^1)})$ with respect to the trivial action of
a finite group $G$, which is a non-truncated 2AQFT for every non-trivial group $G \neq \{ e\}$,
see Example \ref{ex:trivialAQFTgauging}. Because 2AQFTs are by definition
prefactorization algebras with values in $\Pr_\bbK$ (cf.\ Definition \ref{def:2AQFT})
and $\bbK^G\in\2AQFT(\ovr{\Disk(\bbS^1)})$ is also {\em locally constant},
i.e.\ $\bbK^G(\iota_I^J) : \bbK^G(I)\to\bbK^G(J)$ is an equivalence in $\Pr_\bbK$ 
for every interval inclusion $\iota_I^J: I\to J$,
we can compute Fredenhagen's universal category  $J_!(\bbK^G)(\bbS^1)$
for this particular example by factorization homology \cite{AyalaFrancis}.
Using in particular \cite[Theorem 3.19]{AyalaFrancis}, we obtain that
\begin{flalign}
J_!(\bbK^G)(\bbS^1) \,\simeq\, \mathrm{HH}_\bullet\big(\Rep_\bbK(G)\big)
\end{flalign}
is equivalent to the Hochschild homology
of the associative and unital algebra 
$(\Rep_\bbK(G),\otimes,\bbK)\in \Alg_\mathsf{As}(\Pr_\bbK)$
in $\Pr_{\bbK}$. (The latter is just the usual monoidal category
of $\bbK$-linear representations of $G$, regarded internally in the 
symmetric monoidal $2$-category  $\Pr_{\bbK}$.) 
Hochschild homology can be computed as a bicolimit (in $\Pr_\bbK$)
\begin{flalign}\label{eqn:HHsimplicial}
\mathrm{HH}_\bullet\big(\Rep_\bbK(G)\big)\,=\,
\bicolim\left(
\xymatrix@C=2em{
\Rep_\bbK(G) ~&~ \ar@<-0.5ex>[l]\ar@<0.5ex>[l]\Rep_\bbK(G)^{\boxtimes 2} ~&~\ar@<0ex>[l] \ar@<-1ex>[l]\ar@<1ex>[l]\Rep_\bbK(G)^{\boxtimes 3} ~&~
\ar@<-0.5ex>[l]\ar@<-1.5ex>[l]\ar@<0.5ex>[l]\ar@<1.5ex>[l]\cdots
}
\right)
\end{flalign}
of the simplicial diagram associated with $(\Rep_\bbK(G),\otimes,\bbK)\in \Alg_\mathsf{As}(\Pr_\bbK)$,
see e.g.\ \cite[Section 5.1]{DerivedCenters}. (As usual, we suppress the degeneracy maps in \eqref{eqn:HHsimplicial}.)
Since we are working in a $2$-categorical context, this simplicial diagram may be truncated
after $\Rep_\bbK(G)^{\boxtimes 3}$. 
\sk

We will now compute the bicolimit \eqref{eqn:HHsimplicial} explicitly 
by using the techniques of \cite{BCJF}, see also the end of Section \ref{subsec:extension} for a short summary.
A more conceptual explanation of the obtained result is given in Remark \ref{rem:centercocenter} below.
By  \cite[Lemma 2.5]{BCJF}, we can compute this bicolimit
in terms of the bilimit (in the $2$-category $\Cat$ of categories)
\begin{flalign}\label{eqn:HHsimplicialLIMIT}
\mathrm{HH}_\bullet\big(\Rep_\bbK(G)\big)\,=\,
\mathrm{bilim}\left(
\xymatrix@C=2em{
\Rep_\bbK(G)  \ar@<-0.5ex>[r]\ar@<0.5ex>[r] ~&~\Rep_\bbK(G^2) \ar@<0ex>[r] \ar@<-1ex>[r]\ar@<1ex>[r] ~&~ \Rep_\bbK(G^{3})
}
\right)
\end{flalign}
of the truncated cosimplicial diagram obtained by taking right adjoints of the face and degeneracy maps in \eqref{eqn:HHsimplicial}.
In this expression we have also used that $\Rep_\bbK(G)^{\boxtimes n}\simeq \Rep_\bbK(G^n)$ is equivalent 
to the representation category of the product group $G^n$. The coface and codegeneracy maps
in \eqref{eqn:HHsimplicialLIMIT} are given by coinduced representation functors
$\phi_\ast : \Rep_\bbK(G^\prime)\to\Rep_\bbK(G^{\prime\prime})$ for
suitable group maps $\phi : G^\prime\to G^{\prime\prime}$. Concretely,
we have that 
\begin{flalign}
\delta^0 = \delta^1 = {\Delta}_\ast \, :\, \Rep_\bbK(G)~\longrightarrow~ \Rep_\bbK(G^2)
\end{flalign}
for the diagonal map $\Delta : G\to G^2\,,~ g\mapsto (g,g)$, and that 
\begin{subequations}
\begin{flalign}
\delta^i = \phi^i_\ast \, :\, \Rep_\bbK(G^2)~\longrightarrow~ \Rep_\bbK(G^3)
\end{flalign}
for
\begin{flalign}
\phi^i \,:\, G^2~\longrightarrow~G^3 ~~,\quad (g_1,g_2) ~\longmapsto~\begin{cases}
(g_1,g_1,g_2)&~,~~\text{for }i=0~~,\\
(g_1,g_2,g_2)&~,~~\text{for }i=1~~,\\
(g_1,g_2,g_1)&~,~~\text{for }i=2~~.
\end{cases}
\end{flalign}
\end{subequations}
The codegeneracy map $\epsilon^0 : \Rep_\bbK(G^2)\to \Rep_\bbK(G)$
is given by the coinduced representation functor for $G^2\to G\,,~(g_1,g_2)\mapsto g_1$.
\sk

We are now ready to describe the bilimit \eqref{eqn:HHsimplicialLIMIT} and 
hence the category $\mathrm{HH}_\bullet\big(\Rep_\bbK(G)\big)$ in more explicit terms:
\begin{itemize}
\item An object is a tuple $(V,\theta^V)$, where $V\in\Rep_\bbK(G)$
and $\theta^V : \delta^1(V)\to\delta^0(V) $ is an isomorphism in $\Rep_\bbK(G^2)$,
such that $\epsilon^0(\theta^V) =\id_V$ and $\delta^0(\theta^V)\circ \delta^2(\theta^V) = \delta^1(\theta^V)$
in $\Rep_\bbK(G^3)$.

\item  A morphism $L : (V,\theta^V)\to (W,\theta^W)$ is a morphism $L:V\to W$
in $\Rep_\bbK(G)$, such that the diagram
\begin{flalign}
\xymatrix@C=5em{
\ar[d]_-{\theta^V}\delta^1(V) \ar[r]^-{\delta^1(L)}~&~\delta^1(W)\ar[d]^-{\theta^W}\\
\delta^0(V)\ar[r]_-{\delta^0(L)}~&~\delta^0(W)
}
\end{flalign}
in $\Rep_\bbK(G^2)$ commutes.
\end{itemize}
We can simplify this description further by using  explicit models for the coinduced 
representation functors $\phi_\ast : \Rep_\bbK(G^\prime)\to \Rep_{\bbK}(G^{\prime\prime})$
for group maps $\phi : G^{\prime}\to G^{\prime\prime}$. 
Since we consider only {\em finite} groups and a base field $\bbK$ of characteristic $0$,
there exists a natural isomorphism between the coinduced and the induced 
representation functors $\phi_\ast\cong \phi_! : \Rep_\bbK(G^\prime)\to \Rep_{\bbK}(G^{\prime\prime}) $.
The latter is easy to describe: For $V\in \Rep_\bbK(G^\prime)$, we set
$\phi_!(V) := \bbK[G^{\prime\prime}] \otimes_{\bbK[G^{\prime}]} V\in  \Rep_\bbK(G^{\prime\prime})$ 
to be the relative tensor product, where $\bbK[G^{\prime}]$ and $\bbK[G^{\prime\prime}]$
denote the group Hopf algebras associated with the finite groups $G^{\prime}$ and $G^{\prime\prime}$.
(Recall that $ \Rep_\bbK(G^\prime) = {}_{\bbK[G^\prime]}\Mod$ is the category of left $\bbK[G^\prime]$-modules,
and similar for $G^{\prime\prime}$.) Given any object $(V,\theta^V)\in \mathrm{HH}_\bullet\big(\Rep_\bbK(G)\big)$, 
we use this explicit description to deduce that $\theta^V : \bbK[G^2]\otimes_{\bbK[G]} V\to \bbK[G^2]\otimes_{\bbK[G]} V$
is completely determined by a $\bbK$-linear map $\vartheta^V: V \to \bbK[G]\otimes V$ via
$\theta^V(1\otimes 1 \otimes v) = 1\otimes \vartheta^V(v)$, which is $G$-equivariant
with respect to the adjoint action on $\bbK[G]$ and satisfies the axioms of a left $\bbK[G]$-coaction.
Moreover, we deduce that a morphism in $ \mathrm{HH}_\bullet\big(\Rep_\bbK(G)\big)$ is a $G$-equivariant
map that preserves these $\bbK[G]$-coactions.
In summary, we have obtained the following chain of equivalences
\begin{flalign}\label{eqn:RepGFredenhagen}
J_!(\bbK^G)(\bbS^1) \,\simeq\, \mathrm{HH}_\bullet\big(\Rep_\bbK(G)\big)
\,\simeq\, G{\text{-}}^{\bbK[G]}\Mod \,\simeq\, G{\text{-}}\Mod_{\O(G)}\quad,
\end{flalign}
where in the last step we have used that $\bbK[G]$-comodules are equivalent to
modules over the dual Hopf algebra $\O(G)$ of functions on $G$. (The $G$-action
on $\O(G)$ is again the adjoint action.)
\sk

Let us briefly explain the physical interpretation of this result.
By Remark \ref{rem:orbifoldinterpretation}, we can interpret
$\bbK^G\in\2AQFT(\ovr{\Disk(\bbS^1)})$ as an orbifold $\sigma$-model 
that is defined on intervals and whose target is 
the classifying stack $\mathbf{B}G = \{\ast\}//G$ of $G$.
Indeed, the stack of fields on an interval $I\subset \bbS^1$ 
is $\mathrm{Fields}(I) = \mathrm{Map}(I,\mathbf{B}G) \simeq \{\ast\}//G$
and its category of quasi-coherent sheaves is
$\QCoh(\mathrm{Fields}(I)) \simeq  \Rep_\bbK(G)$, which coincides with the category
that the 2AQFT $\bbK^G$ assigns to intervals. On the whole circle $\bbS^1$,
the stack of fields of this orbifold $\sigma$-model is given by the loop stack
$\mathrm{Fields}(\bbS^1) = \mathrm{Map}(\bbS^1,\mathbf{B}G) \simeq \mathrm{Bun}_G(\bbS^1)$,
which is equivalent to the stack of principal $G$-bundles on $\bbS^1$. The non-trivial
bundles can be interpreted physically as ``twisted sectors'' of this orbifold $\sigma$-model, 
see e.g.\ \cite{Dijkgraaf} and also \cite{JohnsonFreydOrbifold}. The category of quasi-coherent sheaves
on this stack is given by $\QCoh(\mathrm{Fields}(\bbS^1) ) \simeq G{\text{-}}\Mod_{\O(G)}$,
which coincides with our result for Fredenhagen's universal category \eqref{eqn:RepGFredenhagen}.
Hence, Fredenhagen's universal category successfully detects all  ``twisted sectors''
for this simple example of an orbifold $\sigma$-model.
\end{ex}

\begin{rem}\label{rem:centercocenter}
The category \eqref{eqn:RepGFredenhagen} that we obtain for the circle is the 
representation category of the groupoid of principal $G$-bundles over the circle, i.e.\ the representation category of 
the loop groupoid $G//G$ of $G$ (the action groupoid of the action of $G$ on itself by conjugation). 
This category is also the Drinfeld center of $\Rep_\bbK(G)$, i.e.\ the Hochschild \emph{co}homology.
As a consequence, the Hochschild homology and Hochschild cohomology for $\Rep_\bbK(G)$ are equivalent.
More general conditions under which one finds such an equivalence are given in \cite[Corollary~3.1.5]{douglasetal}
within the framework of finite tensor categories and in \cite[Theorem~1.7]{DerivedCenters} within the framework of derived algebraic geometry.
\end{rem}

\begin{ex}
As a last example, we discuss briefly the gauging $\AAA^G\in\2AQFT(\ovr{\Disk(\bbS^1)})$
of an arbitrary $G$-equivariant AQFT $(\AAA,\rho)\in G{\text{-}}\AQFT(\ovr{\Disk(\bbS^1)})$,
which includes Examples \ref{ex:TruncFredCat} and \ref{ex:RepGFredCat} as very special cases.
Unfortunately, it seems to be very hard to simplify our explicit description of 
Fredenhagen's universal category $J_!(\AAA^G)(\bbS^1)$ in this general case.
(Note that computing this category as in Example \ref{ex:RepGFredCat} 
by importing techniques from factorization homology is in general not possible,
because we are also interested in 2AQFTs that are {\em not} locally constant.)
In order to develop a better understanding of the category $J_!(\AAA^G)(\bbS^1)$,
we shall specialize our general description of 
Fredenhagen's universal category from the end of Section \ref{subsec:extension} to our example at hand.
Concretely, an object $(V,\xi^V)\in J_!(\AAA^G)(\bbS^1)$ consists of the following data:
\begin{enumerate}[(1)]
\item For each tuple $\und{I}=(I_1,\dots,I_n)\in J^\otimes/(\bbS^1)$ of mutually disjoint intervals, 
a $G^n$-equivariant module 
\begin{flalign}\label{eqn:Vgeneralorbifold}
V_{\und{I}}\,\in\, G^n{\text{-}}\Mod_{\und{\AAA}(\und{I})}
\end{flalign} over the 
tensor product algebra $\und{\AAA}(\und{I}) = \bigotimes_{i=1}^n\AAA(I_i)$.
(The $G^n$-action on the tensor product algebra is given by the component-wise $G$-actions.)

\item For each morphism $\alpha : \und{I}=(I_1,\dots,I_n) \to\und{J}=(J_1,\dots,J_m)$ in $J^\otimes/(\bbS^1)$,
a $G^n{\text{-}}\Mod_{\und{\AAA}(\und{I})}$-isomorphism
\begin{flalign}\label{eqn:xiVgeneralorbifold}
\xi_{\alpha}^V \,:\, (\AAA^G)^{\mathrm{R}}(\alpha)\big(V_{\und{J}}\big)~\longrightarrow~V_{\und{I}}\quad.
\end{flalign}
Here $(\AAA^G)^{\mathrm{R}} (\alpha): G^m{\text{-}}\Mod_{\und{\AAA}(\und{J})}\to G^n{\text{-}}\Mod_{\und{\AAA}(\und{I})}$
is the right adjoint of the functor
\begin{flalign}
\xymatrix@C=2em{
\ar[dr]_-{\Delta_\alpha^\ast} G^n{\text{-}}\Mod_{\und{\AAA}(\und{I})} \ar[rr]^-{\AAA^G(\alpha)}~&~~&~G^m{\text{-}}\Mod_{\und{\AAA}(\und{J})}\\
~&~G^m{\text{-}}\Mod_{\Delta_\alpha^\ast(\und{\AAA}(\und{I}))}\ar[ru]_-{\und{\AAA}(\alpha)_!} ~&~
}
\end{flalign}
where $\Delta_\alpha : G^m\to G^n\,,~(g_1,\dots,g_m)\mapsto (g_{\alpha(1)},\dots,g_{\alpha(n)})$
is the group map determined by $\alpha : \{1,\dots,n\}\to\{1,\dots,m\}$,
$\Delta_\alpha^\ast : \Rep_\bbK(G^n)\to\Rep_\bbK(G^m)$ denotes the corresponding restricted representation functor
and $\und{\AAA}(\alpha)_!$  is the induced module functor for the $G^m$-equivariant
algebra morphism $\und{\AAA}(\alpha) : \Delta_\alpha^\ast(\und{\AAA}(\und{I})) \to\und{\AAA}(\und{J})$.
Explicitly, one finds that $(\AAA^G)^{\mathrm{R}} (\alpha)$ is given by the composition
\begin{flalign}
\xymatrix@C=5em{
G^m{\text{-}}\Mod_{\und{\AAA}(\und{J})}\ar[d]_-{\und{\AAA}(\alpha)^\ast} \ar[r]^-{(\AAA^G)^{\mathrm{R}} (\alpha)}~&~
G^n{\text{-}}\Mod_{\und{\AAA}(\und{I})} \\
G^m{\text{-}}\Mod_{\Delta_\alpha^\ast(\und{\AAA}(\und{I}))} \ar[r]_-{{\Delta_\alpha}_\ast} ~&~ G^n{\text{-}}\Mod_{{\Delta_\alpha}_\ast\Delta_\alpha^\ast(\und{\AAA}(\und{I}))} \ar[u]_-{\eta_{\und{\AAA}(\und{I})}^\ast}
}
\end{flalign}
where $\eta$ denotes the unit of the adjunction
$\Delta_\alpha^\ast : \Rep_\bbK(G^n) \rightleftarrows \Rep_\bbK(G^m) : {\Delta_\alpha}_\ast$.
\end{enumerate}
These data have to satisfy the coherence conditions \eqref{eqn:Vxicoherences}.
\sk

Observe from \eqref{eqn:Vgeneralorbifold} that $V_{\und{I}}$ is a module
over the tensor product algebra $\und{\AAA}(\und{I}) = \bigotimes_{i=1}^n\AAA(I_i)$
associated to a tuple of mutually disjoint intervals together with a {\em separate
$G$-action for each connected component}. In other words, the group $G$ is allowed to 
act differently on different intervals, which is a characteristic feature of a {\em local} gauge symmetry.
To understand better the coherence maps \eqref{eqn:xiVgeneralorbifold},
let us consider the case where we include two intervals into a single bigger interval,
i.e.\ $\alpha : \und{I} = (I_1,I_2)\to J$. In this case $\Delta_\alpha = \Delta : G\to G^2$
is the diagonal map and \eqref{eqn:xiVgeneralorbifold} is given by a
$G^2{\text{-}}\Mod_{\und{\AAA}(\und{I})}$-isomorphism
\begin{flalign}
\xi^V_\alpha \,:\, \eta_{\und{\AAA}(\und{I})}^\ast\,\Delta_\ast\,\und{\AAA}(\alpha)^\ast\big(V_J\big)~\longrightarrow~
V_{\und{I}}\quad.
\end{flalign}
Using as in Example \ref{ex:RepGFredCat} that $\Delta_\ast\cong\Delta_! : \Rep_\bbK(G)\to\Rep_\bbK(G^2)$ 
is naturally isomorphic to the induced representation functor, we obtain that $\xi^V_\alpha$ is completely determined
by a $\bbK$-linear map $\kappa_{\alpha}^V : V_J\to V_{\und{I}}$ via
$\xi^V_\alpha(1\otimes 1\otimes v) =  \kappa_\alpha^V(v)$, for all $v\in V_J$. 
This $\bbK$-linear map has to satisfy the following conditions:
1.)~$G$-equivariance: $\kappa_\alpha^V(g\,v) = (g,g)\,\kappa^V_\alpha(v)$, for all $v\in V_J$
and $g\in G$. 2.)~Preservation of the $\und{\AAA}(\und{I})$-actions:
\begin{flalign}\label{eqn:TMP2intervals}
\kappa_{\alpha}^V(v) \cdot (a_1\otimes a_2) = 
\sum_{(g_1,g_2)\in G^2} (g_1^{-1},g_2^{-1})\,\kappa_\alpha^V\Big( v\cdot \Big(\AAA(\iota_{I_1}^J)(g_1\, a_1)\,\AAA(\iota_{I_2}^J)(g_2\,a_2)\Big)\Big)\quad,
\end{flalign}
for all $a_1\otimes a_2\in \AAA(I_1)\otimes \AAA(I_2)$ and $v\in V_J$, where 
$\iota_{I_i}^J : I_i\to J$ denote the interval inclusions.
(The sum over $G^2$ comes from the unit $\eta$ of the adjunction $\Delta^\ast \dashv \Delta_\ast$
when we use $\Delta_!$ as a model for $\Delta_\ast$.) Comparing  \eqref{eqn:TMP2intervals} 
with the truncated case from Example \ref{ex:TruncFredCat}, we observe that 
there is a component-wise $G^2$-action on the algebra element
$a_1\otimes a_2\in \AAA(I_1)\otimes \AAA(I_2)$ on a pair of intervals 
before it acts on the module element $v\in V_J$ on the single bigger interval.
From a superficial point of view, this behavior resembles the twisted representations of
$G$-equivariant AQFTs by M\"uger \cite{Mueger}. Unfortunately, we do not
understand at the moment if there exists a precise relationship between Fredenhagen's universal category
$J_!(\AAA^G)(\bbS^1)$ for categorified orbifold theories and the results in  \cite{Mueger}.
\end{ex}


\section*{Acknowledgments}
We would like to thank Simen Bruinsma, Chris Fewster, Ignacio Lopez Franco, 
Klaus Fredenhagen, Owen Gwilliam, Theo Johnson-Freyd and Robert Laugwitz
for useful discussions about this work.
We also would like to thank the referees for their detailed comments 
and suggestions that helped us to improve this manuscript.
M.B.\ gratefully acknowledges the financial support of the 
National Group of Mathematical Physics GNFM-INdAM (Italy). 
M.P.\ is supported by a PhD scholarship (RGF\textbackslash EA\textbackslash 180270)
of the Royal Society (UK).
A.S.\ gratefully acknowledges the financial support of 
the Royal Society (UK) through a Royal Society University 
Research Fellowship (UF150099), a Research Grant (RG160517) 
and two Enhancement Awards (RGF\textbackslash EA\textbackslash 180270 and RGF\textbackslash EA\textbackslash 201051). 
L.W.\ is supported by the RTG 1670 ``Mathematics inspired 
by String Theory and Quantum Field Theory''.

\section*{Conflict of interest statement}
On behalf of all authors, the corresponding author states that there is no conflict of interest. 


\appendix

\section{\label{app:2operads}Basic theory of $\Cat$-enriched colored operads}
The aim of this appendix is to set up a suitable framework for $\Cat$-enriched colored operads
(which one could also call $2$-operads or $2$-multicategories) that will be used in this work. 
Our definitions of pseudo-morphisms, 
pseudo-transformations and modifications are a relatively
straightforward generalization of the analogous concepts from 
$2$-category theory (see e.g.\ \cite{Leinster,Lack,SchommerPries})
to the theory of colored operads (see e.g.\ \cite{Yau,BSWoperad}). 
We would like to note that our approach is slightly more flexible than the earlier one by
Corner and Gurski \cite{CornerGurski}, because we allow our pseudo-morphisms
to preserve permutation actions only up to coherent isomorphisms.
This generalization is necessary to capture the quantum field theoretical examples that we study in this work.
See also Remark \ref{rem:CornerGurski} for precise comment on the relationship to \cite{CornerGurski}.

\begin{defi}\label{def:2operad}
A  {\em $\Cat$-enriched colored operad} $\O$ consists of the following data:
\begin{enumerate}[(1)]
\item A collection $\O_0$. Elements are called objects and are denoted by symbols like $a,b,c\in\O$.

\item Categories $\O\big(\substack{t \\ \und{c}}\big)$, for each $t\in \O$ and each 
tuple $\und{c} := (c_1,\dots,c_n)\in \O^n$. Objects of $\O\big(\substack{t \\ \und{c}}\big)$ are 
called $1$-operations and are denoted by symbols like $\phi,\psi$. Morphisms of 
$\O\big(\substack{t \\ \und{c}}\big)$ are called $2$-operations and are denoted by 
symbols like $\alpha,\beta$. We write $\Id$ for the identity $2$-operations and $\alpha\,\beta$
for the (vertical) composition of $2$-operations.

\item Composition functors
$\gamma : \O\big(\substack{t \\ \und{a}}\big) \times \prod_{i=1}^n \O\big(\substack{a_i \\ \und{b}_i }\big)\to 
\O\big(\substack{t \\ \und{\und{b}}}\big)$,
for each $t\in \O$, $\und{a}\in\O^n$ and $\und{b}_i\in \O^{k_i}$, for $i=1,\dots,n$, where 
$\und{\und{b}} := (\und{b}_1,\dots,\und{b}_n)$. We write
$\phi\,\und{\psi} := \gamma(\phi,(\psi_1,\dots,\psi_n))$ for the composition of $1$-operations
and $\alpha\ast\und{\beta} := \gamma(\alpha,(\beta_1,\dots,\beta_n))$ for the (horizontal) 
composition of $2$-operations.

\item Functors $\oone : \1 \to \O\big(\substack{t\\t}\big)$, for each $t\in \O$, where $\1$ 
is the category with only one object and its identity morphism. We also write 
$\oone\in \O\big(\substack{t\\t}\big)$ for the corresponding identity $1$-operation.

\item Permutation functors $\O(\sigma) : \O\big(\substack{t \\ \und{c}}\big)\to 
\O\big(\substack{t \\ \und{c}\sigma}\big)$, for each $t\in \O$, $\und{c}\in\O^n$ and permutation
$\sigma\in\Sigma_n$, where $\und{c}\sigma := (c_{\sigma(1)},\dots,c_{\sigma(n)})$. 
We write $\phi\cdot \sigma := \O(\sigma) (\phi)$ and $\alpha\cdot\sigma :=\O(\sigma) (\alpha)$
for the permutation action on $1$- and $2$-operations.
\end{enumerate}
These data are required to satisfy the usual permutation action, associativity, unitality and equivariance
axioms, see e.g.\ \cite[Definition 11.2.1]{Yau}.
\end{defi}

\begin{defi}\label{def:pseudomorphism}
Let $\O$ and $\P$ be $\Cat$-enriched colored operads. A {\em pseudo-morphism} 
$F : \O\to \P$ consists of the following data:
\begin{enumerate}[(1)]
\item A function $F : \O_0\to \P_0$.
\item Functors $F : \O\big(\substack{t\\ \und{c}}\big) \to \P\big(\substack{Ft\\F\und{c}}\big)$,
for each $t\in\O$ and $\und{c}\in\O^n$, where $F\und{c} := (Fc_1,\dots,F c_n)$.
\item Natural isomorphisms 
\begin{flalign}
\xymatrix@C=5em{
\ar[d]_-{\gamma^\O}\O\big(\substack{t \\ \und{a}}\big) \times \prod\limits_{i=1}^n \O\big(\substack{a_i \\ \und{b}_i }\big)\ar[r]^-{F\times\prod_i F}~&~\P\big(\substack{Ft \\ F\und{a}}\big) \times \prod\limits_{i=1}^n \P\big(\substack{F a_i \\ F\und{b}_i }\big)\ar[d]^-{\gamma^\P}\ar@{=>}[dl]_-{F^2~}\\
\O\big(\substack{t \\ \und{\und{b}}}\big)\ar[r]_-{F}~&~\P\big(\substack{Ft \\ F\und{\und{b}}}\big)
}
\end{flalign}
for each $t\in \O$, $\und{a}\in\O^n$ and $\und{b}_i\in \O^{k_i}$, for $i=1,\dots,n$.

\item Natural isomorphisms
\begin{flalign}
\xymatrix@R=1.5em@C=2.5em{
\ar[dd]_-{\oone^\O} \1 \ar[ddrr]^-{\oone^P} ~&&~\\
&\ar@{=>}[dl]_-{F^0}&\\
\O\big(\substack{t\\t}\big)\ar[rr]_-{F}~&&~\P\big(\substack{Ft\\Ft}\big)
}
\end{flalign}
for each $t\in\O$.

\item Natural isomorphisms
\begin{flalign}\label{eqn:sigmafiller}
\xymatrix@C=5em{
\ar[d]_-{\O(\sigma)}\O\big(\substack{t \\ \und{c}}\big) \ar[r]^-{F} ~&~ \P\big(\substack{Ft \\ F\und{c}}\big)\ar[d]^-{\P(\sigma)} \ar@{=>}[dl]_-{F^\sigma~}\\
\O\big(\substack{t \\ \und{c}\sigma}\big)\ar[r]_-{F}~&~ \P\big(\substack{Ft \\ F\und{c}\sigma}\big)
}
\end{flalign}
for each $t\in \O$, $\und{c}\in\O^n$ and $\sigma\in\Sigma_n$.
\end{enumerate}
These data are required to satisfy the following axioms:
\begin{subequations}
\begin{flalign}
\xymatrix@C=5em{
\ar[d]_-{\Id\ast \prod F^2} (F\phi)\,(F\und{\psi})\, (F\und{\und{\rho}}) \ar[r]^-{F^2\ast \prod \Id}~&~F(\phi\,\und{\psi}) \,(F\und{\und{\rho}})\ar[d]^-{F^2}\\
(F\phi) \,F(\und{\psi}\,\und{\und{\rho}})\ar[r]_-{F^2}~&~F(\phi\,\und{\psi}\,\und{\und{\rho}})
}
\end{flalign}
\begin{flalign}
\xymatrix@C=5em{
\ar[d]_-{F^0 \ast\Id}\oone^\P \, (F\phi) \ar[rd]^-{\Id}~&~ & \ar[d]_-{\Id\ast\prod F^0}(F\phi)\,\prod\oone^\P 
\ar[rd]^-{\Id} ~&~\\
(F\oone^\O)\,(F\phi)\ar[r]_-{F^2}~&~F(\oone^\O\,\phi) & (F\phi)\prod F\oone^\O \ar[r]_-{F^2} ~&~F(\phi\,\prod\oone^\O)
}
\end{flalign}
\begin{flalign}
\xymatrix@C=5em{
\ar[d]_-{\Id} ((F\phi)\cdot \sigma)\cdot\sigma^\prime \ar[r]^-{F^\sigma \cdot \sigma^\prime} ~&~ (F(\phi\cdot\sigma))\cdot \sigma^\prime\ar[d]^-{F^{\sigma^\prime}} & \ar[d]_-{\Id}F\phi \ar[dr]^-{\Id} ~&~\\
(F\phi)\cdot(\sigma\sigma^\prime)\ar[r]_-{F^{\sigma\sigma^\prime}}~&~F(\phi\cdot(\sigma\sigma^\prime)) & (F\phi)\cdot e \ar[r]_-{F^e} ~&~F(\phi\cdot e)
}
\end{flalign}
\begin{flalign}
\xymatrix@C=8em{
\ar[d]_-{\Id}((F\phi)\,(F\und{\psi}))\cdot\sigma\langle k_1,\dots,k_n\rangle \ar[r]^-{F^2\cdot \sigma\langle k_1,\dots,k_n\rangle } ~&~(F(\phi\,\und{\psi}))\cdot \sigma\langle k_1,\dots,k_n\rangle\ar[d]^-{F^{\sigma\langle k_1,\dots,k_n\rangle}}\\
\ar[d]_-{F^\sigma\ast\prod \Id}((F\phi)\cdot\sigma)\,(F\und{\psi}\sigma)~&~F\big((\phi\,\und{\psi})\cdot \sigma\langle k_1,\dots,k_n\rangle\big)\ar[d]^-{\Id}\\
F(\phi\cdot\sigma)\,(F\und{\psi}\sigma)\ar[r]_-{F^2}~&~F\big((\phi\cdot \sigma)\,(\und{\psi}\sigma)\big)
}
\end{flalign}
\begin{flalign}
\xymatrix@C=8em{
\ar[d]_-{\Id}((F\phi)\,(F\und{\psi}))\cdot(\sigma_1\oplus\cdots\oplus\sigma_n)\ar[r]^-{F^2\cdot(\sigma_1\oplus\cdots\oplus\sigma_n) }  ~&~(F(\phi\,\und{\psi}))\cdot (\sigma_1\oplus\cdots\oplus\sigma_n)\ar[d]^-{F^{\sigma_1\oplus\cdots\oplus\sigma_n}}\\
\ar[d]_-{\Id\ast\prod F^{\sigma_i}}(F\phi)\,\big((F\und{\psi})\cdot(\sigma_1\oplus\cdots\oplus\sigma_n)\big)~&~F\big((\phi\,\und{\psi})\cdot (\sigma_1\oplus\cdots\oplus\sigma_n)\big)\ar[d]^-{\Id}\\
(F\phi)\, \big(F(\und{\psi}\cdot (\sigma_1\oplus \cdots \oplus \sigma_n))\big)\ar[r]_-{F^2}~&~ 
F\big(\phi\,(\und{\psi}\cdot(\sigma_1\oplus \cdots \oplus \sigma_n))\big)
}
\end{flalign}
\end{subequations}
\end{defi}
\begin{rem}\label{rem:CornerGurski}
In the case all coherences $F^\sigma$ for permutation actions in \eqref{eqn:sigmafiller} 
are identities, our concept of pseudo-morphisms specializes to \cite[Definition 2.2]{CornerGurski}. 
We however require the more flexible Definition \ref{def:pseudomorphism} in the present paper, 
because our quantum field theoretical examples of interest generically come with non-trivial coherences
$F^\sigma$.
\end{rem}

\begin{defi}\label{def:pseudotransformation}
Let $\O$ and $\P$ be $\Cat$-enriched colored operads
and $F,G : \O\to\P$ pseudo-morphisms. A {\em pseudo-transformation}
$\zeta: F\Rightarrow G$ consists of the following data:
\begin{enumerate}[(1)]
\item Functors $\zeta_c : \1\to \P\big(\substack{Gc \\ Fc}\big)$, for each $c\in \O$. We also
write  $\zeta_c \in\P\big(\substack{Gc \\ Fc}\big)$ for the corresponding $1$-operation.

\item Natural isomorphisms
\begin{flalign}
\xymatrix@R=1.5em@C=5em{
\ar[d]_-{\cong}\O\big(\substack{t \\ \und{c}}\big)\times\prod\limits_{i=1}^n \1\ar[r]^-{G\times\prod_i\zeta_{c_i}}~&~
\P\big(\substack{Gt \\ G\und{c}}\big) \times\prod\limits_{i=1}^n\P\big(\substack{Gc_i\\ Fc_i}\big)\ar[dd]^-{\gamma^\P}\ar@{=>}[ddl]_-{\zeta^\bullet~}\\
\1 \times \O\big(\substack{t \\ \und{c}}\big)\ar[d]_-{\zeta_t\times F}~&~\\
\P\big(\substack{Gt\\ Ft}\big) \times \P\big(\substack{Ft\\ F\und{c}}\big)\ar[r]_-{\gamma^\P}~&~\P\big(\substack{Gt \\ F\und{c}}\big)
}
\end{flalign}
for each $t\in\O$ and $\und{c}\in\O^n$.
\end{enumerate}
These data are required to satisfy the following axioms:
\begin{subequations}
\begin{flalign}
\xymatrix@C=5em{
\ar[d]_-{G^2\ast \prod\Id}(G\phi)\,(G\und{\psi})\,\prod\zeta_{b_{ij}} \ar[r]^-{\Id\ast \prod\zeta^\bullet} ~&~(G\phi) \,\prod \zeta_{a_i} \, (F\und{\psi}) \ar[r]^-{\zeta^\bullet\ast\prod\Id} ~&~\zeta_t\,(F\phi)\,(F\und{\psi})\ar[d]^-{\Id\ast F^2}\\
G(\phi\,\und{\psi})\,\prod\zeta_{b_{ij}}\ar[rr]_-{\zeta^\bullet}~&~ ~&~\zeta_t\,F(\phi\,\und{\psi})
}
\end{flalign}
\begin{flalign}
\xymatrix@C=5em{
\ar[d]_-{\Id}\oone^\P\, \zeta_t \ar[r]^-{G^0\ast\Id}~&~(G\oone^\O)\,\zeta_t\ar[d]^-{\zeta^\bullet}\\
\zeta_t\,\oone^\P \ar[r]_-{\Id\ast F^0}~&~\zeta_t\,(F\oone^\O)
}
\end{flalign}
\begin{flalign}
\xymatrix@C=5em{
\ar[d]_-{\Id}((G\phi)\,\prod\zeta_{c_i})\cdot \sigma \ar[r]^-{\zeta^\bullet\cdot\sigma} ~&~(\zeta_t\,(F\phi))\cdot \sigma\ar[d]^-{\Id}\\
\ar[d]_-{G^\sigma\ast\prod\Id}((G\phi)\cdot\sigma)\,\prod\zeta_{c_{\sigma(i)}}~&~\zeta_t\,((F\phi)\cdot \sigma)\ar[d]^-{\Id\ast F^\sigma}\\
G(\phi\cdot\sigma)\,\prod\zeta_{c_{\sigma(i)}}\ar[r]_-{\zeta^\bullet}~&~\zeta_t\,F(\phi\cdot\sigma)
}
\end{flalign}
\end{subequations}
\end{defi}

\begin{defi}\label{def:modification}
Let $\O$ and $\P$ be $\Cat$-enriched colored operads,
$F,G : \O\to\P$ pseudo-morphisms and $\zeta,\kappa : F\Rightarrow G$ pseudo-transformations.
A {\em modification} $\Gamma : \zeta \Rrightarrow\kappa$ consists of the following data:
\begin{enumerate}[(1)]
\item Natural transformations
\begin{flalign}
\xymatrix@R=0.5em@C=2.5em{
&\ar@{=>}[dd]^-{\Gamma_c}&\\
\1\vphantom{\big(\substack{Gc \\ Fc}\big)} \,   \ar@/^1.6pc/[rr]^-{\zeta_c} \ar@/_1.6pc/[rr]_-{\kappa_c}&~~&~ \P\big(\substack{Gc \\ Fc}\big)\\
&&\\
}
\end{flalign}
for each $c\in\O$.
\end{enumerate}
These data are required to satisfy the following axioms:
\begin{flalign}
\xymatrix@C=5em{
\ar[d]_-{\zeta^\bullet} (G\phi)\prod\zeta_{c_i} \ar[r]^-{\Id\ast\prod\Gamma_{c_i}} ~&~(G\phi)\prod\kappa_{c_i}\ar[d]^-{\kappa^\bullet}\\
\zeta_t\,(F\phi)\ar[r]_-{\Gamma_t\ast \Id}~&~\kappa_t\,(F\phi)
}
\end{flalign}
\end{defi}

\begin{rem}\label{rem:HomCats}
$\Cat$-enriched colored operads, pseudo-morphisms, pseudo-transformations and modifications
assemble into a tricategory. The various compositions are similar to the case of the tricategory of 
bicategories and hence will not be displayed in full detail here. We refer the reader to
\cite[Appendix A.1]{SchommerPries} for a brief review of the tricategory of bicategories
and to \cite{GPS95} for the details.
\sk

Let us nevertheless fix the relevant notations that will appear in the bulk of this paper.
Given two $\Cat$-enriched colored operads $\O$ and $\P$, the tricategory structure implies that
there exists a $\Hom$-$2$-category
\begin{flalign}
\Alg_\O(\P) \,:=\, [\O,\P]\,\in\, \mathbf{2Cat}\quad,
\end{flalign}
whose objects are pseudo-morphisms from $\O$ to $\P$, $1$-morphisms are pseudo-transformations
and $2$-morphisms are modifications. Following the usual terminology of operad theory,
we shall call $\Alg_\O(\P)$ the {\em $2$-category of $\O$-algebras with values in $\P$}.
Given pseudo-morphisms $F : \O\to\O^\prime$
and $G : \P\to \P^\prime$, there exist pseudo-functors
\begin{flalign}\label{eqn:pullpushApp}
F^\ast\,:\, [\O^\prime,\P]~\longrightarrow~[\O,\P]\quad,\qquad G_\ast \,:\, [\O,\P] ~\longrightarrow~[\O,\P^\prime]\quad,
\end{flalign}
which we call {\em pullback} and {\em pushforward}.
\sk

Let us note that in the case $\O$ and $\P$ are $\Set$-valued colored operads, 
i.e.\ all categories of operations in Definition \ref{def:2operad} are discrete, 
then $\Alg_\O(\P) = [\O,\P]\in\Cat$ is an ordinary category that coincides with
the usual category of $\O$-algebras with values in $\P$, see e.g.\ \cite{Yau,BSWoperad}.
\end{rem}



\begin{thebibliography}{10}

\bibitem[AR94]{AdamekRosicky}
J.~Ad\'amek and J.~Rosick\'y, 
{\it Locally presentable and accessible categories}, 
London Math.\ Soc.\ Lect.\ Note Series {\bf 189}, 
Cambridge University Press, Cambridge (1994). 


\bibitem[AF15]{AyalaFrancis}
D.~Ayala and J.~Francis,
``Factorization homology of topological manifolds,''
J.\ Topol.\ {\bf 8}, no.\ 4, 1045--1084 (2015)
[arXiv:1206.5522 [math.AT]]. 


\bibitem[BBS20]{LinearYM}
M.~Benini, S.~Bruinsma and A.~Schenkel,
``Linear Yang-Mills theory as a homotopy AQFT,''
Commun. \ Math. \ Phys. \ \textbf{378}, no.\ 1, 185--218 (2020)
[arXiv:1906.00999 [math-ph]].


\bibitem[BS19]{BSreview} 
M.~Benini and A.~Schenkel,
``Higher structures in algebraic quantum field theory,''
Fortsch.\ Phys.\  {\bf 67}, no.\ 8-9, 1910015 (2019)
[arXiv:1903.02878 [hep-th]].
  
  
\bibitem[BSW21]{BSWoperad} 
M.~Benini, A.~Schenkel and L.~Woike,
``Operads for algebraic quantum field theory,''
Commun. \ Contemp. \ Math. \ \textbf{23},  2050007 (2021)
[arXiv:1709.08657 [math-ph]].

  
\bibitem[BSW19]{BSWhomotopy}
M.~Benini, A.~Schenkel and L.~Woike, 
``Homotopy theory of algebraic quantum field theories,''
Lett.\ Math.\ Phys.\  {\bf 109}, no.\ 7, 1487 (2019)
[arXiv:1805.08795 [math-ph]]. 


\bibitem[BZBJ18a]{BZBJ}
D.~Ben-Zvi, A.~Brochier and D.~Jordan,
``Integrating quantum groups over surfaces,''
J.\ Topol.\ {\bf 11}, no.\ 4, 874--917  (2018)
[arXiv:1501.04652 [math.QA]].


\bibitem[BZBJ18b]{BZBJ2}
D.~Ben-Zvi, A.~Brochier and D.~Jordan, 
``Quantum character varieties and braided module categories,'' 
Sel.\ Math.\ New Ser.\ {\bf 24}, 4711--4748 (2018)
[arXiv:1606.04769 [math.QA]].


\bibitem[BZFN10]{DerivedCenters}
D.~Ben-Zvi, J.~Francis and D.~Nadler,
``Integral transforms and Drinfeld centers in derived algebraic geometry,'' 
J.\ Amer.\ Math.\ Soc.\ {\bf 23}, no.\ 4, 909-966  (2010)
[arXiv:0805.0157 [math.AG]]. 


\bibitem[Bra14]{Brandenburg}
M.~Brandenburg, 
{\it Tensor categorical foundations of algebraic geometry},
PhD thesis, University of M\"unster (2014)
[arXiv:1410.1716v1 [math.AG]].


\bibitem[BCJF15]{BCJF}
M.~Brandenburg, A.~Chirvasitu and T.~Johnson-Freyd,
``Reflexivity and dualizability in categorified linear algebra,''
Theory Appl.\ Categ.\ {\bf 30}, 808--835  (2015)
[arXiv:1409.5934 [math.CT]]. 


\bibitem[BFV03]{BFV} 
R.~Brunetti, K.~Fredenhagen and R.~Verch,
``The generally covariant locality principle: A new paradigm for local quantum field theory,''
Commun.\ Math.\ Phys.\ {\bf 237}, 31 (2003)
[arXiv:math-ph/0112041].


\bibitem[Brz14]{pillow1}
T.~Brzezi{\'n}ski,
``On the Smoothness of the Noncommutative Pillow and Quantum Teardrops,''
SIGMA {\bf 10}, 015 (2014)
[arXiv:1311.4758 [math.QA]].


\bibitem[BS17]{pillow2}
T.~Brzezi{\'n}ski and A.~Sitarz,
``Smooth geometry of the noncommutative pillow, cones and lens spaces,'' 
J.\ Noncommut.\ Geom.\ {\bf 11}, 413--449 (2017)
[arXiv:1410.6587 [math.QA]].


\bibitem[CPTVV17]{DAG2}
D.~Calaque, T.~Pantev, B.~To{\"e}n, M.~Vaqui{\'e} and G.~Vezzosi,
``Shifted Poisson structures and deformation quantization,''
J.\ Topol.\ {\bf 10}, no.\ 2, 483 (2017)
[arXiv:1506.03699 [math.AG]].


\bibitem[CG13]{CornerGurski}
A.~S.~Corner and N.~Gurski,
``Operads with general groups of equivariance, and some $2$-categorical aspects of operads in Cat,''
arXiv:1312.5910 [math.CT].


\bibitem[CG17]{CostelloGwilliam}
K.~Costello and O.~Gwilliam,
{\it Factorization algebras in quantum field theory},
New Mathematical Monographs {\bf 31}, 
Cambridge University Press, Cambridge (2017).


\bibitem[DVVV89]{Dijkgraaf}
R.~Dijkgraaf, C.~Vafa, E.~P.~Verlinde and H.~L.~Verlinde,
``The operator algebra of orbifold models,''
Commun.\ Math.\ Phys.\  {\bf 123}, 485 (1989).


\bibitem[DT89]{Doi}
Y.~Doi and M.~Takeuchi,
``Hopf-Galois extensions of algebras, the Miyashita-Ulbrich action, and Azumaya algebras,''
J.~Algebra {\bf 121}, 488--516 (1989).


\bibitem[DSPS13]{douglasetal}
C.~L.~Douglas, C.~Schommer-Pries and N.~Snyder,
``Dualizable tensor categories,''
{\em to appear in Memoirs of the AMS}
[arXiv:1312.7188 [math.QA]].


\bibitem[EM09]{Mandell}
A.D.~Elmendorf and M.~A.~Mandell,
``Permutative categories, multicategories and algebraic $K$-theory,''
Algebr.\ Geom.\ Topol.\ {\bf 9}, no.\ 4, 2391--2441  (2009)
[arXiv:0710.0082 [math.KT]].


\bibitem[FV15]{FewsterVerch}
C.~J.~Fewster and R.~Verch,
``Algebraic quantum field theory in curved spacetimes,''
in: R.~Brunetti, C.~Dappiaggi, K.~Fredenhagen and J.~Yngvason (eds.),
{\it Advances in algebraic quantum field theory}, 125--189,
Springer Verlag, Heidelberg (2015)
[arXiv:1504.00586 [math-ph]].


\bibitem[Fre90]{Fre1}
K.~Fredenhagen, 
``Generalizations of the theory of superselection sectors,''
in: D.~Kastler (ed.),
{\it The algebraic theory of superselection sectors: Introduction and recent results}, 
World Scientific Publishing, 379 (1990).


\bibitem[Fre93]{Fre2}
K.~Fredenhagen, 
``Global observables in local quantum physics,''
in: H.~Araki, K.~R.~Ito, A.~Kishimoto and I.~Ojima (eds.),
{\it Quantum and non-commutative analysis: Past, present and future perspectives}, 41--51,
Kluwer Academic Publishers (1993).
  

\bibitem[FRS92]{Fre3}
K.~Fredenhagen, K.-H.~Rehren and B.~Schroer, 
``Superselection sectors with braid group statistics and exchange algebras II: 
Geometric aspects and conformal covariance,'' 
Rev.\ Math.\ Phys.\ {\bf 4}, 113 (1992).


\bibitem[FR12]{FredenhagenRejzner} 
K.~Fredenhagen and K.~Rejzner,
``Batalin-Vilkovisky formalism in the functional approach to classical field theory,''
Commun.\ Math.\ Phys.\  {\bf 314}, 93 (2012)
[arXiv:1101.5112 [math-ph]].
  
  
\bibitem[FR13]{FredenhagenRejzner2} 
K.~Fredenhagen and K.~Rejzner,
``Batalin-Vilkovisky formalism in perturbative algebraic quantum field theory,''
Commun.\ Math.\ Phys.\  {\bf 317}, 697 (2013)
[arXiv:1110.5232 [math-ph]].


\bibitem[GPS95]{GPS95}
R.~Gordon, A.~J.~Power and R.~Street,
``Coherence for tricategories,''
Mem.\ Amer.\ Math.\ Soc.\ {\bf 117}, no.\ 558  (1995).


\bibitem[HK64]{HK}
R.~Haag and D.~Kastler,
``An algebraic approach to quantum field theory,''
J.\ Math.\ Phys.\  {\bf 5}, 848 (1964).


\bibitem[Hor17]{Horel}
G.~Horel, 
``Factorization homology and calculus \`a la Kontsevich-Soibelman,''
J.\ Noncommut.\ Geom.\ {\bf 11}, 703 (2017) 
[arXiv:1307.0322 [math.AT]]. 


\bibitem[JF15]{JohnsonFreyd}
T.~Johnson-Freyd,
``Heisenberg--picture quantum field theory,''
{\em to appear in Progress in Mathematics}
[arXiv:1508.05908 [math-ph]].


\bibitem[JF19]{JohnsonFreydOrbifold}
T.~Johnson-Freyd,
``The Moonshine Anomaly,''
Commun. \ Math.\ Phys. \ \textbf{365},  no.\ 3,  943--970 (2019)
[arXiv:1707.08388 [math.QA]].


\bibitem[Kaw15]{Kawahigashi}
Y.~Kawahigashi,
``Conformal field theory, tensor categories and operator algebras,''
J.\ Phys.\ A {\bf 48}, no.\ 30, 303001 (2015)
[arXiv:1503.05675 [math-ph]].
  

\bibitem[Kel05]{Kelly}
G.~M.~Kelly,
``Basic concepts of enriched category theory,''
Repr.\ Theory Appl.\ Categ.\ No.\ {\bf 10} (2005).


\bibitem[Lac10]{Lack}
S.~Lack,
``A $2$-categories companion,'' 
in:  J.~C.~Baez and J.~P.~May (eds.), 
{\it Towards higher categories}, 
IMA Vol.\ Math.\ Appl.\ {\bf 152}, 105--191, Springer, New York (2010)
[arXiv:math/0702535 [math.CT]]. 


\bibitem[Lan14]{Lang}
B.~Lang,
{\it Universal constructions in algebraic and locally covariant quantum field theory,} 
PhD thesis, University of York (2014).


\bibitem[Lei04]{Leinster}
T.~Leinster,
{\it Higher operads, higher categories}, 
London Math.\ Soc.\ Lect.\ Note Series {\bf 298}, 
Cambridge University Press, Cambridge (2004).


\bibitem[LN16]{biadjointtriangles}
F.~Lucatelli~Nunes,
``On biadjoint triangles,''
Theory Appl.\ Categ.\ {\bf 31}, no.\ 9, 217--256 (2016)
[arXiv:1606.05009 [math.CT]]. 


\bibitem[Lur04]{LurieTannaka}
J.~Lurie,
``Tannaka duality for geometric stacks,''
arXiv:math/0412266 [math.AG].


\bibitem[Mon09]{Montgomery}
S.~Montgomery,
``Hopf Galois theory: A survey,''
Geometry \& Topology Monographs {\bf 16}, 367--400 (2009).


\bibitem[Mug05]{Mueger}
M.~M\"uger,
``Conformal orbifold theories and braided crossed $G$-categories,''
Commun.\ Math.\ Phys.\  {\bf 260}, 727 (2005)
Erratum: [Commun.\ Math.\ Phys.\  {\bf 260}, 763 (2005)]
[math/0403322 [math.QA]].


\bibitem[PTVV13]{DAG}
T.~Pantev, B.~To{\"e}n, M.~Vaqui{\'e} and G.~Vezzosi,
``Shifted symplectic structures,''
Publ.\ Math.\ Inst.\ Hautes {\'E}tudes Sci.\ {\bf 117}, 271 (2013)
[arXiv:1111.3209 [math.AG]]. 

  
\bibitem[SP09]{SchommerPries}
C.~Schommer-Pries, 
{\it The classification of two-dimensional extended topological field theories}, 
PhD thesis, UC Berkeley (2009) 
[arXiv:1112.1000 [math.AT]].  
  
  
\bibitem[Str80]{Street}
R.~Street, 
``Fibrations in bicategories,''
Cahiers Topologie G{\'e}om.\ Diff{\'e}rentielle {\bf 21}, no.\ 2, 111--160 (1980).  


\bibitem[Str87]{Street2}
R.~Street, 
``Correction to: ``Fibrations in bicategories'' 
[Cahiers Topologie G{\'e}om.\ Diff{\'e}rentielle {\bf 21}, no.\ 2, 111--160 (1980)],'' 
Cahiers G{\'e}om.\ Diff{\'e}rentielle Cat{\'e}g.\ {\bf 28}, no.\ 1, 53--56  (1987).


\bibitem[Toe06]{ToenAffine}  
B.~To{\"e}n,
``Affine stacks (Champs affines),''
Selecta Math.\ (N.S.) {\bf 12}, no.\ 1, 39--135  (2006)
[arXiv:math/0012219 [math.AG]]. 
  
  
\bibitem[Toe14]{ToenICM}
B.~To{\"e}n, 
``Derived algebraic geometry and deformation quantization,'' 
Proceedings of the International Congress of Mathematicians, Seoul (2014)
[arXiv:1403.6995 [math.AG]].
 

\bibitem[Xu00]{Xu}
F.~Xu,
``Algebraic orbifold conformal field theories,''
Proc.\ Natl.\ Acad.\ Sci.\ USA {\bf 97}, no.\ 26, 14069--14073  (2000)
[arXiv:math/0004150 [math.QA]].  

  
\bibitem[Yau16]{Yau}
D.~Yau,
{\it Colored operads},
Graduate Studies in Mathematics {\bf 170},
American Mathematical Society, Providence, RI (2016).


\end{thebibliography}
\end{document}